\documentclass[pdflatex,sn-mathphys-ay]{sn-jnl}% Math and Physical Sciences Numbered Reference Style
\usepackage{amsthm, amsmath, amssymb}
\usepackage{comment}
\usepackage{setspace}
\usepackage{multirow}
\usepackage{enumitem}
\usepackage{xcolor}
\usepackage[utf8]{inputenc}
\usepackage{natbib}
\usepackage{longtable}

\usepackage{url} % not crucial - just used below for the URL 
\usepackage{comment} % for commenting out section
\usepackage{natbib}
\usepackage{graphicx} % Required for inserting images
\usepackage{bbold}
\usepackage{pifont}
\usepackage{amssymb}
\usepackage{mathbbol}
\usepackage{float}
\usepackage{algpseudocode}
\usepackage{algorithm}
\usepackage{float}
\usepackage{xcolor}
\usepackage{tabularray}
\usepackage{bm} % Math packages
\usepackage{setspace}
\usepackage{multirow}
\usepackage{lscape}
\usepackage{rotating}
\usepackage{adjustbox}
\usepackage{booktabs}
\usepackage{ragged2e}
\usepackage{tikz}
\usepackage{makecell}

\usepackage{xparse}
\usepackage{subcaption}
\usepackage{float}
\def\*#1{\mathbf{#1}}
\def\^#1{\mathbf{#1}}
\def\##1{\mathbb{#1}}
\def \mathone{\mathbb{1}}
\DeclareSymbolFontAlphabet{\amsmathbb}{AMSb}%

\newtheorem{definition}{Definition}
\newtheorem{theorem}{Theorem}
\newtheorem{lemma}{Lemma}

\definecolor{blue}{RGB}{0,0,255}
\definecolor{slightgreen}{RGB}{34,139,34}
\definecolor{skyblue}{RGB}{30,180,255}
\definecolor{plum}{rgb}{0.56, 0.27, 0.52}

\def\A{\bm{A}} \def\a{\bm{a}}
   
\def\c{\bm{c}} \def\C{\bm{C}} 
 
\def\E{\mathbb{E}}

\def\K{\bm{K}}

\def\n{\bm{n}} \def\N{\mathcal{N}} 
\def\O{\mathcal{O}}

\def\bmS{\bm{S}}

 \def\u{\bm{u}}
\def\W{\bm{W}}  
\def\X{\bm{X}}  
\def\Y{\bm{Y}}   
\def\Z{\bm{Z}}  

\def\Var{\mathbb{Var}}

\def\bigO{\mathcal{O}}
\newcommand\lowero{
  \mathchoice
    {{\scriptstyle\mathcal{O}}}% \displaystyle
    {{\scriptstyle\mathcal{O}}}% \textstyle
    {{\scriptscriptstyle\mathcal{O}}}% \scriptstyle
    {\scalebox{.7}{$\scriptscriptstyle\mathcal{O}$}}%\scriptscriptstyle
  }

\NewDocumentCommand{\evalat}{sO{\bigg}mm}{%
  \IfBooleanTF{#1}
   {\mleft. #3 \mright |_{#4}}
   {#3#2|_{#4}}%
}

\def\1{\bm{1}}  

\def\Var{{\rm{Var}}}

\def\bbeta{\bm{\beta}}

\def\btheta{\bm{\theta}}

\def\bTheta{\boldsymbol{\Theta}}

\def\dconverge{\overset{d}{\rightarrow}}

\newcommand{\argmax}{\mathop{\mathrm{argmax}}} 
\newcommand{\argmin}{\mathop{\mathrm{argmin}}} 

\usepackage{atbegshi}% http://ctan.org/pkg/atbegshi

\usepackage{xcite}
\usepackage{xr}

\usepackage{graphicx}%
\usepackage{multirow}%
\usepackage{amsmath,amssymb,amsfonts}%
\usepackage{amsthm}%
\usepackage{mathrsfs}%
\usepackage[title]{appendix}%
\usepackage{xcolor}%
\usepackage{textcomp}%
\usepackage{manyfoot}%
\usepackage{booktabs}%
\usepackage{algorithm}%
\usepackage{algorithmicx}%
\usepackage{algpseudocode}%
\usepackage{listings}%
\theoremstyle{thmstyleone}%

\begin{document}

\title[
Voronoi-Elitism Genetic Algorithm: A Generic Derivative-Free Routine With Theory and Implementation for Statistical Optimization
% A Voronoi-Elitism Genetic Algorithm for Derivative-Free Statistical Optimization
]{
Voronoi-Elitism Genetic Algorithm: A Generic Derivative-Free Routine With Theory and Implementation for Statistical Optimization
% A Voronoi-Elitism Genetic Algorithm for Derivative-Free Statistical Optimization
}

%%=============================================================%%
%% GivenName	-> \fnm{Joergen W.}
%% Particle	-> \spfx{van der} -> surname prefix
%% FamilyName	-> \sur{Ploeg}
%% Suffix	-> \sfx{IV}
%% \author*[1,2]{\fnm{Joergen W.} \spfx{van der} \sur{Ploeg} 
%%  \sfx{IV}}\email{iauthor@gmail.com}
%%=============================================================%%

\author[1]{\fnm{Anthony Haitao} \sur{Zou}}\email{loggamma@gmail.com}

\author*[2]{\fnm{Yizhou Jake} \sur{Cai}}\email{yizhouc@email.sc.edu}

\author[2]{\fnm{Ting Fung} \sur{Ma}}\email{tingfung@mailbox.sc.edu}

\affil[1]{\orgname{River Bluff High School}, \orgaddress{\street{320 Corley Mill Road}, \city{Lexington}, \postcode{29072}, \state{South Carolina}, \country{USA}}}

\affil[2]{\orgdiv{Department of Statistics}, \orgname{University of South Carolina}, \orgaddress{\street{1523 Greene Street}, \city{Columbia}, \postcode{29208}, \state{South Carolina}, \country{USA}}}

%\affil[3]{\orgdiv{Department of Mathematics}, \orgname{Texas State University}, \orgaddress{\street{201 Pickard St}, \city{San Marcos}, \postcode{78666}, \state{Texas}, \country{USA}}}

%%==================================%%
%% Sample for unstructured abstract %%
%%==================================%%

\abstract{In this paper, we propose a generic optimization approach for challenging objective functions that finds applications in various statistical problems. We focus on objective functions with two parameter blocks of one amenable to analytic optimization, and another that is irregular or computationally expensive. To address this setting, we propose the Voronoi-Elitism Genetic Algorithm (VEGA), a derivative-free optimization method that embeds geometric information into genetic search. The proposed algorithm retains elite candidates and constructs Voronoi-based neighborhoods around them, whose crossover and self-adaptive mutation balance exploitation of promising solutions with exploration of under-covered regions. We study the high dimensional behavior of genetic search by analyzing distance concentration, and the effects of population size and shrinking mutation, which shows that the algorithm improves spatial coverage and yields sharper distance bounds under limited computational budgets. 
Simulation studies are conducted to compare VEGA with two genetic-type algorithms competitors in finite samples. A real data application on Stack Exchange activity data further illustrates its ability to identify stable structural changes, implying the algorithm is computationally flexible for high-dimensional, derivative-free optimization and applicable for various statistical problems.}

\keywords{Breakpoint estimation; Evolutionary computation; High dimensional optimization; Segmented regression; Stochastic optimization; Voronoi partition}

%%\pacs[JEL Classification]{D8, H51}

%%\pacs[MSC Classification]{35A01, 65L10, 65L12, 65L20, 65L70}

\maketitle

\section{Introduction}
Parameter estimation is often formulated as an optimization problem. This perspective underlies a large class of statistical inference and learning, where model fitting is typically driven by the optimization of a loss function. When the objective function is convex and differentiable, estimators can often be obtained by characterizing the optimizer directly or by applying classical iterative procedures such as Newton-Raphson and gradient descent \citep{Ypma1995NewtonRaphson,Cauchy1847Gradient}. In less regular problems, however, optimization becomes substantially more challenging.

When the objective function is not differentiable, methods that rely directly on derivatives or gradients are no longer applicable in their classical form. If convexity is preserved, one may instead resort to sub-gradient-based methods. Adaptive first-order methods, including AdaGrad, ADADELTA, and Adaptive Moment Estimation (ADAM), are widely used in modern data science applications and can be implemented with sub-gradients \citep{DuchiHazanSinger2011AdaGrad,Zeiler2012AdaDelta,KingmaBa2015Adam}. Empirically, these methods often perform well for common objective and loss functions, illustrating their practical effectiveness in large-scale optimization. From a theoretical perspective, however, objective functions may be nonconvex, non-smooth, or otherwise irregular. In such settings, convergence analysis often requires additional structural assumptions, such as the Kurdyka-Lojasiewicz property \citep{AttouchBolteSvaiter2013KL}, while some procedures may still be used heuristically even when few regularity assumptions are available.

For particularly difficult objective functions, grid search is often used as a last-resort optimization strategy. Suppose, after suitable rescaling, the target function $f(\X)$ is defined on a compact domain $\bmS = [0,1]^p$. Let $a_0 < a_1 < \cdots < a_k$, where $a_0=0$ and $a_k=1$, be a partition of $[0,1]$ such that $\min_i |a_i-a_{i-1}| \rightarrow 0$. Applying this partition along each coordinate yields a $p$-dimensional grid over $\bmS$. The point attaining the largest function value on this grid provides a discrete approximation to $\argmax_{\X \in \bmS} f(\X)$. Although grid search is conceptually straightforward and can be accurate when the grid is sufficiently fine, it is computationally expensive \citep{Bergstra2011}. Its complexity is of order $\bigO(k^p)$, which quickly becomes unaffordable in moderate or high dimensional problems and reflects the curse of dimensionality \citep{Bellman1957DynamicProgramming}. This motivates the development of alternative optimization strategies in problems with exploitable structure.

A further limitation of grid search is that it is naturally designed for continuous parameter spaces, whereas many applications involve discrete, categorical, or mixed-type variables. Such settings arise frequently in hyperparameter optimization, where the search space may contain continuous, discrete, categorical, and conditional components \citep{Bergstra2011}. In practice, procedures such as the Stepwise Categorical Progress algorithm \citep{MILE} may be used in such settings. Discrete optimization is itself a broad and active area of research, but it lies outside the scope of this paper.

From the perspectives of statistics and data science, we are often interested in maximizing an objective function over a parameter space and treating the maximizer as the estimator. The class of problems we study occupies an intermediate regime that the objective is neither fully regular nor highly irregular objective functions. As examples, \citep{MILE} raised a specific scenario as following.

Suppose the target function is $f(\btheta, \Z)$, where $\btheta \in \Theta \subset \mathbb{R}^{k_1}$ and $\Z \in \mathcal{Z}\subset \mathbb{R}^{k_2}$ are parameter vectors for some $k_1, k_2 \in \mathbb{N}^+$. We separate the parameters into $\btheta$ and $\Z$ because they have different mathematical properties. Conditional on $\Z$, $f(\btheta, \Z)$ is mathematically well-behaved and easy to be optimized. In contrast, given $\btheta$, optimization over $\Z$ is complicated and computational challenging. This pattern captures the intermediate regime above. Our goal is to maximize this function and to get the joint maximizers, i.e.,

\begin{equation}
    (\widehat{\btheta}, \widehat{\Z}) = \argmax_{(\btheta, \Z) \in \Theta \times \mathcal{Z}} f(\btheta, \Z).
    \label{full}
\end{equation}

Notice that $f(\btheta,\Z)$ may be analytically complicated because of its dependence on $\Z$. In particular, it could ask for joint differentiability. Hence, many standard optimization methods, including Newton-Raphson and Block Ascending Gradient, would fail. In line with statistical applications, we therefore assume the target function is smooth on $\btheta$, while remaining $\Z$ as parameters that are numerically challenging to optimize.

Formally, we could descrine the intermediate structure as follows. Given any $\Z_0 \in \mathcal{Z}$, it is straightforward to derive $\widehat{\btheta}$, where
\begin{equation}
    \widehat{\btheta} = \argmax_{\btheta \in \Theta} f(\btheta, \Z_0),
    \label{easy}
\end{equation}
but on the reverse, it is still demanding to derive $\widehat{\Z}$ given $\btheta_0 \in \Theta$, where
\begin{equation}
    \widehat{\Z} = \argmax_{\Z \in \mathcal{Z}} f(\btheta_0, \Z).
    \label{difficult}
\end{equation}

An iterative algorithm can therefore be constructed by alternating between the optimization steps in \eqref{easy} and \eqref{difficult}, yielding a joint local estimator for \eqref{full}. Furthermore, under joint concavity of the objective function and suitable regularity conditions on the parameter space, a local maximizer is also a global maximizer. Although the optimization in \eqref{difficult} may remain computationally expensive, the alternating strategy is closely related to the principle of block coordinate ascent. Equivalently, it can be viewed as block coordinate descent applied to $-f$ \citep{Tseng2001BCD}.

\section{Background and Approaches}

\subsection{Optimization in Statistical Methods}

Optimization has been central to statistical methodology, not only as a
computational step but also as the fundamental definition of various estimators. Many estimators such as Maximum likelihood estimation (MLE), penalization estimator \citep{tibshirani1996regression}, variational inference \citep{VBReview}, and $M$-estimation \citep{huber1964robust} can all be written as optimization problems. In these settings, the statistical properties of an estimator are tied to the geometry of the objective function, whose identification is expressed through the population optimizer, while asymptotic behaviors depend on curvature and local behavior near the optimum
\citep{hansen1982large}.

The computational difficulty, however, varies substantially across statistical
problems. Smooth and convex objectives can often be handled by classical
gradient, Newton-type or coordinate-wise methods, whereas non-smooth, discrete and non-convex objectives require different treatment. Important examples include regression quantiles \citep{koenker1978regression}, maximum score estimation \citep{manski1975maximumScore}, and structural-change or segmented-regression models with unknown breakpoints \citep{bai1998estimating,Tseng2001BCD}.

The setting considered in this paper lies in this intermediate regime. Conditional
on the irregular block $\Z$, the regular block $\theta$ can often be optimized by
analytic or standard statistical procedures. In contrast, the optimization over $\Z$ is derivative-free and computationally expensive. This separation motivates the GA framework developed as following. The genetic algorithm searches the irregular $\mathcal{Z}$, while traditional estimation is retained for $\theta$. Specifically, the reliability of the framework depends heavily on the efficient of the search on $\mathcal{Z}$ and we propose a mechanism to improve the high dimensional performance under a limited computational budget. See \cite{kolda2003optimization,rios2013derivative} for examples.

\subsection{Genetic Algorithm}
The main computational difficulty in our framework is driven by the $\Z$ block, for which standard derivative-based approaches are generally unsuitable. Accordingly, we focus on the derivative-free methods employment.

Genetic algorithm (GA) is a class of derivative-free numerical methods designed for optimizing irregular or pathological objective functions. GA are efficient random search procedures that have been studied and modified over several decades. Figure \ref{fig:GASTEPS} illustrate the structure of a standard GA. In our setting, the GA is used to optimize over $\Z$, whereas $\btheta$ is obtained analytically under the assumed structure of the objective function.

GA mimics the evolution mechanism in nature, particularly the principle of survival of the fittest. The algorithm maintains a population of candidate solutions for $\widehat{\Z}$, which could be viewed as a gene pool of a species. Through repeated selection and reproduction, the population evolves toward regions of the parameter space associated with higher objective values. At the end of the procedure, the best performing individual is taken as the solution. The main components of the algorithm are encoding, fitness evaluation, crossover, and mutation.

\begin{itemize}
    \item \textbf{Encoding} maps each candidate solution to a chromosome representation, typically a vector or another equivalent structured object. Depending on the application, the chromosome may consist of binary, integer-valued, or real-valued components, which represent the underlying parameter values after decoding. Encoding could be viewed as a mapping $f:\mathcal{Z}\to\mathcal{A}$, where $\mathcal{A}$ denotes the chromosome space.

    \item \textbf{Fitness Evaluation} quantifies the quality of each candidate chromosome, after which the population is ranked according to fitness. This ranking determines the extent to which a candidate contributes genetic material to the next generation. Such contribution could be defined through different mechanisms, including survival probabilities or expected offspring counts. In unconstrained problems, the fitness function is typically the objective function itself, while in constrained problems, a penalized version is often used instead. Common examples include cross-entropy loss and negative log-likelihood function.

    \item \textbf{Crossover} produces new chromosomes from parents. Chromosomes are paired according to a specified selection rule, typically preferring individuals with higher fitness. Each pair serves as parents, and crossover combines chromosomes from the parental chromosomes to produce offspring for the next generation. These offspring form new candidate solutions for $\widehat{\Z}$.

    \item \textbf{Mutation} is the final step in each generation and used to introduce additional diversity into the population. The mutation mechanism depends on the representation of the chromosome. For example, real-valued components may be perturbed continuously, whereas binary components may be flipped with a specified probability.
\end{itemize}

For a comprehensive treatment of GA, see \cite{Eiben2015}.

\begin{figure}[htbp]
    \centering
       \begin{subfigure}[b]{0.48\linewidth}
        \centering
        \includegraphics[width=0.65\linewidth]{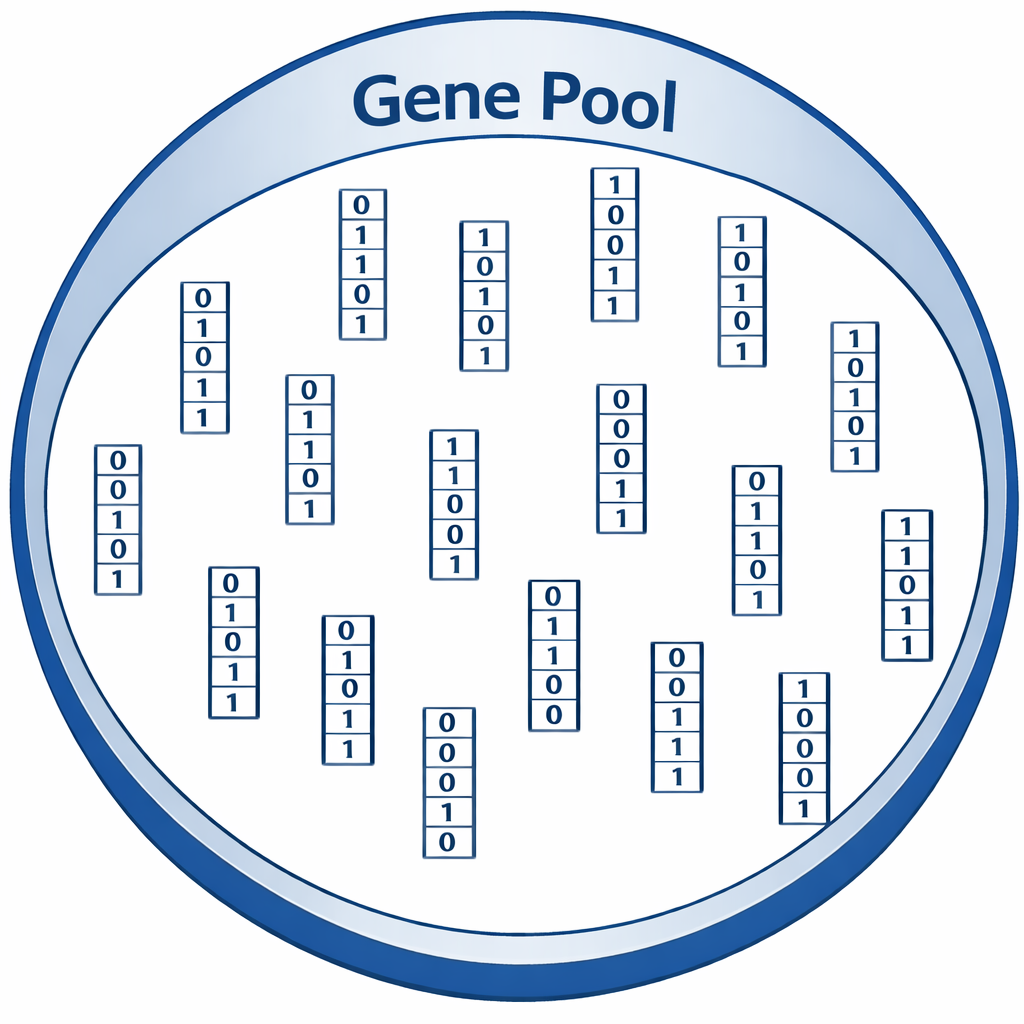}
        \caption{Encoding}
        \label{fig:sub1}
    \end{subfigure}
    \hfill
    \begin{subfigure}[b]{0.48\linewidth}
        \centering
        \includegraphics[width=1.10\linewidth]{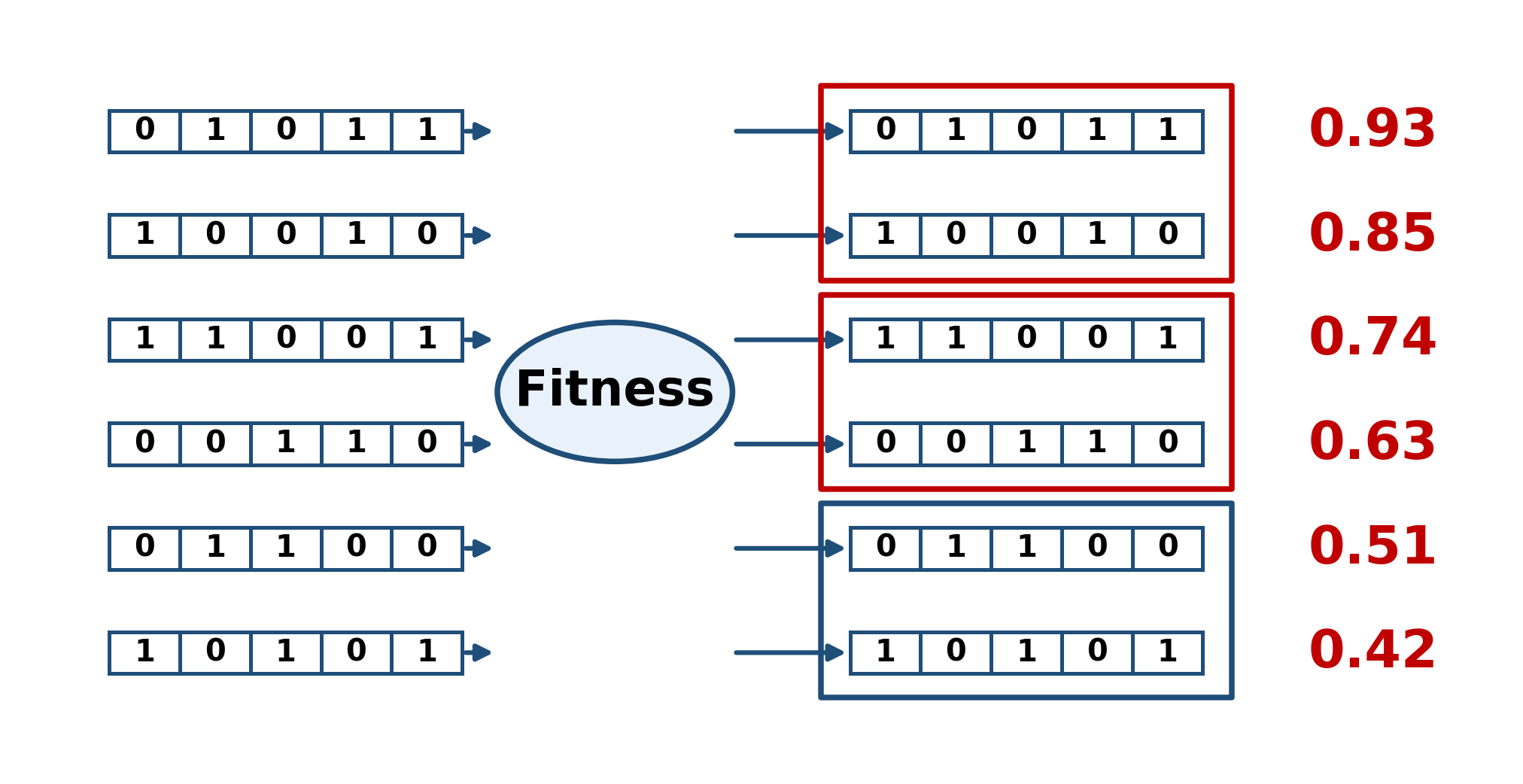}
        \caption{Fitness}
        \label{fig:sub2}
    \end{subfigure}

    \begin{subfigure}[b]{0.48\linewidth}
        \centering
        \includegraphics[width=1\linewidth]{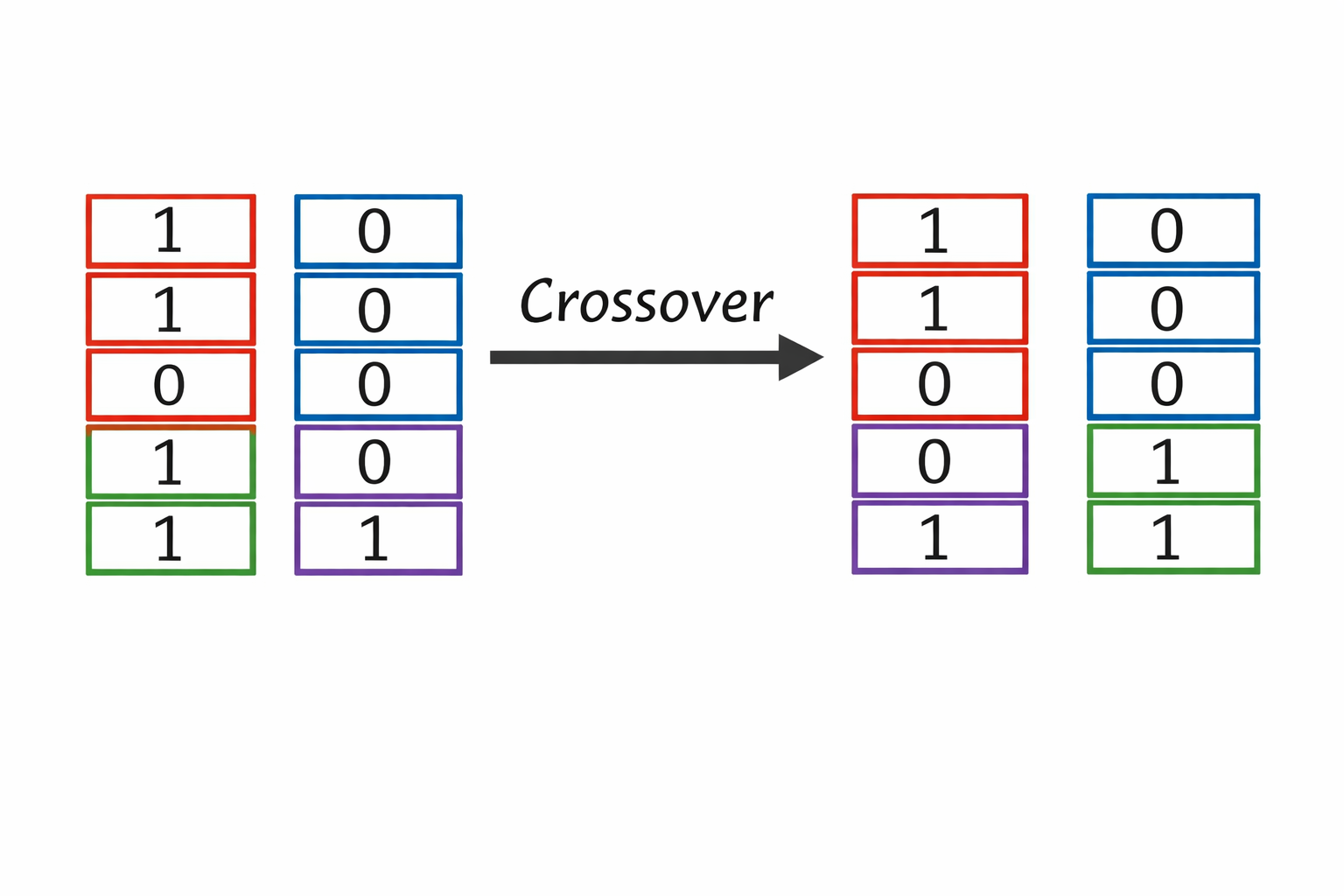}
        \caption{Crossover}
        \label{fig:sub3}
    \end{subfigure}
    \hfill
    \begin{subfigure}[b]{0.48\linewidth}
        \centering
        \includegraphics[width=1\linewidth]{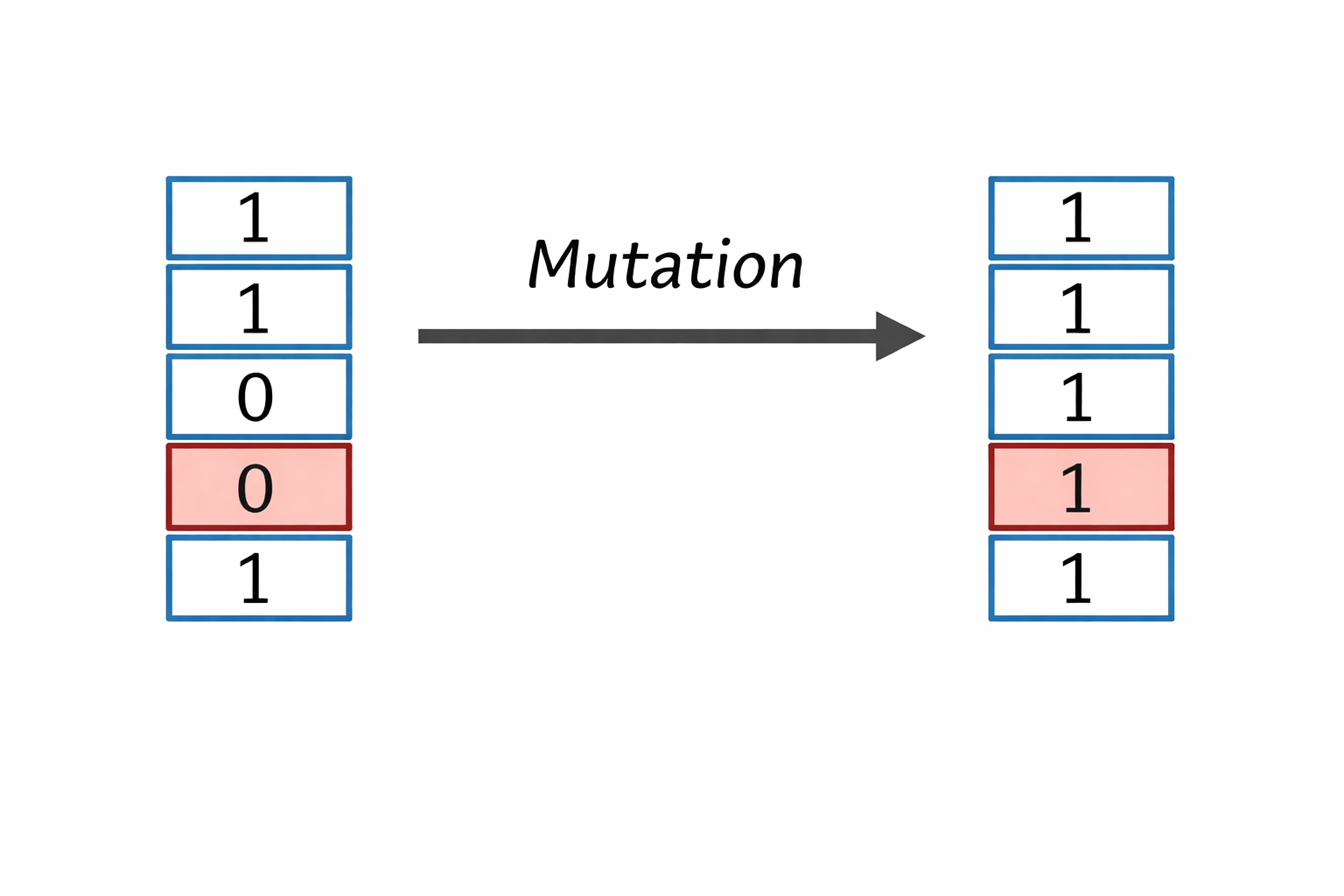}
        \caption{Mutation}
        \label{fig:sub4}
    \end{subfigure}

    \caption{Key components of the genetic algorithm: chromosome encoding, fitness evaluation, crossover, and mutation. The schematic summarizes how candidate solutions are represented, scored, recombined, and perturbed across generations.}
    \label{fig:GASTEPS}
\end{figure}

In practice, GA is often used as a global search technique, since the initial population is sampled from the entire parameter space. Its empirical performance generally improves as the population size increases and more generations are reproduced, although it comes with greater computational cost and complexity. Given a limited computational budget, small modifications to the selection, crossover, or mutation steps materially affect performance. With a sufficiently diverse initial population, GA often identifies regions close to a global maximizer within relatively few generations. However, the quality of later stage refinement depends strongly on the mutation mechanism .

To improve eventual convergence, a shrinkage is proposed into the mutation. Under this strategy, the region over which mutation is allowed shrinks as the number of generations increases. In practice, the shrinkage rate is often taken to be $\bigO(m^{-1/2})$, where $m$ denotes the generation index. This scaling is broadly consistent with convergence rate from classical optimization, as well with rates that arise in some statistical problems under independence or weak dependence.

Although mutation shrinkage can significantly improve GA performance in many settings, its limitations are more pronounced in high dimensional scenarios. Due to the curse of dimensionality, randomly initialized candidates tend to concentrate near the boundary of the parameter space, and a very large population are required to sufficiently explore its interior. Therefore, when the dimension is high and the population size is small or moderate, mutation shrinkage at rate $\bigO(m^{-1/2})$ may become too aggressive to leave substantial regions of the space unexplored.

\subsection{Challenges on Complexity}
We specialize the case in which $\partial f(\btheta,\Z)/\partial \btheta$ exists but $\partial f(\btheta,\Z)/\partial \Z$ does not. In our specific framework, GA treats $\Z$ as the chromosome and evaluates each candidate using the target function itself. Given a candidate value of $\widehat{\Z} = \Z$, we solve \eqref{easy} to obtain the corresponding optimizer with respect to $\btheta$. Iterating this process over successive generations produces a numerical approximation $(\widehat{\btheta}_{GA}, \widehat{\Z}_{GA})$ to the joint maximizer.

Suppose the population size in each generation is fixed as $n$, the GA is run for $m$ generations, and the computational complexity of evaluating $f(\btheta,\Z)$ is $\bigO(G)$. Ignoring the costs of crossover and mutation, the total complexity of the algorithm is $\bigO(nmG)$. Since $G$ is pre-determined by the structure of the target function, the dominant computational burden is driven primarily by the population size and the number of generations.

Empirically, the population size $m$ often has a greater effect on optimization accuracy than the number of generations. A natural way to improve precision is therefore to increase the population size. To illustrate the high-dimensional effect, consider the extreme case of a single generation search. Supposing $\Z$ is a $p\times 1$ parameter vector, the euclidean distance between the best sampled candidate $\widehat{\Z}_{GA}$ and the true optimizer $\widehat{\Z}$ is approximately Weibull distributed. We may write $||\widehat{\Z}_{GA} - \widehat{\Z}|| = \O_p(n^{-1/p})$, which implies the curse of high dimensionality. For fixed $n$, the approximation error deteriorates to $1$ rapidly as $p$ increases. Maintaining the same precision, therefore, requires population size to grow exponentially with dimension, which is computationally infeasible in high-dimensional problems.

Considering a small $n$, parts of the parameter space are unlikely to be explored at all, making it easy for the algorithm to miss the global maximizer. This problem is further exacerbated by the mutation step, which is restricted to a randomly centered cube whose side length shrinks at rate $\O(m^{-1/2})$ in generation $m$, since shrinkage progressively intensifies the omission.

\subsection{Improvement by Voronoi Partition}
Recent application-oriented studies have used Voronoi tessellation primarily as a compact representation of spatial or structural design, with GA serving as an outer-loop search routine. A recent review of Voronoi structures in materials science \citep{AlmeidaRios2025Review} shows applications in lightweight automotive, aerospace components, biomedical implants, and mechanically efficient cellular structures, which also describes GA-based optimization of bi-dimensional and triple-dimensional Voronoi structures as an emerging topic. In this work line, the Voronoi construction usually maps a relatively small set of design variables, such as seed locations and seed regularity, to a high-dimensional physical geometry. The fitness of each geometry is then evaluated by finite-element analysis, computational fluid dynamics, or surrogate prediction. For example, GA is used to improve the mechanical response of (quasi-)three-dimensional Voronoi porous structures \citep{Tung2023VoronoiGA,Lin2024ThreeDimVoronoiGA}, while \cite{Asakawa2024StiffenerVoronoi} optimize stiffener layouts in composite panels by placing stiffeners along Voronoi edges. \cite{FontalvoGarcia2025VoronoiRoofs} also apply it to determine the topology of Voronoi flat roofs with a varying number of seeds and structural member sizes, and \cite{Suzuki2025ArchitectedMaterialsInformatics} employ Voronoi cellular heat sinks through a neural-network surrogate model. Similar ideas have also appeared in deployment and coverage design, where Voronoi cells are used to prevent clustering and enforce segment-wise coverage in a GA search over pseudo-elite locations \citep{Xie2026Pseudolite}. These studies show the practical value of the combination, but in most cases Voronoi tessellation acts mainly as a domain-specific geometry generator or feasibility constraint, while the GA operators themselves remain largely conventional.

From a methodological perspective, the literature is more limited and tends to treat Voronoi partitioning as a modular auxiliary device for exploration, sampling, or representation. The Voronoi-established GA constructs models so that the offspring distribution can adapt to the objective landscape \citep{Shimosaka2004VMBGA}, and recent multi-objective learning work decomposes a high-dimensional design or preference space into Voronoi grids and applies GA to the grid-partitioning problem \citep{Chen2025VoronoiPFL}. In surrogate-assisted and adaptive sampling settings, PF-Voronoi sampling classifies regions containing predicted Pareto front points into Voronoi cells and selects new samples according to different criteria before applying Kriging \citep{Wu2024PFVoronoi}. In spatial data analysis, Voronoi-type methods use Voronoi cells and convex hull volumes as density units and then applies GA to identify cell blocks with low density \citep{Ghafour2025VEGA}. Thus, existing methods typically assign a separate role to each component that Voronoi partitions define local neighborhoods and regions, or geometric encodings, whereas GA supplies a generic global search engine, whose combination is more of multi-stage applications. 

The present paper is to integrate the two components more tightly. Instead of using a Voronoi partition only for initialization, representation, or spatial segmentation, we introduce the Voronoi cell as parts of the evolutionary mechanism itself. The graph partition controls how elite regions are recombined in crossover and how mutation neighborhoods are selected and shrunk. Consequently, the graph partition directly influences the offspring distribution, mutation refinements, the optimization trajectory, and ultimately reflects on the GA solution.

Motivated by these observations, we propose to represent the parameter space through a collection of non-overlapping convex cells and to use the induced cell adjacency graph to guide both exploration and genetic operations. Related ideas for generating new candidates appear in \cite{Richter2017} and \cite{Tian2009}. Although Voronoi partitions have considerable potential in global optimization, the literature on combining them with generally purpose optimization routines remains limited. For example, \cite{Shimosaka2004VMBGA} used Voronoi diagrams within a GA-type algorithm primarily at the crossover stage, whereas \cite{Tung2023VoronoiGA} combined Voronoi diagrams with GA to optimize mechanical properties of tension bearing structures. In this paper, we develop a flexible and general optimization framework that integrates Voronoi partitioning with GA. The proposed procedure could be viewed as a more systematic and broadly applicable variant of Basin hopping. Throughout, we assume that the objective function is continuous on $\mathcal{Z}$, since extending the approach to discrete valued $\Z$ would raise additional challenges and would not interact naturally with Voronoi partitioning.

The preceding discussion motivates an evolution operator that both retains high-fitness solutions and spreads exploration across geometrically distinct neighborhoods. Next section formalizes this idea through the proposed Voronoi-Elitism Genetic Algorithm.

\section{Optimization with Genetic Algorithm}

We propose a novel GA that improves crossover efficiency and robustness by incorporating geometric information extracted from high-quality solutions through a Voronoi-based partitioning. The proposed method develops from a GA framework and differs from conventional variants in its evolution operator. Its central innovation is a Voronoi based augmentation of the parent selection pool, which introduces structured local diversity during crossover while preserving emphasis on elite regions of the search space. To isolate the contribution of this design, we compare the proposed algorithm with two benchmark control variants as naive selection, denoted Control 0 (Random Mating) and Control 1 (Elite Weighted), which progressively incorporate standard mechanisms such as elitism and fitness-guided parent selection.

\subsection{Evolution Framework}
Starting from the initialized population, the algorithm proceeds iteratively by applying an evolution operator that updates the population from one generation to the next. In Algorithm \ref{AlgorithmEvolution}, the update is represented by a variant-specific \textsc{Evolve} procedure, which encapsulates parent selection, crossover, mutation, and any additional mechanisms. The same high-level control flow is used for all algorithmic variants. only the implementation of \textsc{Evolve} differs across methods.

Termination is determined by the relative improvement in the best fitness value between successive generations. Specifically, the algorithm stops when the best observed fitness fails to improve by more than a prespecified tolerance. In island-based implementations, the same criterion is applied to the best fitness across islands.

In the following subsections, we describe the evolution operators for Control 0, Control 1, and the proposed \textbf{Voronoi-Elitism GA (VEGA)}, emphasizing the incremental design changes that distinguish the three variants.

\textbf{Control 0} implements a minimal GA that serves as a baseline for evaluating more structured variants. It employs random parent selection, single point crossover, and additive Gaussian mutation at a given mutation rate, with neither elitism nor fitness-guided sampling. Thus, beyond the common fitness evaluation shared by all variants, Control 0 contains no additional structural mechanisms. See Algorithm \ref{Code:Control0} for pseudo code.

\begin{figure}[htbp]
    \centering
    \includegraphics[width=1\linewidth]{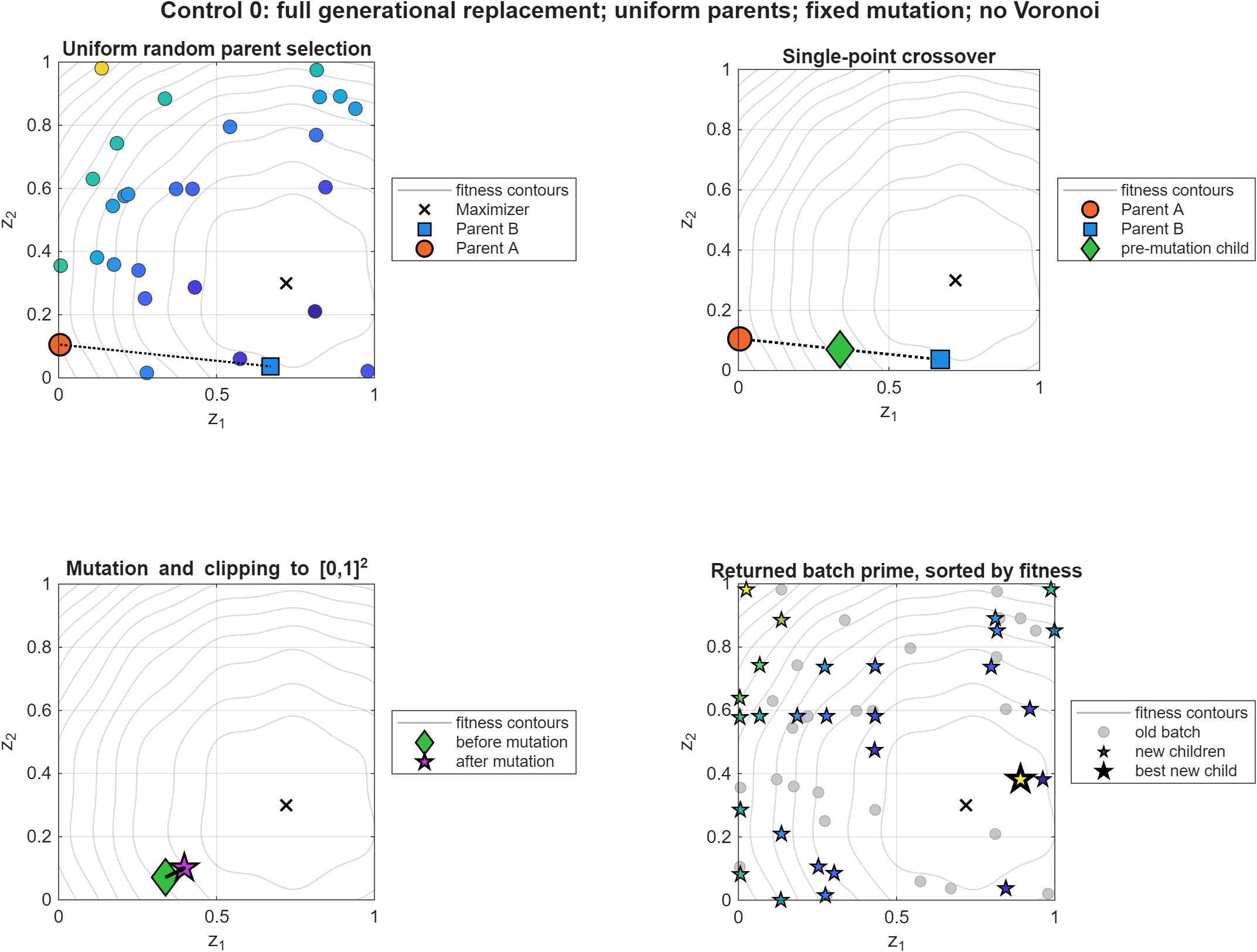}
    \caption{Control 0 evolution operator: parents are sampled uniformly from the current batch, recombined by single-point crossover, mutated with fixed-probability Gaussian noise, and returned as a fully replaced offspring batch sorted by fitness, with no Voronoi-based selection.}
    \label{fig:control0}
\end{figure}

\textbf{Control 1} extends the baseline algorithm in Control 0 by incorporating another two GA mechanisms: elitist retention and fitness-guided parent selection. In each generation, a fixed fraction $r \in (0,1)$ of the highest fitness individuals is copied directly into the next generation as elites. The remaining offspring are then generated from parents sampled randomly according to fitness, thereby preferring high-quality solutions while preserving randomness in the search process. See Algorithm \ref{Code:Control1} for pseudo code.

For Control 0, offspring are produced through single point crossover. Mutation is applied independently to each gene at a fixed mutation rate. Unlike in Control 0, however, the mutation magnitude is scaled by the difference between the two parent solutions and is bounded below by a minimum mutation threshold. The construction preserves a non-trivial exploratory step even when the selected parents are close to one another. After mutation, each offspring is projected back to $[0,1]$ by coordinate-wise clipping.

\begin{figure}[htbp]
    \centering
    \includegraphics[width=1\linewidth]{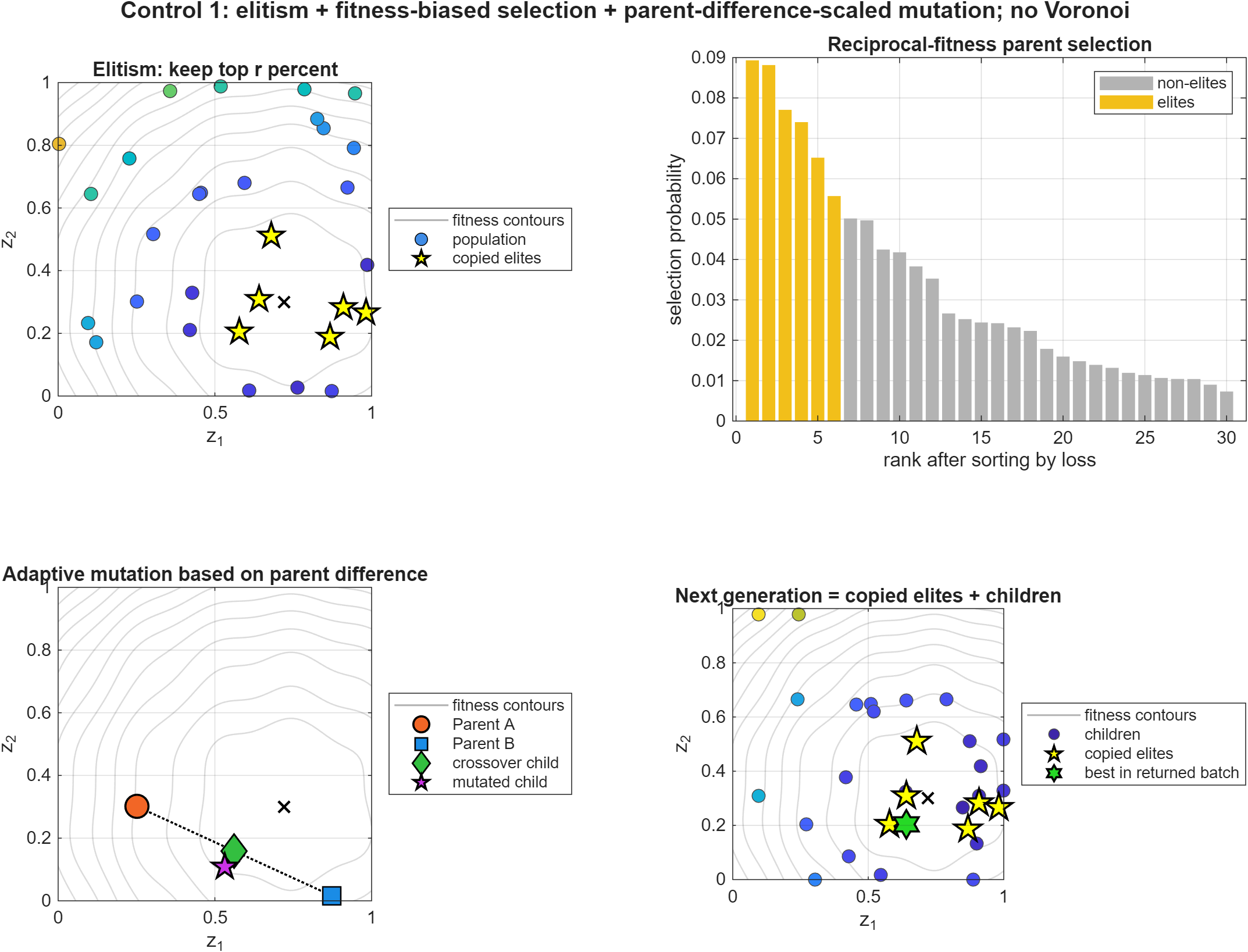}
    \caption{Control 1 evolution operator: top r(\%) elites are copied, while remaining children are generated from reciprocal-fitness-biased parents using single-point crossover and parent-difference-scaled mutation. The returned batch combines elites and children, sorted by fitness, without Voronoi selection.}
    \label{fig:control1}
\end{figure}

\subsection{Voronoi-Elitism Genetic Algorithm}

%{\color{cyan} Pick a good name, ``Voronoi-Augmented Evolution Operator"? or Voronoi-Augmented Genetic Algorithm? Try to distinguish from \cite{TUNG20233813} and \cite{SouzaAlmeida2025}, they focus on biological problem and we are somehow adaptive.}
%{\color{red} \fbox{Any comments about the name?}}
%{\color{magenta} We could name it ``Elite-Partition" or ``Voronoi-Augmented"}
%{\color{cyan}Maybe VEGA? Voronoi Elite GA? {\color{red} Does it sound cool (enough)?}}
%{\color{magenta} VEGA is fine I think}

The proposed method extends Control 1 by modifying the parent-selection mechanism used during crossover. For Control 1, a given elite fraction $r \in (0,1)$ is retained unchanged at each generation, and the remaining individuals are generated via crossover and mutation. The distinguishing feature of the proposed algorithm is the construction of an augmented parent pool, in which each elite solution is associated with locally generated candidates that reflect the geometric arrangement of elite points in the search space. This augmentation promotes structured local exploration while preserving emphasis on high-quality regions.

The Voronoi-based augmentation is implemented in a coordinate-wise, axis-aligned manner. For each dimension, the elite values are sorted, and adjacent midpoints are used to define interval boundaries. The Cartesian product of these intervals yields a hyper-rectangular region that approximates the local Voronoi cell associated with a given elite point. Notice that the exact Voronoi cells is also applicable, but it could be computationally expensive in high dimensional scenarios, including boundary identification and in augmentation generation. See Algorithm \ref{VPS} and Figures \ref{fig:VPS01} and \ref{fig:VPS02} for exact cases. To be convenient under limited budget, we focus on the rectangular approximation. A small number of candidate vectors is then sampled uniformly from this region. The first sampled candidate with finite fitness is retained. if no feasible candidate is obtained, the elite point itself is carried forward as a fallback.

The local candidates generated from the Voronoi construction are used to expand the parent pool for crossover. Specifically, reproduction draws from both the retained elites and their associated local samples, with parent selection with fitness-based probabilities. Offspring can inherit information from either the original elite solutions or nearby structured perturbations, which improves exploratory coverage preserving exploitation of high-fitness regions.

Voronoi construction is particularly robust in high dimensional settings, where a finite population often provides poor coverage of the parameter space, as random mutation with shrinkage may leave large regions unexplored. By generating local candidates from non-overlapping neighborhoods associated with distinct elite solutions, VEGA distributes exploratory effort systematically across promising parts and unexplored parts of the domain. In this sense, the method mitigates the tendency of the population to concentrate around a small subset of elites, but preserve potential geometric diversity in unexplored areas during crossover. See Algorithm \ref{Code:VEGA} for pseudo code.

\begin{figure}[htbp]
    \centering
    \includegraphics[width=1\linewidth]{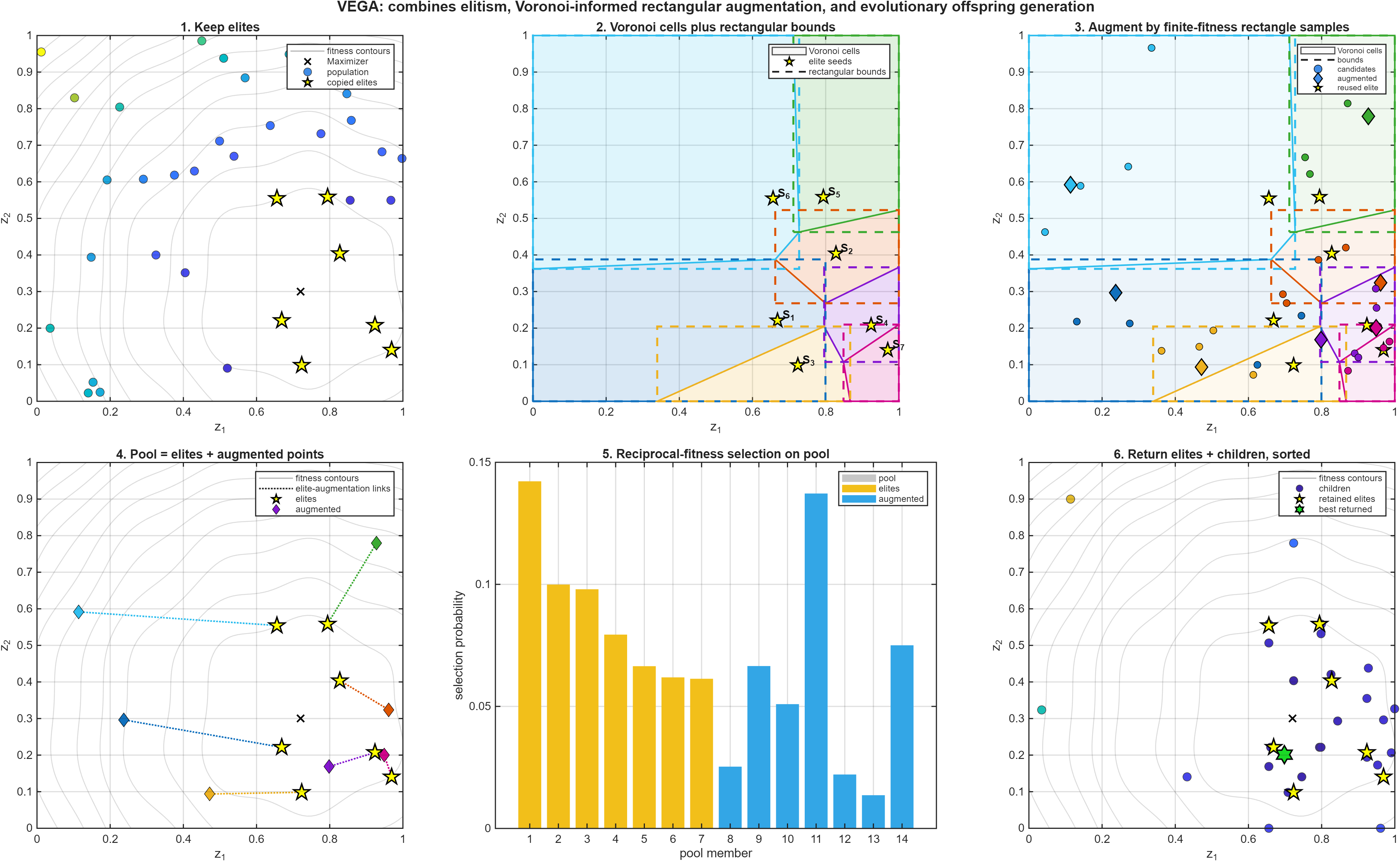}
    \caption{VEGA evolution operator: top elites are retained, each elite’s Voronoi-informed rectangular region is sampled to add finite-fitness augmented points, and elites plus augmented points form a fitness-biased parent pool. Offspring are generated by crossover and adaptive mutation, then returned with elites sorted by fitness.}
    \label{fig:VEGA}
\end{figure}

Next section discusses the theoretical performance of VEGA under limited budget. Section \ref{Section:Simulation} evaluates VEGA through simulation studies and comparative experiments against Control 0 and Control 1.

\section{Theoretical Results}
This section studies the convergence behavior of genetic algorithms under high dimensional settings. We analyze how population size, crossover, mutation, and the Voronoi mechanism affects the attainable distance. These results provide justification for the efficiency of VEGA under limited computational budgets, without requiring strong structural assumptions on the target function, while the asymptotic convergence of genetic algorithms is guaranteed by random perturbations theory \citep{Freidlin2012} using unlimited computational time \citep{Cerf1998}.

We begin with the initialization stage of a standard GA. Since the initial population is typically sampled uniformly from the search domain, the first step is to understand the geometric behavior of a single uniformly sampled chromosome in high dimension.

Suppose the high dimensional search area is $\bmS = [0, 1]^p$, and the chromosomes are sampled uniformly from $\bmS$. Consider the simplest case of single chromosomes $\X = (X_1, X_2, \cdots, X_p) \sim \operatorname{Unif}(\bmS)$. Notate the true maximizer as $\a = (a_1, a_2, \cdots, a_p) \in \bmS$ and the square distance between $\X$ and $\a$ could be denoted as $R^2=\|X-a\|^2=\sum_{i=1}^p\left(X_i-a_i\right)^2$.

\begin{lemma}
    Let $\X = (X_1, X_2, \cdots, X_p) \sim \operatorname{Unif}(\bmS)$, where $\bmS = [0, 1]^p$ is a high dimensional space, and $\a = (a_1, a_2, \cdots, a_p)  \subset \bmS$ but $\a \notin \partial \bmS$. Then $\X$ is asymptotically normal near a spherical shell centered at $\a$ with radius $r_p > 0$. Further for some $\sigma_0 > 0$,

    $$
    p^{-1}\| \X - \a\| \sim \mathcal{N}\left(r_p, \frac{\sigma_0^2}{p}\right),
    $$
    where $r_p = p^{-1} \sqrt{\sum_{i=1}^p\left[\frac{1}{12}+\left(a_i-\frac{1}{2}\right)^2\right]}$.
    \label{AsympNormalSample}
\end{lemma}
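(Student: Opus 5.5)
The plan is to reduce the statement to a central limit theorem for a sum of independent, non-identically distributed bounded terms, and then pass from $R^2$ to $R$ by a delta-method argument. Write $Y_i=(X_i-a_i)^2$, so that $R^2=\|\X-\a\|^2=\sum_{i=1}^p Y_i$ with $Y_1,\dots,Y_p$ independent and $Y_i\in[0,1]$. A direct computation gives
\[
\mu_i:=\E Y_i=\Var(X_i)+(\E X_i-a_i)^2=\tfrac{1}{12}+\left(a_i-\tfrac12\right)^2 .
\]
Hence $\sum_i\mu_i=p^2r_p^2$, and the centring $r_p$ in the statement is exactly $p^{-1}\sqrt{\E R^2}$. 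The variance $v_i:=\Var(Y_i)$ is an explicit polynomial in $a_i$, computed from the moments of $\operatorname{Unif}[0,1]$. It is continuous in $a_i$ and strictly positive for every $a_i\in[0,1]$, since $Y_i$ has a continuous distribution. By compactness there are constants $0<c\le v_i\le C<\infty$ uniformly in $i$ and $p$. Similarly $\tfrac1{12}\le\mu_i\le\tfrac13$.

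\textbf{Step 1 (CLT for $R^2$).} Verify Lyapunov's condition with third moments. We have $\sum_i\E|Y_i-\mu_i|^3\le p$ and $\left(\sum_i v_i\right)^{3/2}\ge (cp)^{3/2}$, so the ratio is $\bigO(p^{-1/2})\to0$. Therefore
\[
\frac{R^2-\sum_i\mu_i}{\sqrt{\sum_i v_i}}\dconverge\N(0,1).
\]
The same bounds give the law of large numbers $R^2/\sum_i\mu_i\to1$ in probability. This already yields the ``spherical shell'' part of the lemma: $R/(p\,r_p)\to1$, so $\X$ concentrates near the sphere of radius $\sqrt{\sum_i\mu_i}$ about $\a$.

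\textbf{Step 2 (delta method).} Write
\[
R-\sqrt{\textstyle\sum_i\mu_i}=\frac{R^2-\sum_i\mu_i}{R+\sqrt{\sum_i\mu_i}}.
\]
By Step 1 the denominator equals $2\sqrt{\sum_i\mu_i}\,(1+o_p(1))$, and Slutsky's lemma then gives
\[
R-p\,r_p\dconverge\N(0,\sigma_0^2),\qquad \sigma_0^2=\lim_{p}\frac{\bar v_p}{4\bar\mu_p},
\]
where $\bar v_p$ and $\bar\mu_p$ are the averages of $v_i$ and $\mu_i$. If these averages do not converge, one works along subsequences, or simply keeps the $p$-dependent variance $\bar v_p/(4\bar\mu_p)$, which lies in $[c',C']\subset(0,\infty)$. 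Dividing by $p$ gives $p^{-1}\|\X-\a\|=r_p+\bigO_p(p^{-1})$ with asymptotically Gaussian fluctuations. This is the precise content of the displayed approximation. Since the fluctuation is of order $p^{-1}$, it is in particular within the stated $\N(r_p,\sigma_0^2/p)$ scale; I would state the sharp normalization and note that it implies the displayed form.

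\textbf{Main obstacle.} The main obstacle is not the CLT itself but making the statement well-posed. The $a_i$ vary with $p$, so one needs uniform (in $a_i$) upper and lower bounds on $v_i$. The condition $\a\notin\partial\bmS$ is not actually needed for this; compactness of $[0,1]$ suffices. One also needs to interpret ``$\sim$'' as an asymptotic statement with the correct normalization. I would settle the lower bound on $v_i$ first by computing $v_i$ explicitly and minimizing it over $a_i\in[0,1]$.
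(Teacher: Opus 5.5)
Your argument is the paper's own: a CLT for $R^2=\sum_i(X_i-a_i)^2$ from independence of the coordinates, with mean $\sum_i[\frac1{12}+(a_i-\frac12)^2]$, followed by the delta method with $g(x)=\sqrt{x}$ giving $\sigma_0^2=\sigma^2/(4\mu)$. It is more careful than the paper, which only asserts normality: you check Lyapunov's condition for the non-identically distributed triangular array and bound the variances uniformly (explicitly $v_i=\frac1{180}+\frac13(a_i-\frac12)^2\in[\frac1{180},\frac4{45}]$, which settles your ``main obstacle''). The one thing to drop is the claim that your sharp result ``implies'' the display: $R-p\,r_p\dconverge\N(0,\sigma_0^2)$ gives $\sqrt{p}\,(p^{-1}\|\X-\a\|-r_p)\to0$ in probability, which contradicts $\N(r_p,\sigma_0^2/p)$ for a fixed $\sigma_0>0$. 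The paper's proof actually concludes $p^{-1/2}\|\X-\a\|\sim\N(\sqrt{p}\,r_p,\sigma_0^2/p)$ (it writes the centre as $\sqrt{\mu}$ with $\mu=p^{-1}\sum_i\mu_i$), which is equivalent to your statement, so the display should be read with $p^{-1/2}$ in place of $p^{-1}$ rather than weakened.
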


Lemma \ref{AsympNormalSample} shows that in high dimensions, a uniformly sampled chromosome is not spread evenly in terms of distance from $\a$. Instead, its distance concentrates around a thin spherical shell. This concentration provides a baseline for analyzing the initial performance of GA that, before selection or genetic operations are applied, most individuals are expected to lie at a similar distance from the true maximizer.

We, afterward, move from a single chromosome to an initial population of size $n$. Since selection in GA favors individuals with higher fitness, an idealized first generation estimator can be represented by the individual closest to the maximizer. Suppose it initially consists of $n$ individuals $\X_1, \X_2, \cdots, \X_n$, consisting the GA estimator at the first generation $\X_i$, for some $i = \argmin_{j}||\X_j - \a||^2$.

\begin{theorem}
    By notations and definitions in Lemma \ref{AsympNormalSample}, let $\X_1, \X_2, \cdots, \X_n \overset{i.i.d}{\sim} \operatorname{Unif}(\bmS)$. The distance between sample to $\a$, $R_i = \|\X_i - \a\|$, then the minimum distance satisfies
    $$R_{min} = r_p-\sigma_p \sqrt{2\log n}+\bigO\left(\frac{\log \log n}{\sqrt{\log n}}\right) + \bigO_p(\log^{-\frac{1}{2}}n).$$
    \label{MinimumDistance}
\end{theorem}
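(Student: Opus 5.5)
Building on Lemma \ref{AsympNormalSample}, I will treat each distance $R_i=\|\X_i-\a\|$ as approximately Gaussian. By the lemma, $R_i$ concentrates around a shell of radius $r_p$ (read, as in the theorem, on the unnormalized distance scale) with fluctuation $\sigma_p$, where $\sigma_p$ is the induced standard deviation. Concretely, $R_i^2=\sum_{k}(X_{ik}-a_k)^2$ is a sum of $p$ independent bounded terms. The Lindeberg--Feller CLT and the delta method applied to $t\mapsto\sqrt t$ then give $(R_i-r_p)/\sigma_p\dconverge\N(0,1)$. Since $R_1,\dots,R_n$ are i.i.d., the problem reduces to the minimum of $n$ i.i.d.\ approximately normal variables. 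Writing $Z_i=(R_i-r_p)/\sigma_p$, we have $R_{min}=r_p+\sigma_p\min_i Z_i=r_p-\sigma_p\max_i(-Z_i)$, so it suffices to analyze the maximum of $n$ i.i.d.\ standard normals.

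For exact normals I will use classical extreme value theory (Gumbel domain of attraction). With
$$b_n=\sqrt{2\log n}-\frac{\log\log n+\log(4\pi)}{2\sqrt{2\log n}},\qquad a_n=(2\log n)^{-1/2},$$
we have $(\max_i W_i-b_n)/a_n\dconverge\Lambda$ (Gumbel) for i.i.d.\ standard normals $W_i$. Hence
$$\max_i W_i=\sqrt{2\log n}+\bigO\!\left(\frac{\log\log n}{\sqrt{\log n}}\right)+\bigO_p(\log^{-1/2}n).$$
I will verify this step from Mills' ratio $1-\Phi(x)\sim\phi(x)/x$. Solving $n(1-\Phi(b_n))=1$ produces the deterministic correction, and the tightness of the Gumbel limit gives the $\bigO_p(a_n)$ term. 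Multiplying by $\sigma_p$, which is bounded in the regime considered, yields the displayed expansion.

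\textbf{Main obstacle.} The Gumbel expansion probes the lower tail of $Z_i$ at depth $\sqrt{2\log n}$. A plain CLT controls $\Pr(Z_i\le x)$ only up to additive errors of size $p^{-1/2}$, which is far larger than the tail probabilities $\approx 1/n$ involved. I will therefore need a moderate-deviation result (Cramér-type) for the sum $R_i^2$ of bounded i.i.d.-type summands:
$$\Pr(Z_i\le -x)=\Phi(-x)\bigl(1+\bigO\bigl((1+x^3)p^{-1/2}\bigr)\bigr)\quad\text{uniformly for } 0\le x=o(p^{1/6}).$$
This holds because the summands have bounded support and hence finite exponential moments. It requires $\log n=o(p^{1/3})$, which I will impose as an explicit regime condition (the high-dimensional, limited-budget setting). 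Under it, $n\Pr(Z_i\le -b_n-a_ny)\to e^{-y}$ holds exactly as in the Gaussian case. The standard argument $\Pr(\max\le u_n)=(1-\Pr(\cdot>u_n))^n$ then transfers the Gumbel limit, and with it the expansion, to $R_{min}$. The smaller issues are passing from $R_i^2$ to $R_i$ and the dependence of $r_p$ on $\a$, which is why $\a\notin\partial\bmS$; both are handled by the delta method, since $r_p$ stays bounded away from $0$.
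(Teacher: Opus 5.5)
Your proposal is correct. It shares the paper's skeleton: standardize, write $R_{min}=\mu_p+\sigma_p\min_i Z_i$, and import the Gaussian extreme-value expansion. The paper's proof in fact centers at $\mu_p=\sqrt{\sum_i[1/12+(a_i-1/2)^2]}$, so your unnormalized reading of $r_p$ is the intended one.

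Where you differ is in justifying the last step.
\begin{itemize}
\item The paper treats the $R_i$ as exactly $\N(\mu_p,\sigma_p^2)$ on the strength of Lemma \ref{AsympNormalSample}. It then plugs in the moment asymptotics of Lemma \ref{MinimumNormal}, $\E L_n=-\sqrt{2\log n}+\bigO(\log\log n/\sqrt{\log n})$ and $\Var L_n=\bigO(1/\log n)$, and gets the $\bigO_p(\log^{-1/2}n)$ term from Chebyshev.
\item You use the Gumbel limit in distribution, so the stochastic term comes from tightness. You then transfer that limit to the actual $R_i$ via a Cram\'er-type moderate-deviation bound.
\end{itemize}
That transfer is precisely the step the paper omits. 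A CLT-level approximation with error of order $p^{-1/2}$ says nothing about probabilities of order $1/n$. Working with distribution functions also spares you from showing that $\E$ and $\Var$ of $\min_i Z_i$ converge to their Gaussian counterparts.

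The paper's route is shorter and gives $\E R_{min}$ with explicit constants, but only under literal normality. Yours proves the in-probability statement for the true distances. The price is the regime $\log n=o(p^{1/3})$, but some coupling of $n$ and $p$ is unavoidable. For fixed $p$ and $n\to\infty$, $R_{min}\to 0$ while $r_p-\sigma_p\sqrt{2\log n}\to-\infty$, so the theorem as stated cannot hold without such a condition.

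One imprecision in your last sentence concerns the passage from $R_i^2$ to $R_i$. It is not secured by $r_p$ being bounded away from $0$; in fact $r_p\asymp\sqrt p$.
\begin{itemize}
\item Let $W_i$ denote standardized $R_i^2$. On the relevant range, $Z_i=W_i+\bigO(W_i^2p^{-1/2})$, so $\{Z_i\le -x\}=\{W_i\le -x+\bigO(x^2p^{-1/2})\}$.
\item This shift multiplies $\Phi(-x)$ by $1+\bigO(x^3p^{-1/2})$, the same order as the Cram\'er error.
\item It is again $\log n=o(p^{1/3})$ that makes this negligible.
\end{itemize}
Likewise, $\a\notin\partial\bmS$ plays no role. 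We have $\mu\ge 1/12$, and $\Var\bigl((X-a)^2\bigr)$ for $X\sim\operatorname{Unif}[0,1]$ is bounded below uniformly in $a\in[0,1]$. Together with boundedness of the summands, that is all the moderate-deviation theorem for independent, non-identically distributed summands requires.
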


In Theorem \ref{MinimumDistance}, the negative term $-\sigma_p \sqrt{2\log n}$ captures the benefit of increasing the population size. As $n$ grows, the best individual among the initial samples is expected to be closer to $\a$, with the improvement occurring at the order of $\sqrt{\log n}$ under Gaussian approximation.

The previous result is derived for the unit hypercube. To extend the analysis beyond $\bmS = [0, 1]^p$, we consider a more general high dimensional domain and describe the distance behavior in terms of moment bounds.
Assume that the components of individual $\X$ satisfy $\E X_i^2 = \bigO(S^2)$, $\E X_i^4 = \bigO(B)$, with $\bigO(B)=\bigO(S^4)$. Hence, $R^2$ has expectation in \eqref{DistanceExpectationOrder} and variance in \eqref{DistanceVarianceOrder}.

\begin{equation}
    \E R^2 = \bigO(p S^2)+\sum_{i=1}^p\left(a_i-E X_i\right)^2 = \bigO\left(p S^2\right)
    \label{DistanceExpectationOrder}
\end{equation}

\begin{equation}
    \Var(R^2)=\bigO\left(p B+p S^4\right)
    \label{DistanceVarianceOrder}
\end{equation}

\begin{theorem}
    Let $\X_1, \X_2, \cdots, \X_n \overset{i.i.d}{\sim} \X$ for some uniform population $\X = (X_{1}, X_{2}, \cdots, X_{p})$ on a given non-degenerate high dimensional domain $\K$, with $p^{-1}\|\E \X\|$, $p^{-1}\|\E \X^2\| = \bigO(S^2)$, and $p^{-1}\|\E \X^4\| = \bigO(B)$ not relying on $p$. The distance between sample to some inner point $\a \subset \K$ but $\a \notin \partial \K$ is $R_i = \|\X_i - \a\|$, then the minimum distance satisfies
    $$\E R_{min}=\|\E \X - \a\|-\bigO(S^2 \log^{\frac{1}{2}} n)+\bigO_p(S \log^{-\frac{1}{2}} n).$$
    \label{MinimumDistanceOrderEst}
\end{theorem}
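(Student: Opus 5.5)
The plan is to reduce Theorem \ref{MinimumDistanceOrderEst} to the same Gaussian extreme-value argument behind Theorem \ref{MinimumDistance}, with the explicit hypercube moments replaced by the moment orders in \eqref{DistanceExpectationOrder} and \eqref{DistanceVarianceOrder}.

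\textbf{Step 1: CLT for the squared distance.} Since $\K$ is non-degenerate and $\X$ is uniform on it, I would first use (or impose, if necessary) the independence or weak dependence of the coordinates that holds for product domains. Then $R_i^2=\sum_{j=1}^p (X_{ij}-a_j)^2$ is a sum of $p$ terms with uniformly bounded fourth moments of order $\bigO(B)=\bigO(S^4)$, so the Lindeberg--Feller CLT applies, as in Lemma \ref{AsympNormalSample}. Writing $\mu_p=\E R^2=\|\E\X-\a\|^2+\sum_j \Var(X_j)$ and $\tau_p^2=\Var(R^2)=\bigO(pB+pS^4)=\bigO(pS^4)$, we get $(R^2-\mu_p)/\tau_p\dconverge\N(0,1)$.

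\textbf{Step 2: pass from $R^2$ to $R$.} Applying the delta method at $\sqrt{\mu_p}$ gives $R\approx \sqrt{\mu_p}+Z\,\tau_p/(2\sqrt{\mu_p})$ with $Z$ standard normal, so $R$ is approximately normal with centre $\sqrt{\mu_p}$ and scale $\sigma_p=\tau_p/(2\sqrt{\mu_p})$. With the paper's per-coordinate normalisation, the centre is identified with $\|\E\X-\a\|$, absorbing the variance contribution as in the definition of $r_p$ in Lemma \ref{AsympNormalSample}. The scale is then bounded in terms of $S$; the theorem's bookkeeping writes it as $\bigO(S^2)$ in the leading term and $\bigO(S)$ in the remainder, and I would state the normalisation explicitly so that the two orders are consistent.

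\textbf{Step 3: extreme-value step.} For i.i.d.\ approximately Gaussian $R_i$, the minimum obeys Gumbel asymptotics: $R_{min}=m_p-\sigma_p b_n-\sigma_p G/\sqrt{2\log n}$, where $m_p$ is the centre from Step 2, $b_n=\sqrt{2\log n}-(\log\log n+\log 4\pi)/(2\sqrt{2\log n})$, and $G$ is Gumbel. Taking expectations, using $\E G=\gamma$ and uniform integrability from $R_i\ge0$ and the bounded moments, gives the $-\bigO(S^2\log^{1/2}n)$ term. The $\log\log n$ correction and the Gumbel fluctuation are each of order $\sigma_p\log^{-1/2}n$, which gives the $\bigO_p(S\log^{-1/2}n)$ remainder. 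This step reuses the argument of Theorem \ref{MinimumDistance}, with $r_p$ replaced by $\|\E\X-\a\|$ and $\sigma_p$ by its moment bound.

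\textbf{Main obstacle.} The hardest point is justifying that the Gaussian approximation is accurate in the lower tail, where the minimum lives at about $\sqrt{2\log n}$ standard deviations. The CLT alone only controls bulk probabilities. I would use a Cramér-type moderate-deviation bound, $P(Z_p<-x)/\Phi(-x)\to1$ uniformly for $x=o(p^{1/6})$. This requires finite exponential moments of $(X_j-a_j)^2$, which hold on a bounded domain $\K$, and it requires $\log n=o(p^{1/3})$. I would state this growth condition as the regime in which the theorem holds. Under it, the extreme-value limit for the minimum of the exact $R_i$ coincides with the Gaussian one up to the stated remainder.
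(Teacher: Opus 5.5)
Your route is the paper's route: a CLT for $R^2$, the delta method, and the Gaussian-minimum asymptotics of Lemma~\ref{MinimumNormal} as packaged in Theorem~\ref{MinimumDistance}. Your moderate-deviation step is a real improvement. The paper applies the bulk CLT about $\sqrt{2\log n}$ standard deviations into the tail without comment. Under $\log n=o(p^{1/3})$ the delta-method curvature term, $x^2\tau_p^2/\mu_p^{3/2}\lesssim S\log n/\sqrt p$ at $x\asymp\sqrt{\log n}$, is also $o(\sigma_p/\sqrt{\log n})$.

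The step that fails is Step~2, where you identify the centre $\sqrt{\mu_p}$ with $\|\E\X-\a\|$. The discrepancy is
\[
\sqrt{\mu_p}-\|\E\X-\a\|=\frac{\sum_{j=1}^p\Var(X_j)}{\sqrt{\mu_p}+\|\E\X-\a\|},
\]
which is of order $\sqrt p$ for a non-degenerate domain with $S$ fixed. For the cube with $\a=\frac12\bm{1}$, $\|\E\X-\a\|=0$ but $\sqrt{\mu_p}=\sqrt{p/12}$. Appealing to the definition of $r_p$ does not help, because $r_p$ in Lemma~\ref{AsympNormalSample} explicitly keeps the $\frac1{12}$ variance terms. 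Under your own condition $\log n=o(p^{1/3})$ we have $\sqrt p\gg\sqrt{\log n}$. So this error dominates every term of the claimed expansion, and no refinement of Step~3 can recover the statement as displayed.

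The paper's proof makes the same move. It carries the error as $+\bigO(p^{1/2}S)$ and loses it only by merging it into $-\bigO(S^2p^{1/2}\log^{1/2}n)$. That $p^{1/2}$ comes from using the standard deviation of $R^2$ rather than of $R$, and the theorem statement then drops it. Your Step~2 correctly gives a dimension-free scale $\sigma_p=\tau_p/(2\sqrt{\mu_p})=\bigO(S)$, from $\tau_p^2=\bigO(pS^4)$ and $\mu_p\gtrsim pS^2$ under per-coordinate non-degeneracy. So nothing of order $\sqrt p$ is left to absorb the discrepancy. What your argument actually establishes, for product domains in that regime, is
\[
\E R_{\min}=\sqrt{\|\E\X-\a\|^2+\textstyle\sum_{j=1}^p\Var(X_j)}-\sigma_p\sqrt{2\log n}+\bigO\!\left(\sigma_p\frac{\log\log n}{\sqrt{\log n}}\right).
\]
To match the theorem you must either keep $\sqrt{\mu_p}$ as the centre, or weaken the correction to carry a $p^{1/2}$ factor, as the paper's derivation implicitly does.

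There are three smaller points.
\begin{itemize}
\item The $S^2$ versus $S$ mismatch you flag has the same origin: the paper uses the scale of $R^2$ in the leading term and that of $R$ in the remainder. It is not mere bookkeeping. Rescaling $\K$ and $\a$ by $c$ multiplies both $R$ and $S$ by $c$, so any correct shift is linear in $S$, exactly as your $\sigma_p$ shows.
\item The $\log\log n$ correction is of order $\sigma_p\log\log n/\sqrt{\log n}$, not $\sigma_p\log^{-1/2}n$. Also, once you take expectations there is no random remainder at all. The Gumbel fluctuation belongs to $R_{\min}$, as in Theorem~\ref{MinimumDistance}, so the $\bigO_p$ term only makes sense as a statement about $R_{\min}$ itself.
\item Uniformity on a general $\K$ does not make the coordinates independent, so the product structure you impose in Step~1 is an extra hypothesis. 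The paper tacitly uses it too when it writes $\Var(R^2)$ as a coordinate-wise sum, but it should be stated.
\end{itemize}
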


Having established the effect of population size at initialization, we further examine how crossover changes the distance distribution. We consider the simple arithmetic crossover operator $(X_1^{(2)}, X_2^{(2)}, \cdots, X_p^{(2)}) = \X^{(2)}=(\X_1+\X_2)/2$, where $\X_1$ and $\X_1$ are two independently sampled parent chromosomes. Therefore, the distance is defined as $R^2 = ||\X^{(2)} - \a||^2$. Similarly, its expectation and variance are 

\begin{equation}
        \E R^2 =\frac{p}{24}+\sum_{i=1}^p\left(a_i-\frac{1}{2}\right)^2=\sum_{i=1}^p\left(a_i-\frac{1}{2}\right)^2+\bigO\left(p S^2\right),
\end{equation}
and

\begin{equation}
        \Var(R^2)  =\sum_{i=1}^p \left\{E\left(X_i^{(2)}-a_i\right)^4-\left[\E\left(X_i^{(2)}-a_i\right)^2\right]^2\right\} =\bigO\left(p B+p S^4\right),
\end{equation}

Similar to Theorem \ref{MinimumDistance}, the minimum distance between GA estimator at the next generation and maximizer satisfies
\begin{equation}
    \begin{aligned}
R_{min} =& \mu_p+\sigma_p Z_{(1)}\\
=&\|\E \X - \a\|+\bigO\left(p^{\frac{1}{2}} S\right)-\bigO\left(p^{\frac{1}{2}} (B+S^4)^{\frac{1}{2}} \log^{\frac{1}{2}} n\right)\\
&+ \bigO_p\left((B+S^4)^{\frac{1}{4}} \log^{-\frac{1}{2}} n\right) \\
=&\sqrt{\sum_{i=1}^p\left(a_i-\frac{1}{2}\right)^2}-\bigO\left(S^2p^{\frac{1}{2}}\log^{\frac{1}{2}} n \right)+ \bigO_p\left(S \log^{-\frac{1}{2}} n\right).
\end{aligned}
\end{equation}

We then incorporate shrinking mutation. At generation $m$, component-wisely independent mutation is modeled as an additive random perturbation $\Z = (Z_1, Z_2,\cdots, Z_p) = \X + \W = (X_1 + W_1, X_2 + W_2, \cdots, X_p + W_p)$, where $\W = (W_1, W_2, \cdots, W_p)$ is independent of $\X$, satisfying $\E W_i = 0$, $\Var\left(W_i\right)=\bigO\left(m^{-1}\right)$, and $\E W_i^4 = \bigO\left(m^{-2}\right)$. This formulation captures the common practice of reducing the mutation magnitude as the number of generations increases.

Then distance is defined as $R^2 =\|\Z-\a\|^2=\|\X+\W-\a\|^2=\sum_{i=1}^p\left(X_i-a_i+W_i\right)^2$, with expectation and variance

\begin{equation}
    \begin{aligned}
        \E R^2 & = \E \sum_{i=1}^p\left(X_i^2+a_i^2+W_i^2-2 a_i X_i-2 a_i W_i+2 X_i W_i\right) \\
& =\sum_{i=1}^p Var\left(X_i\right)+\|\E \X - \a\|^2+\E W_i^2 \\
& =\sum_{i=1}^p\left(a_i-\frac{1}{2}\right)^2+\bigO\left(p S^2+\frac{p}{m}\right) \\
    \end{aligned}
\end{equation}

\begin{equation}
\begin{aligned}
    \Var R^2 &=\sum_{i=1}^p \left\{\E\left(X_i+W_i-a_i\right)^4-\left[\E\left(X_i+W_i-a_i\right)^2\right]^2\right\} \\
    & =\bigO\left(p\left(S^4+\frac{1}{m}\right)\right).
\end{aligned}
\end{equation}

Using arguments in previous proofs, the minimum of distance satisfies

\begin{equation}
    \begin{aligned}
R_{min} =& \mu_p+\sigma_p Z_{(1)}\\
=&\|\E \X - \a\|+\bigO\left((p S^2 + pm^{-1})^{\frac{1}{2}}\right)-\bigO\left(p^{\frac{1}{2}} (S^4 + m^{-1})^{\frac{1}{2}} \log^{\frac{1}{2}} n\right)\\
&+ \bigO_p\left((S^4+m^{-1})^{\frac{1}{4}} \log^{-\frac{1}{2}} n\right) \\
=&\sqrt{\sum_{i=1}^p\left(a_i-\frac{1}{2}\right)^2}-\bigO\left((S^2 + m^{-\frac{1}{2}})p^{\frac{1}{2}}\log^{\frac{1}{2}} n \right)\\
&+ \bigO_p\left((S + m^{-\frac{1}{4}}) \log^{-\frac{1}{2}} n\right).
\end{aligned}
\end{equation}

In shrinkage mutation implement, mutation errors are accumulative. It is easy to have

\begin{equation}
    S^2=\bigO\left(\sum_{i=1}^m \frac{1}{i}\right)=\bigO(\log m).
\end{equation}

Hence
\begin{equation}
    R_{min}=\sqrt{\sum_{i=1}^p\left(a_i-\frac{1}{2}\right)^2}-\bigO(p^{\frac{1}{2}}\log^{\frac{1}{2}} m \log^{\frac{1}{2}} n)+\bigO_p(\log^{\frac{1}{2}} n \log^{\frac{1}{2}} m).
    \label{RminUnderShrinkingMutation}
\end{equation}

Equation \eqref{RminUnderShrinkingMutation} suggests that the accumulated effect of shrinking mutation can enlarge the search dispersion over generations. In high dimensions, this additional dispersion may improve the chance of producing individuals closer to the maximizer, provided that the mutation scale is controlled so that the approximation remains valid under limited budget.

To analyze the Voronoi mechanism used in VEGA, we introduce order notation and a geometric bound for uniform samples over a high-dimensional subset of the unit cube. We use the notation $A\lesssim B$ to show that $A\le cB$ for some universal constant $c>0$. We also denote $A\asymp B$ if $A\lesssim B$ and $B\lesssim A$.

\begin{theorem}[Order bounds]
Let $\K\subseteq[0,1]^p$ with $|\K|=v$, where $0<v\le1$, and let $\X\sim\operatorname{Unif}(\K)$. Then
$$
\sqrt p\,v^{1/p} \lesssim \E\|\X\| \lesssim \sqrt p,
$$
and
$$
\Var\|\X\| \lesssim \bigO(p) - \bigO\!\left(pv^{1/p}\right) - \bigO\!\left(pv^{2/p}\right).
$$
\label{Theorem:OrderBounds}
\end{theorem}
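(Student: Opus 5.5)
The upper bound is elementary: $\|\X\|\le\sqrt p$ almost surely on $[0,1]^p$, so $\E\|\X\|\lesssim\sqrt p$. By Jensen, $\E\|\X\|\le(\E\|\X\|^2)^{1/2}\le\sqrt p$ as well.

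For the lower bound I will use a volume (small-ball) argument. For any $t>0$,
$$
\E\|\X\|\ \ge\ t\,\Pr(\|\X\|\ge t)=t\bigl(1-\Pr(\|\X\|<t)\bigr).
$$
Since $\X$ is uniform on $\K$, we have
$$
\Pr(\|\X\|<t)=\frac{|\K\cap B(0,t)|}{v}\le \frac{|B(0,t)|}{v}=\frac{\omega_p t^p}{v},
$$
where $\omega_p=\pi^{p/2}/\Gamma(p/2+1)$. By Stirling, $\omega_p^{1/p}\asymp (2\pi e/p)^{1/2}$, so $\omega_p^{1/p}\le C/\sqrt p$ for a universal $C$. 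Choose $t=c\sqrt p\,v^{1/p}$ with $c$ small enough that $\omega_p t^p\le (cC)^p v\le v/2$. Then $\Pr(\|\X\|<t)\le 1/2$, and hence
$$
\E\|\X\|\ge \tfrac c2\sqrt p\,v^{1/p}.
$$
This is the main step. The delicate point is making $\omega_p^{1/p}\lesssim p^{-1/2}$ uniform in $p$, including small $p$. I would handle this with the crude bound $\Gamma(p/2+1)\ge (p/(2e))^{p/2}$, which is valid for all $p\ge1$.

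For the variance I use $\Var\|\X\|=\E\|\X\|^2-(\E\|\X\|)^2$. The first term satisfies $\E\|\X\|^2\le p$. For the second, apply the lower bound just proved together with the companion estimate $\E\|\X\|^2\ge t^2/2\gtrsim pv^{2/p}$, obtained from the same small-ball bound. This gives
$$
\Var\|\X\|\le p-\tfrac{c^2}{4}pv^{2/p},
$$
which is of the stated form $\bigO(p)-\bigO(pv^{2/p})$. The middle term $\bigO(pv^{1/p})$ can be inserted by writing $(\E\|\X\|)^2\ge \E\|\X\|\cdot \tfrac c2\sqrt p v^{1/p}$ and splitting the subtraction. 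Since $v^{2/p}\le v^{1/p}\le1$, the resulting expression is consistent up to universal constants.

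I will note that these are crude order bounds: the upper bound alone already gives $\Var\|\X\|\lesssim p$, and the subtracted terms only quantify the improvement coming from the small-ball lower bound. The main obstacle is keeping the constants in the Stirling step uniform in $p$. Everything else follows directly from $\K\subseteq[0,1]^p$ and Jensen's inequality.
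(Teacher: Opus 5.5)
Your bounds on the mean are fine. The Markov-type small-ball estimate $\E\|\X\|\ge t\Pr(\|\X\|\ge t)$, with $\Pr(\|\X\|<t)\le\omega_pt^p/v$ and the uniform bound $\Gamma(p/2+1)\ge(p/(2e))^{p/2}$, is a valid version of the paper's layer-cake argument, and more explicitly uniform in $p$. The bound $\|\X\|\le\sqrt p$ gives the upper estimate.

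The gap is the $-\bigO(pv^{1/p})$ term in the variance. Your ``splitting'' step $(\E\|\X\|)^2\ge\E\|\X\|\cdot\frac c2\sqrt p\,v^{1/p}$ yields $pv^{1/p}$ only if you also know $\E\|\X\|\gtrsim\sqrt p$, which you do not. Inserting what you do know just returns $pv^{2/p}$. No argument through the mean can work: for $\K=[0,s]^p$ with $s=v^{1/p}$, one has $(\E\|\X\|)^2\le\E\|\X\|^2=ps^2/3$, which is $\ll ps=pv^{1/p}$ for small $s$.

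Your fallback, that $v^{2/p}\le v^{1/p}$ makes the expression ``consistent up to constants'', runs the wrong way. Subtracting $pv^{1/p}$ is the stronger claim, so it does not follow from your bound $\Var\|\X\|\le p-\frac{c^2}{4}pv^{2/p}$. It follows only if you inflate the coefficient of $p$, and then every subtracted term is vacuous; Popoviciu's $\Var\|\X\|\le p/4$ would already suffice for that reading. Also, your ``companion estimate'' $\E\|\X\|^2\ge t^2/2$ is a lower bound on the second moment, so it cannot help bound a variance from above.

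The missing idea, which the paper uses, is to improve the second-moment term rather than the mean. Since $1-X_i^2\ge 1-X_i$ on $[0,1]$, you have $\E\|\X\|^2\le p-\E S(\X)$, where $S(\X)=\sum_{i=1}^p(1-X_i)$ is the $\ell_1$ distance to the corner $\mathbf{1}$. Apply your own small-ball device to the corner simplex $\{x\in[0,1]^p:\sum_i(1-x_i)<t\}$, whose volume is at most $t^p/p!$. Taking $t=(p!\,v/2)^{1/p}$ and using $(p!)^{1/p}\ge p/e$ gives $\E S(\X)\ge\frac14(p!\,v)^{1/p}\gtrsim pv^{1/p}$. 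The paper gets the same order, $\frac{p}{p+1}(p!\,v)^{1/p}$, from the layer-cake formula.

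Combining $\E\|\X\|^2\le p-c_1pv^{1/p}$ with your $(\E\|\X\|)^2\ge c_2pv^{2/p}$ gives $\Var\|\X\|\le p-c_1pv^{1/p}-c_2pv^{2/p}$ with leading coefficient $1$. That is the actual content of the stated bound. The same estimate, through $\sqrt p-\|\X\|\ge S(\X)/(2\sqrt p)$, also gives the paper's sharper mean bound $\E\|\X\|\le\sqrt p-c\sqrt p\,v^{1/p}$.
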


Finally, we incorporate the Voronoi partition strategy of VEGA. Unlike standard GA, which relies only on genetic operations within the existing population, VEGA partitions the feasible domain and encourages exploration across Voronoi cells. This mechanism effectively increases spatial coverage, which can improve the order of the best attainable distance in high dimensions.

\begin{theorem}
    $\K \subset [0,1]^p$ is a non-degenerate high-dimensional convex set with volume $|\K| \in [0, 1]$, and samples $\X_1, \X_2, \cdots, \X_n \overset{i.i.d}{\sim} \operatorname{Unif}(K) + \bm{\varepsilon}$, where $\bm{\varepsilon} \subset \mathbb{R}^p$ is an independent random variable with $\|\E \bm{\varepsilon}\| = 0$ and $\|\Var(\bm{\varepsilon})\| = \bigO(pg(m))$. Let $\C_1(\K), \C_2(\K), \ldots, \C_{[rn]}(\K)$ is Voronoi Partition of $\K$, which is applied to VEGA in Algorithm \ref{VPS}. Then the minimum distance of $m$-generation VEGA estimator $\X^{(m)}$ to true value $\a \notin \partial \K$, for some small $\xi$, $R_{min}^{(m)} = \|\X^{(m)} - \a\|$ satisfies
    \begin{equation*}
        \begin{aligned}
            R_{min}^{(m)} \lesssim & \bigO\left(p^{\frac{1}{2}}n^{-\frac{(1+\xi)}{p}}-p^{\frac{1}{2}}g(m)\log^{\frac{1}{2}} (n + mrn)\right)\\
            &+\bigO_p\left(g^{\frac{1}{2}}(m) \log^{-\frac{1}{2}} (n + mrn)\right).
        \end{aligned}
    \end{equation*}

\label{VEGAConsistency}
\end{theorem}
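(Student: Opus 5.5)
The plan is to reduce the VEGA bound to the same ``Gaussian extreme value'' template used for Theorem \ref{MinimumDistance} and Theorem \ref{MinimumDistanceOrderEst}, but applied cell by cell on the Voronoi partition. The two ingredients are a deterministic centering term, controlled by the geometry of the cell containing $\a$, and a stochastic fluctuation term, controlled by the mutation perturbation $\bm{\varepsilon}$.

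\textbf{Step 1: localize.} Let $\C_{j^*}(\K)$ be the cell of the partition $\C_1(\K),\ldots,\C_{[rn]}(\K)$ that contains $\a$. Since the $[rn]$ cells are non-overlapping and cover $\K$ with $|\K|\le 1$, a volume-balancing argument shows that the typical cell has volume $v\asymp (rn)^{-1}$. I would claim that, with the elite cells adapting toward high-fitness regions, the cell containing $\a$ has volume at most $v\lesssim n^{-(1+\xi)}$ for some small $\xi>0$; the extra $\xi$ reflects elite concentration near $\a$. This is the step I expect to be the main obstacle. A rigorous version needs either an assumption that elite points accumulate near $\a$, or an argument that the axis-aligned rectangular cells of Algorithm \ref{VPS} shrink along each coordinate at rate $n^{-(1+\xi)/p}$. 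I would try the latter first, using the sorted-midpoint construction coordinatewise, so that each side length is $\bigO(n^{-(1+\xi)/p})$.

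\textbf{Step 2: bound the centering term.} Apply Theorem \ref{Theorem:OrderBounds} to the translated, rescaled cell $\C_{j^*}(\K)-\a$. Since $\C_{j^*}(\K)$ is contained in a box of side $\lesssim v^{1/p}$, every point of the cell lies within $\sqrt p\,v^{1/p}$ of $\a$. Hence $\E\|\X-\a\|\lesssim p^{1/2}n^{-(1+\xi)/p}$, which is the first term of the bound.

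\textbf{Step 3: count candidates and add the perturbation.} Over $m$ generations, VEGA evaluates the initial $n$ points plus roughly $rn$ augmented Voronoi samples per generation, for $N=n+mrn$ candidates in total. Each candidate is $\operatorname{Unif}(\C)+\bm{\varepsilon}$. As in the mutation computation leading to \eqref{RminUnderShrinkingMutation}, I would compute $\E R^2$ and $\Var R^2$. The perturbation contributes $\bigO(p\,g(m))$ to the mean, and the variance is dominated by the $\bm{\varepsilon}$-part once the cell is small, so $\sigma_p\asymp p^{1/2}g(m)$ at the scale of $R$ (matching the paper's convention in the shrinking-mutation display). Then I would invoke the normal approximation from Lemma \ref{AsympNormalSample} and the Gaussian order-statistic expansion $Z_{(1)}=-\sqrt{2\log N}+\bigO(\log\log N/\sqrt{\log N})+\bigO_p(\log^{-1/2}N)$ used in Theorem \ref{MinimumDistance}. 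This gives $R^{(m)}_{min}=\mu_p+\sigma_pZ_{(1)}$, producing the negative term $-p^{1/2}g(m)\log^{1/2}(n+mrn)$ and the stochastic remainder $\bigO_p(g^{1/2}(m)\log^{-1/2}(n+mrn))$.

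\textbf{Step 4: combine.} Elitism guarantees that the best candidate is never lost, so $R^{(m)}_{min}$ is at most the minimum over all $N$ candidates. Combining Steps 2 and 3 gives the stated inequality.

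Besides the cell-volume claim in Step 1, a secondary caveat is that the candidates across generations are not truly i.i.d. I would handle this by conditioning on the partition at each generation, so that draws within a cell are conditionally i.i.d., and then applying the extreme-value bound conditionally.
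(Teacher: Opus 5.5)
Your skeleton matches the paper's. You localize to the cell containing $\a$, turn its volume into a distance scale via Theorem \ref{Theorem:OrderBounds}, apply the Gaussian-minimum template with spread governed by $g(m)$, and pool the $n$ initial and $[rn]$ augmented draws per generation to reach $\log(n+mrn)$.

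The genuine gap is Step 1, which cannot be closed as stated. Take seeds that do not depend on $\a$; this is the i.i.d.\ uniform setting of the hypothesis, and the one the paper's volume argument uses. The $[rn]$ cells tile $\K$, so averaging over the position of $\a$, the volume of the cell containing it is $|\K|^{-1}\sum_i V_i^2\ge|\K|/[rn]$. Hence for $|\K|$ of constant order, a bound $\lesssim n^{-(1+\xi)}$ cannot hold for typical $\a$. An $\a$-dependent clustering of elites strong enough to beat this is neither assumed nor shown.

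Your fallback fails for a different reason. The coordinatewise midpoint rectangles do shrink (to side about $1/[rn]$), but they do not tile $\K$: their union has expected volume about $[rn]^{1-p}$. So $\a$ typically lies in none of them, and the localization is lost. Step 2 inherits the problem. A Voronoi cell of volume $v$ need not lie in a box of side $v^{1/p}$, and Theorem \ref{Theorem:OrderBounds} only gives the lower bound $\sqrt p\,v^{1/p}\lesssim\E\|\X\|$. The paper gets a matching upper bound only by approximating cells by hyper-cones, which makes that bound two-sided.

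The missing idea is that no $\a$-specific shrinkage is needed. Your own observation that a typical cell has volume $\asymp(rn)^{-1}$, made uniform over all cells, is what the paper uses. It invokes the high-dimensional concentration of Voronoi cell volumes from \cite{VoronoiCellMeasure}: $[rn]V_i/|\K|\to T_p$ with $\E T_p=1$ and $\Var(T_p)\le 6(3/4)^{p/2}$. Taking a maximum over the $[rn]$ cells gives $\max_i V_i=|\K|/[rn]+\lowero_p(n^{-1-\xi})$, which controls whichever cell contains $\a$. Here $\xi$ enters only as the order of this fluctuation. Note also that $v\lesssim n^{-1}$ already yields $\sqrt p\,v^{1/p}\lesssim\sqrt p\,n^{-(1+\xi)/p}$ whenever $\log n\lesssim p$, because $n^{\xi/p}$ is then bounded.

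Finally, your Step 3 treats all $mrn$ augmented draws as coming from a cell containing $\a$. The paper secures this by assuming fitness is almost surely ordered by distance to $\a$, and by arguing inductively that the convex elite region keeps containing $\a$. That same assumption is what identifies the best-fitness estimator with the closest candidate. Elitism alone, as in your Step 4, only preserves the best-fitness point.
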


Theorem \ref{VEGAConsistency} suggests that the Voronoi mechanism improves the spatial coverage of the search domain and can yield a sharper order of the best distance based only on crossover or shrinking mutation. This advantage is particularly relevant high dimensional settings under limited budget, where increasing the population size or the number of generations alone may be insufficient to overcome distance concentration.

\section{Simulation Studies: Segmented Regression} \label{Section:Simulation}

VEGA is general and can be applied to a wide range of optimization problems through the specification of an appropriate fitness function and, when necessary, a problem-specific population representation. In this work, we instantiate the algorithm for segmented regression by defining a custom fitness function and a population initialization procedure that enforces the ordering constraints required for breakpoint representations.

\subsection{Data Generation and Simulation Setting}
\paragraph{Synthetic Data Generation.} Given $K$ segments, we generate and design a predefined linear model piece-wisely, which is well-conditioned and non-degenerate. Breakpoint locations are sampled randomly over the domain subject to ordering and minimum separation constraints, ensuring that all segments have comparable width to each other.

Specific segment slopes and intercepts are then generated sequentially. The parameters for the first segment are drawn at random, and parameters for following segments are sampled with a minimum difference from the preceding segment. This constraint enforces a clear structural change at each breakpoint, either through a change in slope or through a discontinuity in the fitted value at the segment boundary. As a result, adjacent segments are guaranteed to be meaningfully distinct, avoiding cases in which breakpoints correspond to only negligible changes in the underlying signal.

Given the resulting segmentation and segment parameters, the response values are generated according to a piecewise linear model with additive Gaussian noise. Specifically, for each observation $i$,
$$
y_i = \theta_{s(i)}\, x_i + \alpha_{s(i)} + \varepsilon_i, \qquad 
\varepsilon_i \overset{i.i.d}{\sim} \mathcal{N}(0,\sigma^2),
$$
where $x_i$ denotes the covariates, $y_i$ the corresponding response, and $s(i)\in\{1,\dots,K\}$ the segment index determined by the true breakpoints. The parameters $\theta_{s(i)}$ and $\alpha_{s(i)}$ denote the slope and intercept of segment $s(i)$, respectively.

\paragraph{Population Initialization.} All GA candidates share the same population initialization procedure. Each individual is represented by a sorted breakpoint vector $\Z \in [0,1]^{K-1}$. Initial candidates are generated by uniformly sampling breakpoint values in $[0,1]$, and afterward evaluated and sorted according to the fitness of segmentation results. Candidates with infinite fitness are dropped, and sampling is repeated until the desired population size is reached. This rejection-driven initialization ensures that all individuals in the initial population correspond to valid segmentations while introducing no additional structural bias.

\paragraph{Fitness Evaluation.} Candidate solutions are encoded as breakpoint vectors $Z \in [0,1]^{K-1}$, which induces $K$ contiguous segments by partitioning the input variable $X$. For given $\Z$, segment boundaries are first determined, and the data are divided accordingly into specific segment subsets $(\X_s, \Y_s)$ for $s = 1,\dots,K$. Fitness is computed by fitting an independent Ordinary Least Squares (OLS) linear model within each segment and aggregating the squared residuals. If any segment is empty or fails to meet minimum support requirements, the candidate solution is deemed invalid and assigned with infinite fitness. For valid segmentations, the total loss is calculated as the sum of squared residuals across all segments, and the final fitness value is obtained by normalizing this aggregated loss by the sample size. This fitness definition is used consistently across all algorithmic variants. See Algorithm \ref{Code:SimulationFitness} for pseudo code.

\begin{algorithm}[htbp]
\caption{\textbf{Fitness Evaluation}}
\begin{algorithmic}[1]
\Procedure{\texttt{Fitness}}{$X,y,Z,K$}
    \State Compute segment index boundaries induced by $Z$ on sorted $X$
    \State loss $\gets 0$
    \For{$s = 1$ to $K$}
        \State Extract segment $(X_s, y_s)$
        \If{segment is empty or too narrow}
            \State \Return $\infty$
        \EndIf
        \State Fit $(\beta_s,\alpha_s)$ on $(X_s, y_s)$ by OLS
        \State loss $\gets$ loss $+ \sum_{(x,y)\in(X_s,y_s)} (\beta_s x + \alpha_s - y)^2$
    \EndFor
    \State \Return $\text{loss}/|X|$
\EndProcedure
\end{algorithmic}
\label{Code:SimulationFitness}
\end{algorithm}

\subsection{Simulation Results}

We evaluate the VEGA by comparing it with Control 0 and Control 1. Although the algorithm itself is general, all experiments are conducted in a segmented regression setting, which provides a clear and structured context for comparing optimization behavior.

Algorithm performance is evaluated under three criteria. First, we assess prediction accuracy by Mean Squared Error (MSE) given known number of segments. Second, we examine breakpoint recovery to evaluate how accurately each method identifies the underlying segmentation structure under simplified conditions. Finally, we study model selection performance by Bayesian Information Criterion (BIC), assuming the number of segments is unknown, with particular emphasis on the ability of proposed algorithm to recover the correct model complexity.

\subsubsection{MSE Performance}

Prediction performance is evaluated using synthetic data generated from piecewise linear regression models with known segmentation structure. Experiments are conducted for segment counts $K \in \{4,8\}$, segment lengths in $\{100,300,1000\}$, and noise levels $\sigma \in \{0.1,0.2,0.4\}$. For each configuration $(K, n, \sigma)$, breakpoint locations and corresponding segment slopes and intercepts are randomly generated and held fixed, defining the underlying signal. Covariates are taken on an evenly spaced grid over $[0,1]$, with total sample size $n = K \times \text{segment length}$.

For each simulation replicate, Gaussian noise is independently generated and added to the fixed signal to obtain observed responses. Given the true segmentation, segment-wise regression parameters are refit via ordinary least squares (OLS) to account for the variability introduced by noise, providing a baseline corresponding to the best achievable fit under the true model.

Prediction performance is evaluated using mean squared error (MSE). For each candidate model, breakpoint locations are estimated by Control 0, Control 1, and VEGA under Simple and Island settings, and segment-wise parameters are subsequently refit via OLS given the estimated segmentation. For all GA-based methods, we use a population size of $50K$, a maximum of $2K$ iterations, and an island model with 5 islands, where crossover between islands occurs every 2 generations for $K=4$ and every 4 generations for $K=8$. See Table \ref{Table:MSE} for results.

\begin{table}[htpb]
\centering
\scriptsize
\caption{Average Bias and MSE for $K=4,8$ across segment lengths ($l$) and noise levels ($\sigma$).}
\begin{tabular}{ccc|rr|rr|rr}
 &              &       & \multicolumn{2}{c|}{Control 0} & \multicolumn{2}{c|}{Control 1} & \multicolumn{2}{c}{VEGA} \\
$K$ &$l$            & $\sigma$ & Bias$\times 10^3$ & MSE$\times 10$ & Bias$\times 10^3$ & MSE$\times 10$ & Bias$\times 10^3$ & MSE$\times 10$ \\ \hline
\addlinespace[2pt]
\multicolumn{9}{c}{Simple GA} \\
\addlinespace[2pt] \hline
4 &  100 & 0.1 & 30.054 & 0.399 &  4.799 & 0.146 &  0.184 & 0.100 \\
  &      & 0.2 & 29.922 & 0.692 &  6.419 & 0.457 &  0.111 & 0.394 \\
  &      & 0.4 & 27.515 & 1.844 &  7.513 & 1.644 & -0.381 & 1.565 \\
  &  300 & 0.1 & 40.669 & 0.506 &  6.786 & 0.167 &  0.864 & 0.108 \\
  &      & 0.2 & 39.797 & 0.796 &  8.778 & 0.486 &  0.694 & 0.405 \\
  &      & 0.4 & 39.038 & 1.979 & 10.794 & 1.696 &  0.663 & 1.595 \\
  & 1000 & 0.1 & 30.503 & 0.405 &  5.459 & 0.155 &  0.377 & 0.104 \\
  &      & 0.2 & 30.294 & 0.703 &  7.500 & 0.475 &  0.359 & 0.403 \\
  &      & 0.4 & 29.884 & 1.896 &  9.573 & 1.693 &  0.274 & 1.600 \\
8 &  100 & 0.1 & 56.867 & 0.666 & 11.805 & 0.216 &  1.702 & 0.115 \\
  &      & 0.2 & 56.689 & 0.958 & 13.190 & 0.523 &  1.668 & 0.408 \\
  &      & 0.4 & 55.164 & 2.117 & 14.744 & 1.712 &  0.903 & 1.574 \\
  &  300 & 0.1 & 52.762 & 0.627 & 14.386 & 0.243 &  3.126 & 0.131 \\
  &      & 0.2 & 52.668 & 0.924 & 16.484 & 0.562 &  3.185 & 0.429 \\
  &      & 0.4 & 52.330 & 2.114 & 18.611 & 1.777 &  2.748 & 1.618 \\
  & 1000 & 0.1 & 49.616 & 0.596 & 12.902 & 0.229 &  3.671 & 0.137 \\
  &      & 0.2 & 49.040 & 0.889 & 14.407 & 0.543 &  3.695 & 0.436 \\
  &      & 0.4 & 49.551 & 2.092 & 16.579 & 1.762 &  3.528 & 1.632 \\ \hline
\addlinespace[2pt]
\multicolumn{9}{c}{Island GA} \\
\addlinespace[2pt] \hline
4 &  100 & 0.1 & 30.388 & 0.402 &  4.682 & 0.145 &  0.201 & 0.100 \\
  &      & 0.2 & 29.782 & 0.689 &  6.555 & 0.456 &  0.130 & 0.392 \\
  &      & 0.4 & 27.794 & 1.845 &  7.773 & 1.645 & -0.376 & 1.563 \\
  &  300 & 0.1 & 39.455 & 0.494 &  6.772 & 0.167 &  0.788 & 0.107 \\
  &      & 0.2 & 39.638 & 0.793 &  8.320 & 0.480 &  0.996 & 0.407 \\
  &      & 0.4 & 38.873 & 1.977 & 10.601 & 1.695 &  0.476 & 1.593 \\
  & 1000 & 0.1 & 30.476 & 0.404 &  5.544 & 0.155 &  0.402 & 0.104 \\
  &      & 0.2 & 30.457 & 0.704 &  7.166 & 0.471 &  0.349 & 0.403 \\
  &      & 0.4 & 29.847 & 1.895 &  9.322 & 1.690 &  0.310 & 1.600 \\
8 &  100 & 0.1 & 56.323 & 0.661 & 11.574 & 0.214 &  1.796 & 0.116 \\
  &      & 0.2 & 56.299 & 0.955 & 13.523 & 0.527 &  1.780 & 0.410 \\
  &      & 0.4 & 55.112 & 2.117 & 14.785 & 1.713 &  0.876 & 1.574 \\
  &  300 & 0.1 & 52.447 & 0.624 & 14.728 & 0.246 &  3.158 & 0.131 \\
  &      & 0.2 & 52.862 & 0.926 & 16.495 & 0.563 &  3.319 & 0.431 \\
  &      & 0.4 & 52.220 & 2.112 & 18.551 & 1.775 &  2.827 & 1.618 \\
  & 1000 & 0.1 & 49.495 & 0.595 & 12.958 & 0.229 &  3.554 & 0.135 \\
  &      & 0.2 & 49.441 & 0.893 & 14.555 & 0.545 &  3.601 & 0.435 \\
  &      & 0.4 & 49.229 & 2.089 & 16.770 & 1.765 &  3.513 & 1.632 \\ \hline
\end{tabular}
\label{Table:MSE}
\end{table}

\subsubsection{Breakpoint Recovery}

To evaluate the ability of each algorithm to recover structural information beyond prediction accuracy, we conduct breakpoint recovery experiments in a simplified setting with a single interior breakpoint. The number of segments is fixed at $K = 2$, with the true breakpoint located at $0.5$. This setup isolates the task of breakpoint estimation and allows direct comparison between estimated and true breakpoint locations.

Segment slopes and intercepts are independently sampled from the uniform distribution on $(-1,1)$ and held fixed across all simulation replicates. To ensure identifiability, the absolute difference between the slopes of the two segments is constrained to be at least $0.5$, guaranteeing a meaningful structural change at the breakpoint. Covariates are generated on an evenly spaced grid over $[0,1]$, with sample sizes $n \in \{1000,2000,4000\}$. Observed responses are obtained by evaluating the corresponding piecewise linear model and adding i.i.d.\ Gaussian noise with standard deviation $\sigma \in \{0.1,0.2,0.4\}$. For each configuration, 1,000 independent simulation replicates are generated.

For model fitting, all methods are initialized with a population of size 50, with candidate breakpoint locations sampled uniformly from $\{0,1\}$ to mitigate the effects of random initialization in this low-dimensional setting. Each algorithm is run for a maximum of 10 iterations, and for island-based methods, crossover between islands occurs every 3 generations. Breakpoint estimates are obtained from Control 0, Control 1, and VEGA under both Simple and Island configurations. See Table \ref{Table:Breakpoint} for results and Figure \ref{fig:Breakpoints} for breakpoint recovery distribution, which aligns with the theoretical results about non-standard asymptotic distribution of breakpoint estimators in \cite{Ling2016,Ma2016}.

\begin{table}[htpb]
\centering
%{\fontsize{8pt}{10pt}\selectfont
\scriptsize
\caption{SD and MSE of breakpoint estimates across segment lengths $l$ and noise levels $\sigma$, computed relative to the true breakpoint $\tau=0.5$.}
\begin{tabular}{cc|rr|rr|rr}
               &       & \multicolumn{2}{c|}{Control 0} & \multicolumn{2}{c|}{Control 1} & \multicolumn{2}{c}{VEGA} \\
$l$            & $\sigma$ & SD$\times 10$ & MSE$\times 10^2$ & SD$\times 10$ & MSE$\times 10^2$ & SD$\times 10$ & MSE$\times 10^2$  \\ \hline 
\addlinespace[2pt]
\multicolumn{8}{c}{Simple GA} \\
\addlinespace[2pt] \hline
1000 & 0.1 & 3.08 & 10.21 & 0.89 &  0.80 & 0.14 &  0.02 \\
     & 0.2 & 3.03 &  9.76 & 0.75 &  0.56 & 0.33 &  0.11 \\
     & 0.4 & 3.18 & 10.30 & 0.74 &  0.55 & 0.57 &  0.32 \\
2000 & 0.1 & 3.08 &  9.90 & 0.98 &  0.96 & 0.15 &  0.02 \\
     & 0.2 & 3.06 &  9.99 & 0.77 &  0.60 & 0.18 &  0.03 \\
     & 0.4 & 3.06 &  9.65 & 0.69 &  0.47 & 0.32 &  0.10 \\
4000 & 0.1 & 3.14 & 10.14 & 0.96 &  0.92 & 0.10 &  0.01 \\
     & 0.2 & 3.06 &  9.85 & 0.75 &  0.57 & 0.15 &  0.02 \\
     & 0.4 & 3.03 &  9.65 & 0.66 &  0.43 & 0.28 &  0.08 \\ \hline
\addlinespace[2pt]
\multicolumn{8}{c}{Island GA} \\
\addlinespace[2pt] \hline
1000 & 0.1 & 3.11 &  9.92 & 0.92 &  0.85 & 0.16 &  0.03 \\
     & 0.2 & 3.10 &  9.98 & 0.84 &  0.70 & 0.27 &  0.08 \\
     & 0.4 & 3.14 & 10.23 & 0.78 &  0.61 & 0.55 &  0.30 \\
2000 & 0.1 & 3.12 & 10.10 & 0.91 &  0.83 & 0.13 &  0.02 \\
     & 0.2 & 3.09 &  9.88 & 0.79 &  0.63 & 0.19 &  0.04 \\
     & 0.4 & 3.05 &  9.82 & 0.70 &  0.50 & 0.37 &  0.14 \\
4000 & 0.1 & 3.08 & 10.06 & 0.96 &  0.92 & 0.10 &  0.01 \\
     & 0.2 & 3.18 & 10.54 & 0.81 &  0.66 & 0.17 &  0.03 \\
     & 0.4 & 3.12 & 10.15 & 0.60 &  0.36 & 0.27 &  0.07 \\
\end{tabular}
\label{Table:Breakpoint}

\end{table}

\begin{figure}[htbp]
    \centering
    \includegraphics[width=0.95\linewidth]{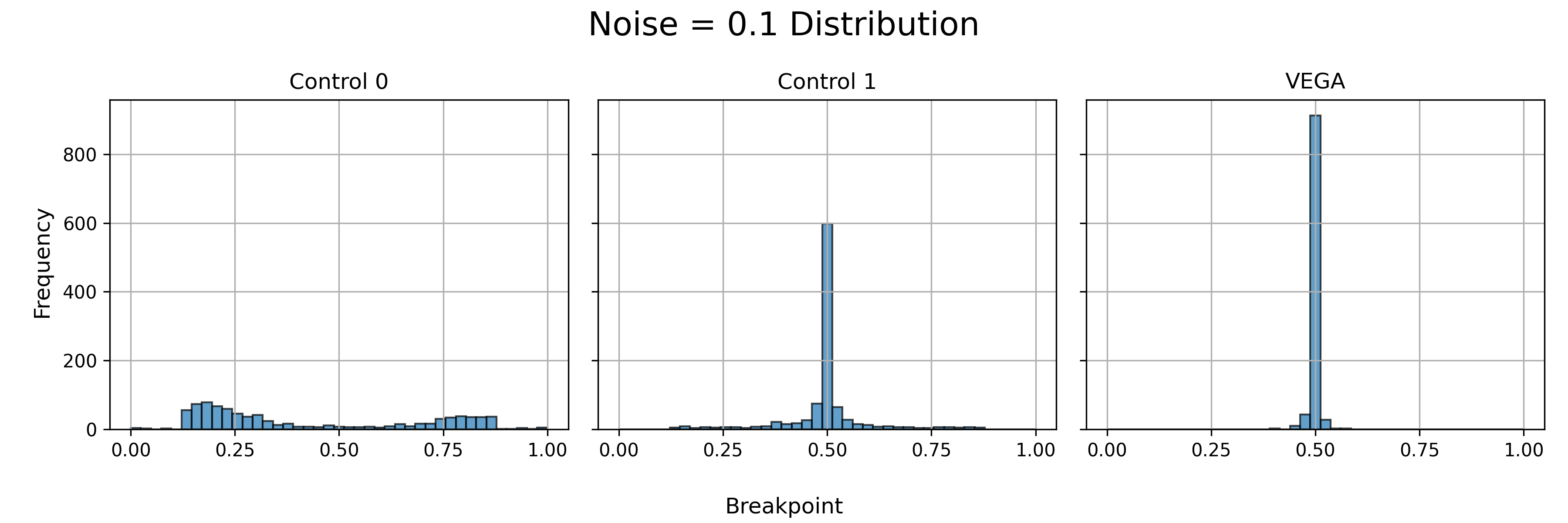}
    \includegraphics[width=0.95\linewidth]{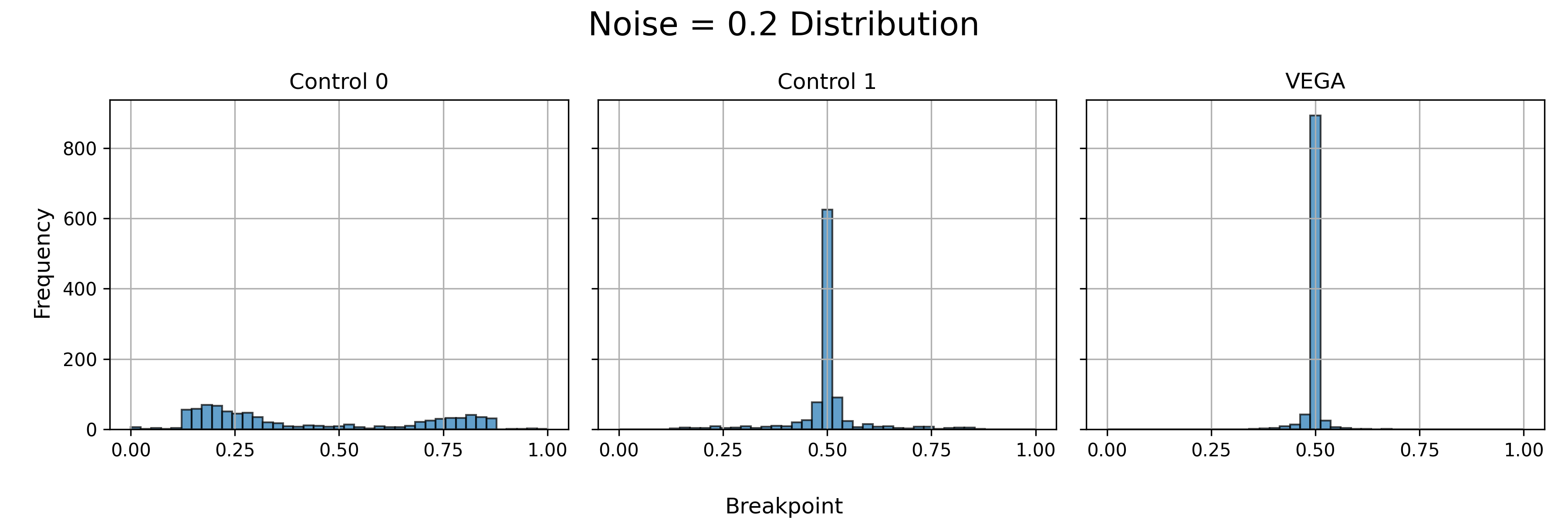}
    \includegraphics[width=0.95\linewidth]{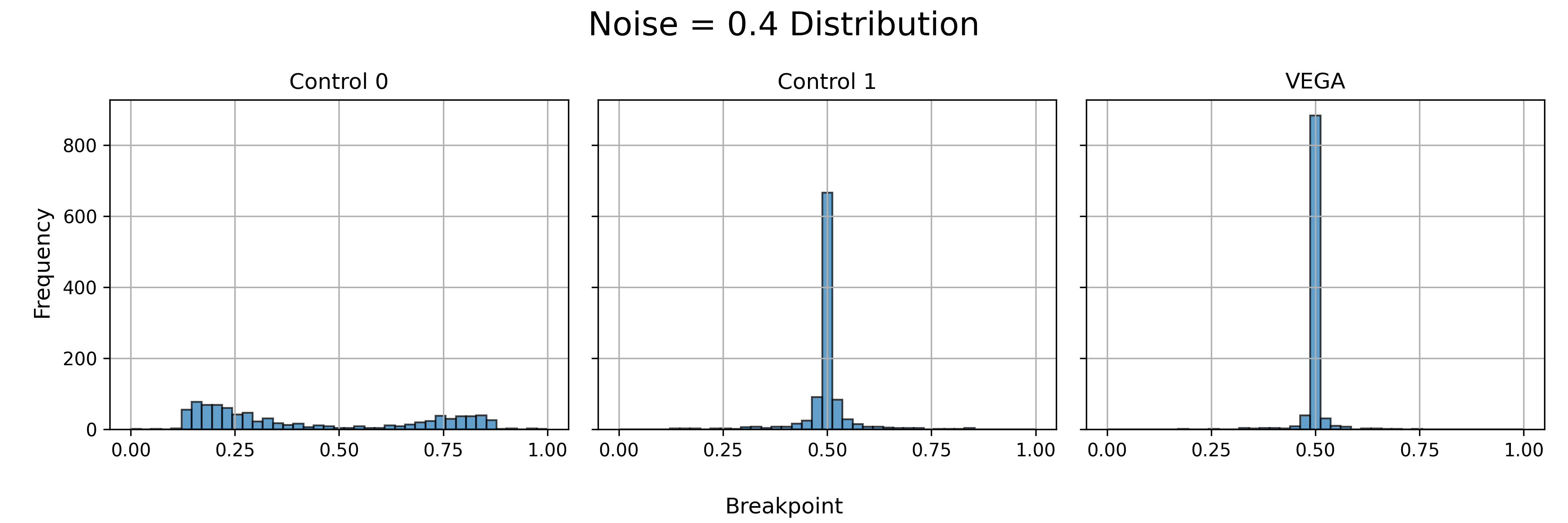}
    \caption{Breakpoint distribution for the three models at $l=4000$ and $\sigma=0.1, 0.2, 0.4$}
    \label{fig:Breakpoints}
\end{figure}

\section{Real Data Analysis}

We apply VEGA to real-world data from Stack Exchange to study long-term activity patterns and structural changes over time. The dataset consists of post and vote counts aggregated over approximately 5.5 years, from February 8, 2018 to July 23, 2023. We focus on three primary measures of platform engagement: 
\begin{itemize}[noitemsep, topsep=0pt]
    \item \textbf{Questions}: Number of questions posted per trading day,
    \item \textbf{Answers}: Number of answers posted per trading day, and
    \item \textbf{Accept Votes}: Number of accepted-answer votes per trading day, where an accepted-answer vote corresponds to a question author marking an answer as accepted. 
\end{itemize}
These variables capture complementary aspects of content generation and resolution dynamics on the platform.

Daily activity exhibits substantial variability; however, all model fitting and breakpoint estimation are performed directly on the original data. We fit segmented regression models using each genetic-algorithm variant for candidate segment counts ranging from $K = 2$ to $10$. For each value of $K$, the algorithm is run five times with different random initializations, and the solution with the lowest mean squared error (MSE) is retained.

Model selection is conducted using the Bayesian Information Criterion (BIC), and the optimal number of segments is chosen by minimizing the BIC value. Across all three measures of platform engagement, the selected model consistently yields $K = 4$ segments, indicating the presence of four major regimes in the underlying activity patterns.

\subsection{Residual Bootstrapping}

To assess the stability and uncertainty of the estimated breakpoints on real data, we apply a residual-based bootstrap that preserves the segmented structure in the fitted model. After selecting the best model and its breakpoints using BIC on the training series, we compute segment-wise fitted values and residuals from the original data. Bootstrap datasets are then generated by independently shuffling the residuals within each fitted segment and adding them back to the concatenated fitted values, thereby retaining the estimated segment shapes while resampling only the unexplained variation. For each bootstrap replicate (5000 total), we re-run the genetic algorithm with the same number of segments $K$ to obtain a new set of breakpoint estimates. The resulting collection of bootstrap breakpoint vectors is used to evaluate stability and uncertainty by examining the empirical distribution of breakpoint locations, constructing percentile-based confidence intervals, and measuring the fraction of replicates that place a breakpoint within a small window of the original estimate, reflecting how consistently the algorithm recovers similar segmentation structure.

\begin{figure}[htbp]
    \centering
    \includegraphics[width=0.8\linewidth]{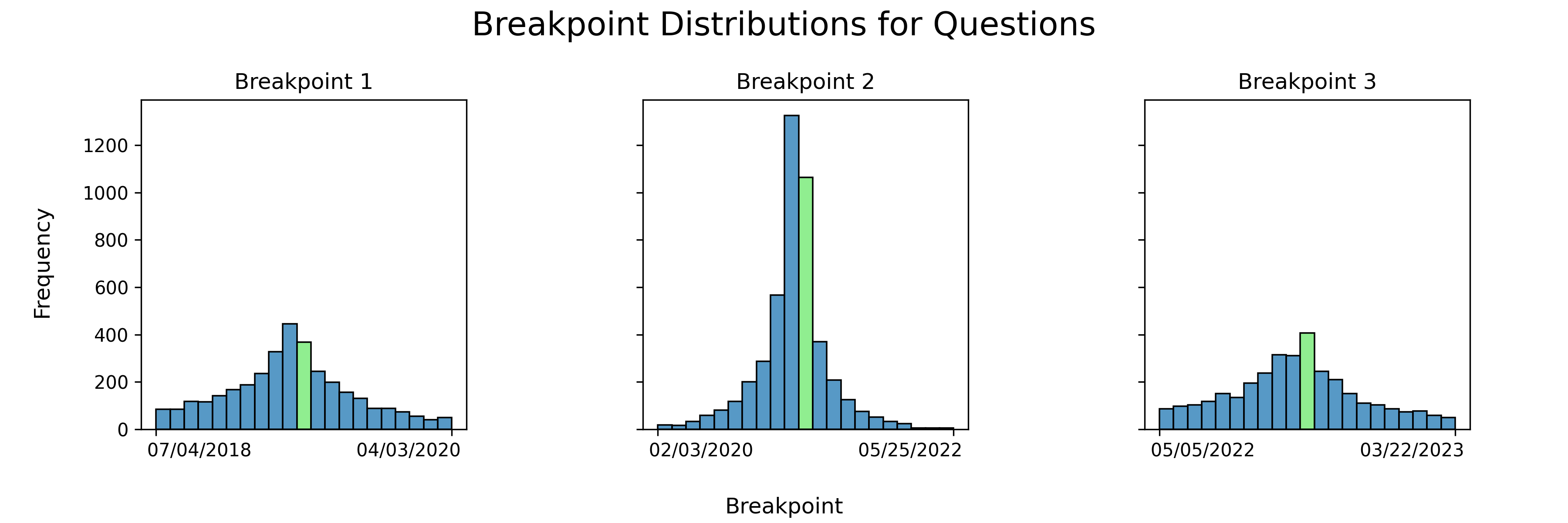}
    \includegraphics[width=0.8\linewidth]{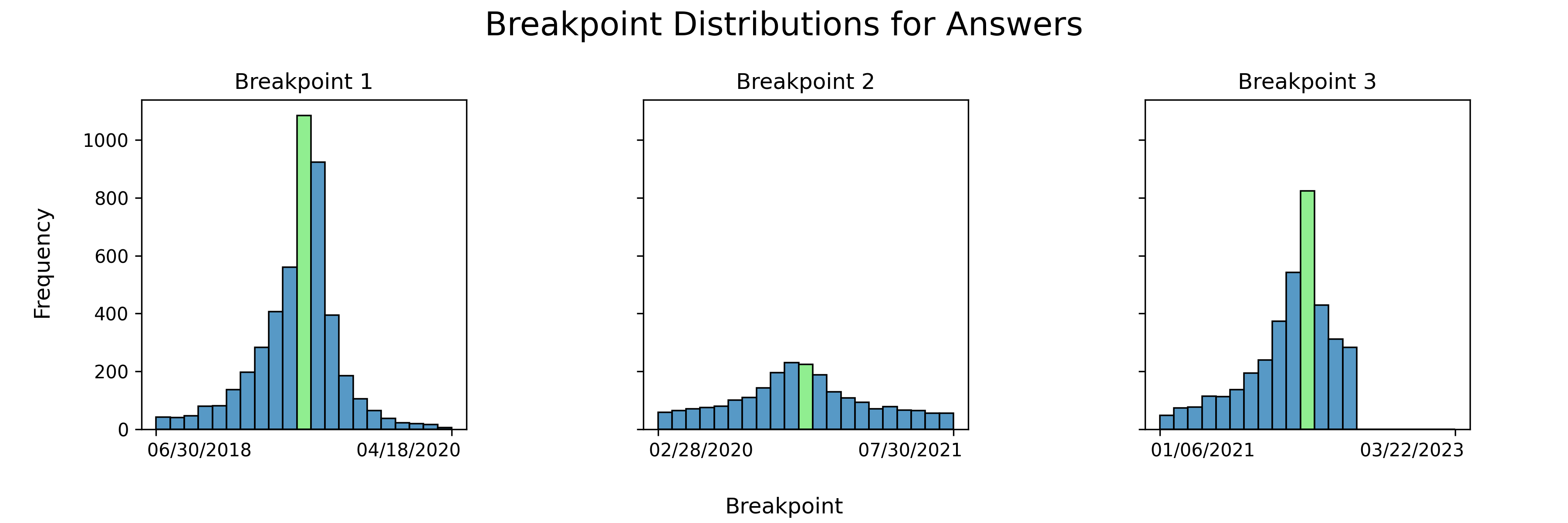}
    \includegraphics[width=0.8\linewidth]{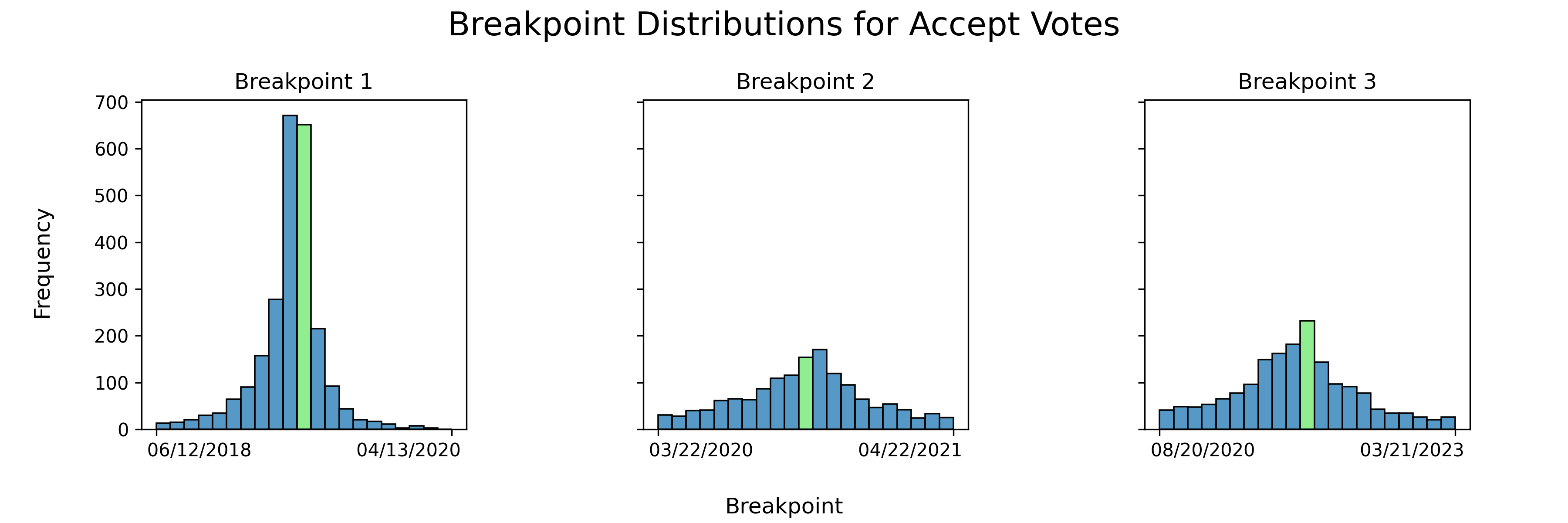}
    \caption{Stack Exchange: Bootstrapping results for Questions (top), Answers (middle), and Accept Votes (bottom) breakpoint estimate highlighted in lightgreen.}
    \label{fig:Bootstrapping}
\end{figure}

\begin{figure}[htbp]
    \centering
    \includegraphics[width=0.48\linewidth]{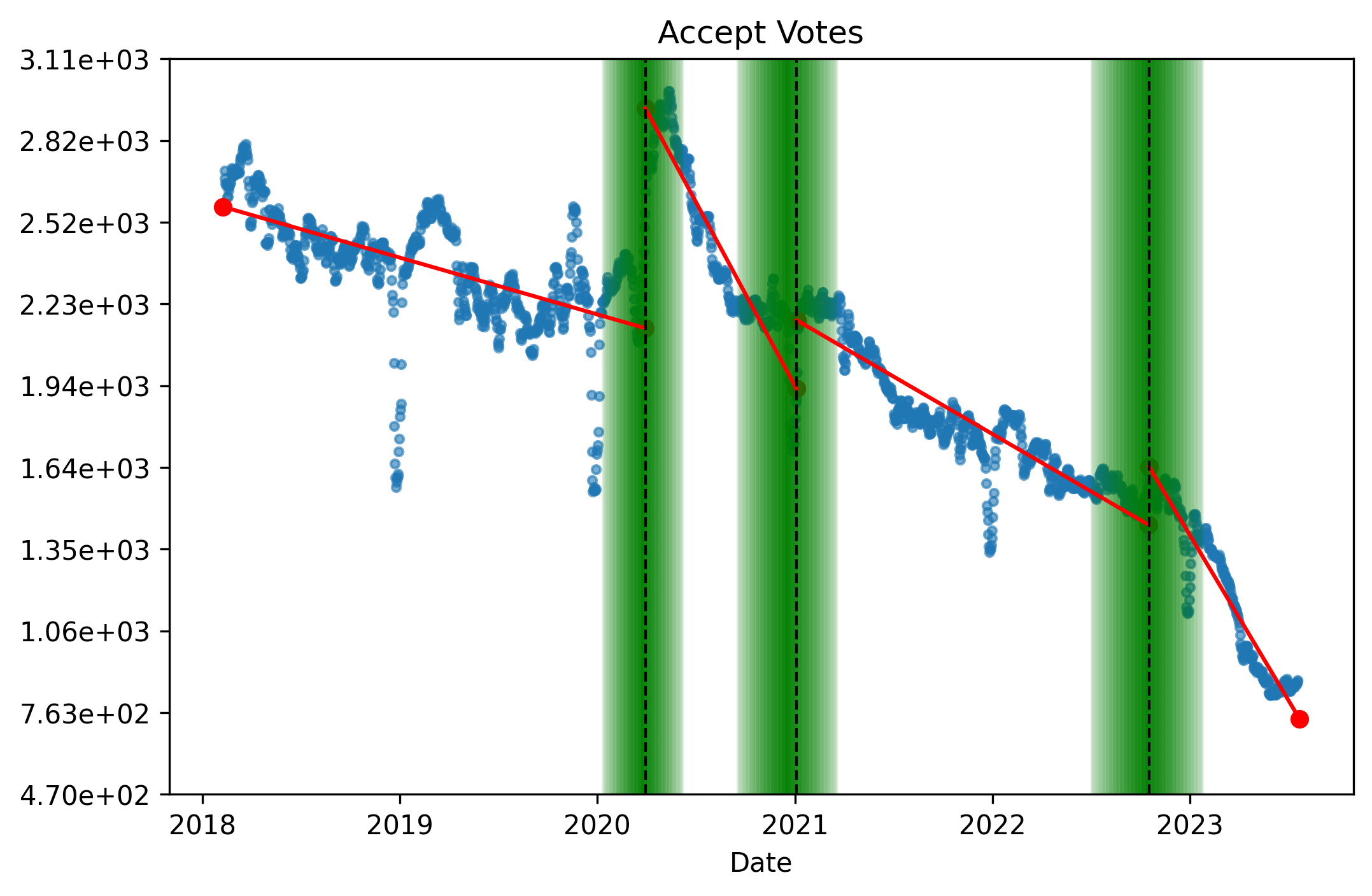}
    \hfill
    \includegraphics[width=0.48\linewidth]{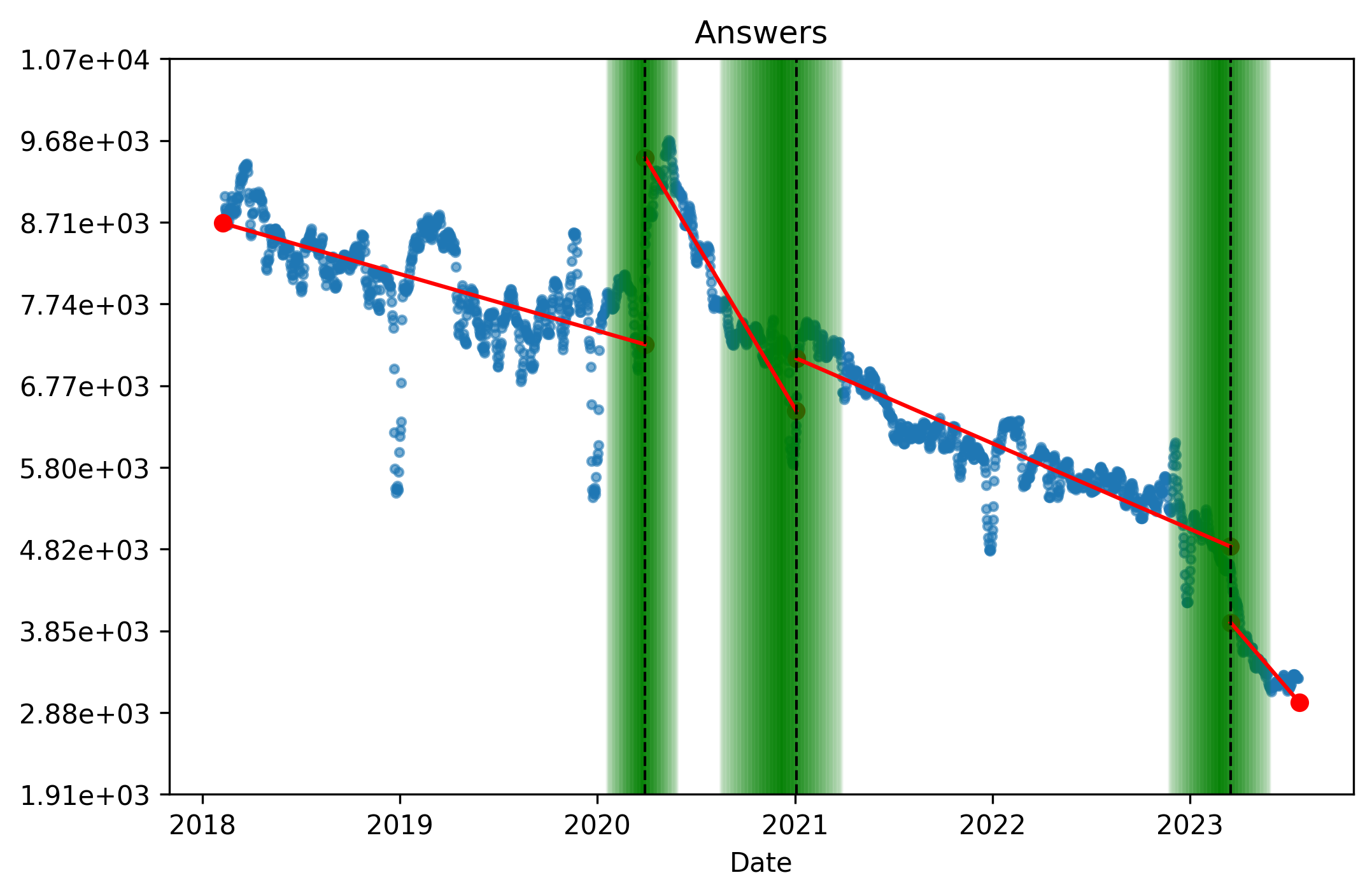}
    \caption{Stack Exchange: Time series plots for Accept Votes (left) and Answers (right) with estimated mean function in red, detected breakpoints as dotted black line, and 95\% confidence intervals in the green shaded region.}
    \label{fig:QA_trends}
\end{figure}

The segmentation results indicate several clear shifts in user activity over time. For questions, four regimes emerge: February 2018-December 2019, December 2019-May 2020, May 2020-October 2022, and October 2022-July 2023. The change at the end of 2019 is followed shortly by the global disruption associated with the COVID-19 pandemic, and the additional transition in May 2020 coincides with the period when remote work and online learning became widespread. These changes likely influenced both the volume and nature of mathematical engagement on the platform. A further structural break in October 2022 appears immediately prior to the public release of ChatGPT (November 2022), suggesting that the emergence of large language models may have altered user behavior in a sustained way.

A similar pattern is visible in the answers and accepted-answer vote series. Both show a pronounced shift in late March 2020, consistent with the early pandemic period, followed by a relatively stable regime through much of 2021 and 2022. All three measures—questions, answers, and accepted votes—identify a structural transition in October 2022, pointing to a broad change in participation rather than an isolated fluctuation in a single metric. Taken together, these results suggest that major external events, particularly the pandemic and the introduction of generative AI tools, coincide with measurable changes in activity on the platform. See Table \ref{Table:RealData} for parameter estimates and confidence intervals.

\begin{table}[!htpb]
\centering
\scriptsize
\caption{Segment-wise regression estimates for Questions, Answers, and Accepted Votes. All values are reported $/100$. For each segment, the table reports slope and intercept estimates with bootstrap standard errors (SE) and 95\% bootstrap percentile confidence intervals.}
\begin{tabular}{ll|cccc}
\multicolumn{2}{r|}{Segment} & 1 & 2 & 3 & 4\\ \midrule
\multicolumn{2}{l|}{Questions} &&&&\\
Slope     & Est. & -1.56 & 25.67 & -3.00 & -10.97 \\
          & SE   & 0.28 & 14.66 & 13.93 & 14.47 \\
          & CI   & (-2.20, -1.15) & (-16.62, 40.86) & (-20.21, 34.30) & (-25.89, 30.84) \\
Intercept & Est. & 8.23 & -10.92 & 11.08 & 26.39 \\
          & SE   & 0.04 & 10.44 & 10.37 & 13.56 \\
          & CI   & (8.17, 8.34) & (-21.92, 19.03) & (-17.60, 23.03) & (-17.99, 35.14) \\
\hline
\multicolumn{2}{l|}{Answers} &&&&\\
Slope     & Est. & -1.85 & -10.21 & -2.81 & -8.29 \\
          & SE   & 0.13 & 4.44 & 3.90 & 3.88 \\
          & CI   & (-2.10, -1.59) & (-14.48, 2.94) & (-12.40, 2.89) & (-13.38, 1.82) \\
Intercept & Est. & 8.70 & 17.35 & 10.07 & 19.46 \\
          & SE   & 0.02 & 4.48 & 3.90 & 5.23 \\\
          & CI   & (8.66, 8.74) & (4.04, 21.60) & (4.21, 19.51) & (3.78, 24.30) \\
\hline
\multicolumn{2}{l|}{Accept Votes} &&&&\\
Slope     & Est. & -0.56 & -3.94 & -1.04 & -3.17 \\
          & SE   & 0.04 & 1.50 & 1.30 & 1.29 \\
          & CI   & (-0.64, -0.47) & (-4.92, 0.96) & (-4.23, 0.85) & (-4.59, 0.48) \\
Intercept & Est. & 2.58 & 6.01 & 3.24 & 7.05 \\
          & SE   & 0.01 & 1.55 & 1.32 & 1.81 \\
          & CI   & (2.56, 2.59) & (1.04, 7.11) & (1.24, 6.40) & (1.08, 8.16) \\
\end{tabular}
\label{Table:RealData}
\end{table}

The segment-wise regression estimates clarify how activity evolved within each detected regime. For Questions, the first segment (February 2018–December 2019) exhibits a modest declining trend, followed by a sharp increase during the early pandemic period (December 2019–May 2020), as reflected by the large positive slope in Segment 2. Activity then transitions to a sustained downward trend through October 2022, with an even steeper decline in the final segment after October 2022. 

The Answers and Accepted-Vote series display similar patterns. Both show a pronounced negative slope beginning in early 2020, consistent with the structural break identified during the onset of the COVID-19 period. In the post\-October 2022 regime, all three measures exhibit significantly negative slopes, indicating a sustained decline in engagement following the emergence of large language model tools. The narrow bootstrap confidence intervals across segments suggest that these trend estimates are statistically stable and not driven by sampling variability.

\section{Discussion and Future Work}

This paper proposed VEGA, a Voronoi-augmented genetic algorithm for high-dimensional derivative-free optimization. The main idea is to combine elitist genetic search with coordinate-wise Voronoi-informed local exploration. By augmenting each elite solution with candidates sampled from its associated local region, VEGA preserves exploitation of high-fitness points while maintaining geometric diversity in the search space. Theoretical results show that, under high-dimensional regularity conditions and a fitness ordering consistent with distance to the target maximizer, the Voronoi mechanism improves the expected convergence behavior relative to standard genetic-algorithm variants. Simulation studies in segmented regression and the Stack Exchange data analysis demonstrate that this improvement translates into more stable breakpoint recovery and useful structural-change detection in practice.

Several theoretical extensions remain important. The current analysis is primarily asymptotic in dimension and is designed to explain why Voronoi-based elite augmentation is useful when high-dimensional random search suffers from poor coverage. A natural next step is to develop fixed-dimensional and non-asymptotic guarantees. Such results would bound the probability that VEGA fails to enter an epsilon-neighborhood of the global maximizer after a finite number of generations, as a function of population size, mutation radius, elite ratio, and the geometry of the objective function. A possible direction is to combine covering-number arguments with finite-sample concentration inequalities for random partitions. This would complement the high-dimensional theory in this paper and provide more explicit guidance for tuning VEGA under a fixed computational budget.

A further limitation concerns pathological or ill-conditioned objective functions. The theoretical justification of VEGA assumes that the elite region continues to contain, or at least remains sufficiently close to, the true maximizer. In highly multimodal, noisy, flat, or misspecified objective functions, this assumption may fail that early elite points may concentrate around a local optimum, and the Voronoi-augmented search may then refine the wrong region. In such cases, elitism can accelerate premature convergence rather than improve global search. Future work should therefore study robust variants of VEGA that preserve a nonzero probability of global exploration. Possible safeguards include multi-start initialization, island-based evolution, adaptive elite-region expansion, occasional uniform resampling, non-vanishing mutation probabilities, and restart rules triggered by stagnation in fitness. A more general theory should replace the assumption that the elite region contains the true maximizer with probabilistic basin-of-attraction conditions, thereby characterizing when VEGA converges globally and when it should be interpreted only as a local derivative-free optimizer.

\clearpage
\begin{appendices}
\section{Definitions, Lemmas and Proofs}
This Appendix discussed how VEGA performs in high dimensional scenarios under limited budget. Section \ref{App:def} provides shorthands for later proofs. Section \ref{App:lem} presents proofs of technical lemmas.
%Lemmas \ref{AsympNormalSample} and \ref{MinimumNormal} quantify the asymptotic behavior of the distance between VEGA estimator and the true value. Based on Lemma \ref{lem:layer-cake} and Lemma \ref{lem:tail-volume}, and Lemmas \ref{lem:lower-expectation-integral}--\ref{Theorem:CloseFormBound} gives bounds of the distance moments and order of estimation error. 
The proofs of main Theorems are given in Section \ref{App:thm}.

\subsection{Definitions} \label{App:def}
\begin{definition}
    Let $\K\subseteq [0,1]^p$ be a set with positive $p$-dimensional volume $|\K|=v$, $0<v\le 1$, and $\X$ be uniformly distributed over $K$. Define $ R=\|\X\|_2.$
    
Since $\K\subseteq [0,1]^p$, we always have $0\le R\le \sqrt p.$ We write
$$
F_p(r) = \left| \left\{\X\in[0,1]^p:\|\X\|_2\le r \right\} \right|,
$$
where $0\le r\le \sqrt p$. Thus $F_p(r)$ is the volume of the intersection of the cube $[0,1]^p$ with the ball of radius $r$ centered at the origin. Define $r_-(v)=F_p^{-1}(v)$, and $r_+(v)=F_p^{-1}(1-v)$, where $F_p^{-1}$ denotes the generalized inverse,
$$
F_p^{-1}(a)=\inf\{r:F_p(r)\ge a\}.
$$
\end{definition}

\begin{definition}
    Define three functions as
    $$
m_p(v) = r_-(v) - \frac1v \int_0^{r_-(v)}F_p(t)\,dt,
$$

$$
M_p(v) = r_+(v) + \frac1v \int_{r_+(v)}^{\sqrt p} \bigl(1-F_p(t)\bigr)\,dt, 
$$
and
$$ 
Q_p(v)= r_+(v)^2 + \frac2v \int_{r_+(v)}^{\sqrt p} t\bigl(1-F_p(t)\bigr)\,dt.
$$
\end{definition}

\subsection{Lemmas and their Proofs} \label{App:lem}

\paragraph{Proof of Lemma \ref{AsympNormalSample}}{
\begin{proof}
    By the independence of $(X_i-a_i)^2$ across $i$, i.e. $(X_i-a_i)^2$ $\bot$ $(X_j-a_j)^2$ for $\forall$ $i \neq j$, and $\Var\left((X_i-a_i)^2\right) < \infty$ for $\forall$ $i = 1, 2, \cdots, p$, $R^2 = \|\X-\a\|^2=\sum_{i=1}^p\left(X_i-a_i\right)^2$ is asymptotically normal distributed as in high dimensional scenarios when $n,p \rightarrow \infty$. To derives its parameters, we further study its expectation and variance.

By rewriting $R^2$, we have
$$
R^2 = \sum_{i=1}^p X_i^2+\sum_{i=1}^p a_i^2-2 \sum_{i=1}^p a_i X_i.
$$

Hence, the expectation is 

\begin{equation}
\begin{aligned}
    \E R^2 &=\sum_{i=1}^p \E X_i^2+\sum_{i=1}^p a_i^2-2 \sum_{i=1}^p a_i \E X_i\\
    &=\sum_{i=1}^p Var X_i+\sum_{i=1}^p\left(\E X_i\right)^2+a_i^2-2 a_i \E X_i \\
    & =\sum_{i=1}^p\left[\frac{1}{12}+\left(a_i-\frac{1}{2}\right)^2\right].
\end{aligned}
\label{DistanceExpectation}
\end{equation}

Similarly, the variance is given by

\begin{equation}
\begin{aligned}
    \Var(R^2) &=\sum_{i=1}^p Var\left(X_i-a_i\right)^2\\
    &=\sum_{i=1}^p \left\{\E\left(X_i-a_i\right)^4-\left[\E\left(X_i-a_i\right)^2\right]^2\right\}.\\
\end{aligned}
\label{DistanceVariance}
\end{equation}

Denote the expectation of $R^2$ as $\mu = r_p^2= p^{-1} \sum_{i=1}^p\left[\frac{1}{12}+\left(a_i-\frac{1}{2}\right)^2\right]$, and the variance $\sigma^2 = p^{-1}\sum_{i=1}^p \left\{\E\left(X_i-a_i\right)^4-\left[\E\left(X_i-a_i\right)^2\right]^2\right\}$, implying $\E R^2 = p \mu$ and $\Var(R^2) = p \sigma^2$. Thus, by \eqref{DistanceExpectation} and \eqref{DistanceVariance}, the asymptotic distribution of $R^2 \overset{\cdot}{\sim} \N\left(\mu p, \sigma^2 p\right)$, or equivalently $p^{-1}R^2 \overset{\cdot}{\sim}  \N\left(\mu, p^{-1} \sigma^2\right)$.

Applying the Delta Method with $g(x) = \sqrt{x}$, we get the asymptotic distribution of $R$ as $p^{-1/2}R \overset{\cdot}{\sim} \N\left(\sqrt{\mu}, p^{-1}\sigma_0^2\right)$, where $\sigma_0^2 = \left[g^{\prime}(\mu)\right]^2 \sigma^2 = \frac{\sigma^2}{4\mu}$. Notate $\mu_p = \sqrt{p\mu}$ and $\sigma_p^2 = \sigma_0^2$, thus $R \overset{\cdot}{\sim} \N\left(r_p, \sigma_p^2\right)$.

Naturally denote that $R_{min} = \min_{j}||\X_j - \a|| = \min_{j}R_j$, where $R_j \overset{\cdot}{\sim} \N\left(\mu_p, \sigma_p^2\right)$ independently. It completes the proof.

\end{proof}
}

\begin{lemma}[Minimum of Gaussian Samples Asymptotics]
Let $Z_1,\dots,Z_n \overset{i.i.d.}{\sim} \N(0,1)$, and define
$$
    L_n = \min_{1\le i\le n} Z_i .
$$
As $n\to\infty$,
$$
    \E[L_n]
    =
    -\sqrt{2\log n}
    +
    \frac{\log\log n+\log(4\pi)-2\gamma}
    {2\sqrt{2\log n}}
    +
    \lowero\!\left(\frac{1}{\sqrt{\log n}}\right),
$$
where $\gamma$ is the Euler--Mascheroni constant. Moreover,
$$
    \Var(L_n)
    =
    \frac{\pi^2}{12\log n}
    +
    \lowero\!\left(\frac{1}{\log n}\right).
$$
In particular,
$$
    \E[L_n] = -\sqrt{2\log n}+\bigO\!\left(\frac{\log\log n}{\sqrt{\log n}}\right)
    \quad \text{and } \quad
    \Var(L_n)=\bigO\left(\frac{1}{\log n}\right).
$$
\label{MinimumNormal}
\end{lemma}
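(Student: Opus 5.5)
The statement to prove is Lemma \ref{MinimumNormal}, the classical extreme-value expansion for the minimum of $n$ i.i.d. standard normals. By symmetry, $L_n = -M_n$ with $M_n=\max_i Z_i$. It therefore suffices to show
\[
\E M_n = b_n + \gamma/a_n + \lowero(a_n^{-1})
\quad\text{and}\quad
\Var M_n = \pi^2/(6a_n^2)+\lowero(a_n^{-2}),
\]
with the standard Gumbel normalizing constants
\[
a_n=\sqrt{2\log n},\qquad
b_n = a_n - \frac{\log\log n+\log(4\pi)}{2a_n}.
\]
Once these hold, the substitution $\E L_n=-\E M_n$ gives exactly the displayed mean, since $-b_n-\gamma/a_n = -a_n + (\log\log n+\log 4\pi - 2\gamma)/(2a_n)$. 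The variance follows from $\pi^2/(6\cdot 2\log n)=\pi^2/(12\log n)$.

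\textbf{Step 1: Gumbel limit.} Using Mills' ratio, $1-\Phi(x)=\phi(x)x^{-1}(1+\bigO(x^{-2}))$, I will check that $n\{1-\Phi(b_n+x/a_n)\}\to e^{-x}$ for each fixed $x$. This is a direct expansion of the exponent $(b_n+x/a_n)^2/2$; the choice of $b_n$ is precisely what cancels the $\log n$, $\log\log n$ and $\log\sqrt{4\pi}$ terms. It then follows that $Y_n:=a_n(M_n-b_n)$ converges in distribution to a standard Gumbel variable $G$, which satisfies $\E G=\gamma$ and $\Var G=\pi^2/6$.

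\textbf{Step 2: convergence of moments (the main obstacle).} Convergence in distribution does not give convergence of the first two moments, so I will establish uniform integrability of $Y_n^2$ by bounding both tails.

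For the upper tail, $P(Y_n>x)\le n\{1-\Phi(b_n+x/a_n)\}$. For $x\ge 0$ Mills' inequality bounds this by $Ce^{-x}$ uniformly in $n$, because the cross term $x b_n/a_n\approx x$ dominates and the quadratic term only helps.

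For the lower tail, $P(Y_n<-x)=\Phi(b_n-x/a_n)^n\le\exp\{-n(1-\Phi(b_n-x/a_n))\}$. When $x\le a_n b_n/2$ this is roughly $\exp(-c e^{x})$. When $x$ is larger, so that $b_n - x/a_n$ falls below $b_n/2$, I will use the crude bound $P(M_n< b_n/2)\le \Phi(b_n/2)^n$, which is super-polynomially small in $n$. Together with $\E Z_1^2<\infty$, this controls that region's contribution to $\E Y_n^2$.

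These tail bounds give a dominating integrable envelope in $x$ via $\E Y_n^2=\int 2x\,P(|Y_n|>x)\,dx$. Uniform integrability of $Y_n^2$ then yields $\E Y_n\to\gamma$ and $\Var Y_n\to\pi^2/6$.

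\textbf{Step 3: conclude.} Unscaling gives $\E M_n=b_n+\gamma/a_n+\lowero(a_n^{-1})$ and $\Var M_n=a_n^{-2}(\pi^2/6+\lowero(1))$. Substituting as above yields both displays of the lemma. The "in particular" statement is immediate, since $(\log\log n+\log 4\pi-2\gamma)/(2a_n)=\bigO(\log\log n/\sqrt{\log n})$.

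The delicate point in Step 2 is the lower-tail region where $x$ is comparable to $a_n^2$. This is why I split at $b_n/2$ rather than trying to use a single Gumbel-type bound throughout.
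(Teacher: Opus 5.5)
Your proposal is correct and follows essentially the paper's route: reflect to the maximum via $L_n \overset{d}{=} -M_n$, use the Gumbel limit with the same centering $b_n$, and read off the Gumbel mean $\gamma$ and variance $\pi^2/6$. The paper scales by $b_n$ rather than $a_n=\sqrt{2\log n}$, which is asymptotically equivalent, and it passes from the distributional limit to the moment expansions with a bare ``Hence''; your uniform-integrability argument in Step 2 supplies that missing justification. One detail there: the region $\{M_n<b_n/2\}$ needs the direct expectation bound you allude to (e.g.\ $b_n-M_n\le b_n+|Z_1|$ there, plus independence), not the tail-integral envelope, since the constant bound $\Phi(b_n/2)^n$ is not integrable against $2x\,dx$.
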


\begin{proof}
Let $M_n:=\max_{1\le i\le n} Z_i$. By symmetry of normal distributions,
$$
    L_n \overset{d}{=} -M_n.
$$
Let
$$
    b_n
    =
    \sqrt{2\log n}
    -
    \frac{\log\log n+\log(4\pi)}
    {2\sqrt{2\log n}}.
$$
The classical extreme-value asymptotics for Gaussian maxima gives
$$
    b_n(M_n-b_n) \dconverge G,
$$
where $G$ has the standard Gumbel distribution. Hence
$$
    \E[M_n]
    =
    b_n+\frac{\gamma}{b_n}+o\!\left(\frac{1}{b_n}\right),
    \qquad
    \Var(M_n)
    =
    \frac{\pi^2}{6b_n^2}
    +
    \lowero\!\left(\frac{1}{b_n^2}\right).
$$
Since $b_n^2\sim 2\log n$ and $L_n\overset{d}{=}-M_n$, we obtain
$$
    \E[L_n]
    =
    -b_n-\frac{\gamma}{b_n}
    +
    \lowero\!\left(\frac{1}{b_n}\right),
$$
and
$$
    \Var(L_n)
    =
    \Var(M_n)
    =
    \frac{\pi^2}{12\log n}
    +
    \lowero\!\left(\frac{1}{\log n}\right).
$$
Substituting the expansion of $b_n$ gives the stated expectation estimate.
\end{proof}

\begin{lemma}[Layer Cake Representation]\label{lem:layer-cake}
Let $\A\subseteq[0,1]^p$ be measurable with $|\A|=v>0$, and let $\Y\sim\operatorname{Unif}(\A)$. If $S=\|Y\|,$ then
$$
\E S = \frac1v \int_0^{\sqrt p} \left| A\cap\{x:\|X\|>t\} \right|\,dt,
$$
and
$$
\E S^2 = \frac1v \int_0^{\sqrt p} 2t \left| A\cap\{x:\|X\|>t\} \right|\,dt.
$$
\end{lemma}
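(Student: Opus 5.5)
This is the standard layer-cake (tail-integral) formula for a nonnegative random variable, combined with the explicit density of the uniform law on $\A$. I will work directly from the definition of $\operatorname{Unif}(\A)$ rather than through any earlier result of the paper, since only elementary measure theory is involved. I read the integrand $\left|A\cap\{x:\|X\|>t\}\right|$ as the Lebesgue measure of $\{\bm{x}\in\A:\|\bm{x}\|>t\}$; the lower-case/upper-case mix in the statement is notational.

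\textbf{Step 1 (range of $S$).} Since $\Y\in\A\subseteq[0,1]^p$, every coordinate lies in $[0,1]$. Hence $0\le S=\|\Y\|\le\sqrt p$ almost surely.

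\textbf{Step 2 (pointwise identities).} For any $s\in[0,\sqrt p]$,
$$
s=\int_0^{\sqrt p}\mathbb{1}\{s>t\}\,dt
\qquad\text{and}\qquad
s^2=\int_0^{\sqrt p}2t\,\mathbb{1}\{s>t\}\,dt .
$$
Take expectations of both sides with $s=S$. The integrands are nonnegative and jointly measurable in $(t,\omega)$, so Tonelli's theorem lets us exchange expectation and the $dt$-integral. This gives
$$
\E S=\int_0^{\sqrt p}\Pr(S>t)\,dt
\qquad\text{and}\qquad
\E S^2=\int_0^{\sqrt p}2t\,\Pr(S>t)\,dt .
$$

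\textbf{Step 3 (identify the tail probability).} The law of $\Y$ has density $v^{-1}\mathbb{1}_{\A}$ with respect to Lebesgue measure. Therefore
$$
\Pr(S>t)=\frac1v\left|\{\bm{x}\in\A:\|\bm{x}\|>t\}\right| .
$$
The set on the right is measurable, being the intersection of $\A$ with the complement of a closed ball. Substituting this into the two formulas of Step 2 gives both displayed identities.

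\textbf{Main obstacle.} There is no real difficulty here; the only care needed is the justification of the interchange in Step 2. Because $S$ is bounded and everything is nonnegative, Tonelli applies with no integrability hypotheses. The upper limit $\sqrt p$ can replace $\infty$ because $\Pr(S>t)=0$ for $t\ge\sqrt p$.
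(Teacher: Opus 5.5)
Your proposal is correct and follows essentially the same route as the paper: write $\|\X\|$ and $\|\X\|^2$ pointwise as integrals over $t\in[0,\sqrt p]$ of (weighted) indicators of $\{t<\|\X\|\}$, then interchange the integrals (Tonelli/Fubini). The only cosmetic difference is that the paper integrates these identities directly over $\A$ against Lebesgue measure and divides by $v$, whereas you first pass through $\Pr(S>t)$ and then use the density $v^{-1}$ on $\A$.
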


\begin{proof}
For $\forall$ $\X\in[0,1]^p$,
$$
\|\X\| = \int_0^{\sqrt p} \mathone_{\{t<\|\X\|\}}\,dt. $$
Therefore
$$
\int_{\A} \|\X\|\,dx = \int_{\A} \int_{0}^{\sqrt p} \mathone_{\{t<\|\X\|\}}\,dt\,dx.
$$
By Fubini's theorem,
$$
\int_{\A} \|\X\|\,dx = \int_{0}^{\sqrt p} \left| \A\cap\{\X:\|\X\|>t\} \right|\,dt.
$$
Dividing by $v=|\A|$ gives the formula for $\E S$.

Similarly,
$$
\|\X\|^2 = \int_{0}^{\|\X\|}2t\,dt = \int_{0}^{\sqrt p} 2t\mathone_{\{t<\|\X\|\}}\,dt.
$$
Repeating the same argument yields the stated formula for $\E S^2$.
\end{proof}

\begin{lemma}[Tail Volume Bound]\label{lem:tail-volume}
Let $\A\subseteq[0,1]^p$ be measurable with $|\A|=v$. Then, for $\forall$ $0\le t\le\sqrt p$,
$$
\left| \A\cap\{\X:\|\X\|>t\} \right| \ge \bigl(v-F_p(t)\bigr)_+,
$$
and
$$
\left| \A\cap\{\X:\|\X\|>t\} \right| \le \min\{v,1-F_p(t)\}.
$$
\end{lemma}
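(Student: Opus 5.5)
The two bounds will be proved separately, both working directly with volumes of intersections with the sublevel set of the norm. Fix $0\le t\le\sqrt p$ and write $B_t=\{\X\in[0,1]^p:\|\X\|\le t\}$. By the definition of $F_p$ we have $|B_t|=F_p(t)$. The complementary set $[0,1]^p\setminus B_t=\{\X\in[0,1]^p:\|\X\|>t\}$ therefore has volume $1-F_p(t)$, since the cube has unit volume. Because $\A\subseteq[0,1]^p$, we can split $\A$ into the two disjoint measurable pieces $\A\cap B_t$ and $\A\cap\{\X:\|\X\|>t\}$. This gives the identity
$$
\left|\A\cap\{\X:\|\X\|>t\}\right| = v-\left|\A\cap B_t\right|,
$$
and both inequalities will be read off from it.

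\textbf{Lower bound.} Since $\A\cap B_t\subseteq B_t$, monotonicity of Lebesgue measure gives $|\A\cap B_t|\le F_p(t)$. Substituting into the identity yields $\left|\A\cap\{\X:\|\X\|>t\}\right|\ge v-F_p(t)$. A volume is always nonnegative, so the left side is also at least $0$, and together these give the bound by $(v-F_p(t))_+$.

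\textbf{Upper bound.} Two monotonicity arguments suffice. First, $\A\cap\{\X:\|\X\|>t\}\subseteq\A$, so its volume is at most $v$. Second, $\A\cap\{\X:\|\X\|>t\}\subseteq[0,1]^p\setminus B_t$, so its volume is at most $1-F_p(t)$. Taking the smaller of the two gives the bound by $\min\{v,1-F_p(t)\}$.

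The argument is elementary, and there is no real obstacle. The only points needing care are measurability and the use of the cube's unit volume. The sets $B_t$ are measurable because the norm is continuous. The fact $|[0,1]^p|=1$ is exactly what makes the complement of $B_t$ have volume $1-F_p(t)$. It is also worth noting that these bounds are the bathtub-principle extremes: the lower bound is attained when $\A$ fills $B_t$ first, and the upper bound when $\A$ sits in the outer region. This is what later lets Lemma \ref{lem:layer-cake} produce the quantities $m_p(v)$, $M_p(v)$ and $Q_p(v)$ used in Theorem \ref{Theorem:OrderBounds}.
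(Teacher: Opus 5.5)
Your proposal is correct and follows essentially the same argument as the paper: the lower bound comes from $|\A\cap\{\X:\|\X\|\le t\}|\le F_p(t)$ combined with nonnegativity of volume, and the upper bound from the two inclusions of $\A\cap\{\X:\|\X\|>t\}$ into $\A$ and into $\{\X\in[0,1]^p:\|\X\|>t\}$. Your closing bathtub-principle remark is accurate, though the proof does not need it.
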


\begin{proof}
As we have
$$
\left| \A\cap\{\X:\|\X\|\le t\} \right| \le F_p(t),
$$
so
$$
\left| \A\cap\{\X:\|\X\|>t\} \right| = v-\left|\A\cap\{\X:\|\X\|\le t\} \right| \ge v-F_p(t).
$$
Notice that the left-hand side is nonnegative, implying
$$
\left| \A\cap\{\X:\|\X\|>t\} \right| \ge \bigl(v-F_p(t)\bigr)_+.
$$

For the upper bound, clearly
$$
\left| \A\cap\{\X:\|\X\|>t\} \right| \le v,$$
and
$$
\A\cap\{\X:\|\X\|>t\} \subseteq \{\X\in[0,1]^p:\|\X\|>t\},$$
whose volume is $1-F_p(t)$. Hence
$$
\left| \A\cap\{\X:\|\X\|>t\} \right| \le \min\{v,1-F_p(t)\}.
$$
\end{proof}

\begin{lemma}[Expectation Lower Bound]\label{lem:lower-expectation-integral}
Let $\X\sim\operatorname{Unif}(\K)$, where $|\K|=v$. Then
$$
\E\|\X\|
\ge
m_p(v).
$$
\end{lemma}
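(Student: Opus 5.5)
We want to show $\E\|\X\| \ge m_p(v)$, where $m_p(v) = r_-(v) - v^{-1}\int_0^{r_-(v)} F_p(t)\,dt$. The plan is to combine the layer cake representation of Lemma \ref{lem:layer-cake} with the lower tail-volume bound of Lemma \ref{lem:tail-volume}. The intuition is that, among all subsets of $[0,1]^p$ of volume $v$, the expected norm of a uniform point is smallest when the set is the cube-truncated ball $\{\X\in[0,1]^p:\|\X\|\le r_-(v)\}$. The bound $m_p(v)$ should be exactly the expected norm for that extremal set.

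\emph{Step 1: lower bound on $\E\|\X\|$.} Apply Lemma \ref{lem:layer-cake} with $\A=\K$. This gives
$$
\E\|\X\| = \frac1v\int_0^{\sqrt p}\left|\K\cap\{\X:\|\X\|>t\}\right|dt .
$$
Then insert the pointwise lower bound $\left|\K\cap\{\|\X\|>t\}\right|\ge (v-F_p(t))_+$ from Lemma \ref{lem:tail-volume}, which yields
$$
\E\|\X\| \ge \frac1v\int_0^{\sqrt p}(v-F_p(t))_+\,dt .
$$

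\emph{Step 2: identify where the integrand vanishes.} $F_p$ is nondecreasing and continuous on $[0,\sqrt p]$, since it is the volume of a cube intersected with a ball whose radius varies continuously. Moreover $F_p(0)=0$ and $F_p(\sqrt p)=1$. Hence $r_-(v)=F_p^{-1}(v)$ satisfies $F_p(r_-(v))=v$. It follows that $F_p(t)<v$ for $t<r_-(v)$ (by the definition of the generalized inverse as an infimum) and $F_p(t)\ge v$ for $t\ge r_-(v)$. So the positive part vanishes on $[r_-(v),\sqrt p]$, and the integral reduces to
$$
\frac1v\int_0^{r_-(v)}\bigl(v-F_p(t)\bigr)dt = r_-(v)-\frac1v\int_0^{r_-(v)}F_p(t)\,dt = m_p(v).
$$

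\emph{Main obstacle.} The only real care needed is in Step 2: justifying continuity of $F_p$ and the behavior of the generalized inverse. Even without continuity the argument still goes through. We would only need $(v-F_p(t))_+ = v-F_p(t)$ for $t<r_-(v)$, which follows from the infimum definition, and that the integrand is $\ge 0$ beyond $r_-(v)$. Dropping that nonnegative tail only weakens the inequality in the right direction.

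Both Fubini in Lemma \ref{lem:layer-cake} and the set inclusion in Lemma \ref{lem:tail-volume} are already available, so the proof is essentially a two-line chain of inequalities.
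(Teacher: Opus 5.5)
Your proof is correct and follows essentially the same route as the paper: apply the layer-cake formula, insert the lower tail-volume bound $(v-F_p(t))_+$, and truncate the integral at $r_-(v)=F_p^{-1}(v)$ to obtain $m_p(v)$. Your remark that only the nonnegativity of the discarded tail on $[r_-(v),\sqrt p]$ is needed, so continuity of $F_p$ is not essential, makes the paper's one-line ``supported on $[0,r_-(v)]$'' justification slightly more careful.
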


\begin{proof}
By Lemmas~\ref{lem:layer-cake} and \ref{lem:tail-volume},
$$
\E\|\X\| \ge \frac1v \int_{0}^{\sqrt p} \bigl(v-F_p(t)\bigr)_+\,dt.
$$
By the definition of $r_-(v)=F_p^{-1}(v)$, the positive part is supported on $[0,r_-(v)]$. Hence
$$
\E\|\X\| \ge \frac1v \int_0^{r_-(v)} \bigl(v-F_p(t)\bigr)\,dt.
$$
Therefore
$$
\E\|\X\| \ge r_-(v) - \frac1v \int_0^{r_-(v)}F_p(t)\,dt = m_p(v).
$$
\end{proof}

\begin{lemma}[Expectation Upper Bound]\label{lem:upper-expectation-integral}
Let $\X\sim\operatorname{Unif}(\K)$, where $|\K|=v$. Then
$$
\E\|\X\| \le M_p(v).
$$
\end{lemma}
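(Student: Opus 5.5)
The plan mirrors the proof of Lemma~\ref{lem:lower-expectation-integral}, replacing the lower bound of Lemma~\ref{lem:tail-volume} by its upper bound. First, Lemma~\ref{lem:layer-cake} applied with $\A=\K$ and $\Y=\X$ gives the representation
$$
\E\|\X\| = \frac1v \int_0^{\sqrt p} \left|\K\cap\{\X:\|\X\|>t\}\right|\,dt .
$$
Second, Lemma~\ref{lem:tail-volume} bounds the integrand pointwise by $\min\{v,1-F_p(t)\}$, so
$$
\E\|\X\| \le \frac1v \int_0^{\sqrt p} \min\{v,1-F_p(t)\}\,dt .
$$

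The core step is to locate where the minimum switches between its two arguments. Since $F_p$ is nondecreasing and continuous in $t$ (it is the volume of the cube intersected with a ball of radius $t$), the map $t\mapsto 1-F_p(t)$ is nonincreasing. By the definition $r_+(v)=F_p^{-1}(1-v)$ as a generalized inverse, $F_p(t)<1-v$ for $t<r_+(v)$, which gives $1-F_p(t)>v$ there. For $t\ge r_+(v)$, right-continuity gives $F_p(t)\ge 1-v$, which gives $1-F_p(t)\le v$. I will therefore split the integral at $r_+(v)$. On $[0,r_+(v)]$ the integrand equals $v$, and dividing by $v$ contributes exactly $r_+(v)$. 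On $[r_+(v),\sqrt p]$ the integrand is $1-F_p(t)$, which contributes $\frac1v\int_{r_+(v)}^{\sqrt p}(1-F_p(t))\,dt$. The sum of these two pieces is $M_p(v)$ by definition, which proves the lemma.

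The only delicate point is the generalized-inverse bookkeeping. I must confirm that the minimum switches exactly at $r_+(v)$, and not merely up to a null set, and that the edge cases $v=1$ (where $r_+=0$) and $1-v$ attained on an interval cause no trouble. Continuity of $F_p$ settles this: any plateau of $F_p$ at level $1-v$ makes $\min\{v,1-F_p\}=v$ on that plateau, so both splittings give the same value. No inequality is lost beyond the one from Lemma~\ref{lem:tail-volume}. I do not expect any real obstacle; the argument is a direct dual of the lower-bound proof.
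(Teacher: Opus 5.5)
Your proposal is correct and follows the paper's own route: combine the layer-cake representation with the tail-volume upper bound $\min\{v,1-F_p(t)\}$, then split the integral at $r_+(v)$, where the two arguments of the minimum cross. Your generalized-inverse bookkeeping (continuity of $F_p$, the behaviour on either side of $r_+(v)$, and the edge case $v=1$) makes rigorous a step the paper only asserts from the definition of $r_+(v)$.
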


\begin{proof}
By Lemmas~\ref{lem:layer-cake} and \ref{lem:tail-volume},
$$
\E\|\X\| \le \frac1v \int_0^{\sqrt p} \min\{v,1-F_p(t)\}\,dt.
$$
By the definition of $r_+(v)=F_p^{-1}(1-v)$, we have $1-F_p(t)\ge v$, for $0\le t\le r_+(v)$, and $1-F_p(t)\le v$ for $t\ge r_+(v)$.
Therefore
$$
\E\|\X\| \le \frac1v \left[ \int_0^{r_+(v)}v\,dt + \int_{r_+(v)}^{\sqrt p} \bigl(1-F_p(t)\bigr)\,dt \right].
$$
Thus
$$
\E\|\X\| \le r_+(v) + \frac1v \int_{r_+(v)}^{\sqrt p} \bigl(1-F_p(t)\bigr)\,dt = M_p(v).
$$
\end{proof}

\begin{lemma}[Second Moment Upper Bound]\label{lem:upper-second-moment-integral}
Let $\X\sim\operatorname{Unif}(\K)$, where $|\K|=v$. Then
$$\E\|\X\|^2 \le Q_p(v).$$
\end{lemma}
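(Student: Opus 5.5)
The plan is to mirror the proof of Lemma~\ref{lem:upper-expectation-integral}, replacing the first-moment layer-cake formula with the second-moment one. I take $\K$ to be the set in the statement and $\X\sim\operatorname{Unif}(\K)$ with $|\K|=v$.

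First I apply the second part of Lemma~\ref{lem:layer-cake} with $\A=\K$ and $\Y=\X$. This writes
$$
\E\|\X\|^2=\frac1v\int_0^{\sqrt p}2t\,\bigl|\K\cap\{\X:\|\X\|>t\}\bigr|\,dt .
$$
Next I bound the integrand pointwise using the upper bound of Lemma~\ref{lem:tail-volume}, namely $\bigl|\K\cap\{\|\X\|>t\}\bigr|\le\min\{v,1-F_p(t)\}$. Since the weight $2t$ is nonnegative on $[0,\sqrt p]$, multiplying by it preserves the inequality. This gives
$$
\E\|\X\|^2\le\frac1v\int_0^{\sqrt p}2t\,\min\{v,1-F_p(t)\}\,dt .
$$

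Then I split the integral at $r_+(v)=F_p^{-1}(1-v)$. Because $F_p$ is nondecreasing and continuous, $1-F_p(t)\ge v$ on $[0,r_+(v)]$ and $1-F_p(t)\le v$ on $[r_+(v),\sqrt p]$, exactly as in Lemma~\ref{lem:upper-expectation-integral}. Continuity holds because the spheres $\{\|\X\|=t\}$ have measure zero, so $F_p$ has no jumps. On the first piece the minimum equals $v$, so that piece contributes $\frac1v\int_0^{r_+(v)}2tv\,dt=r_+(v)^2$. On the second piece the minimum equals $1-F_p(t)$, so that piece contributes $\frac2v\int_{r_+(v)}^{\sqrt p}t\bigl(1-F_p(t)\bigr)\,dt$. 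Adding the two pieces gives exactly $Q_p(v)$.

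The only step needing care is the generalized-inverse bookkeeping at $r_+(v)$. When $F_p$ is flat near the level $1-v$, the inequality $1-F_p(t)\ge v$ for $t<r_+(v)$ follows directly from the definition of the infimum. For $t\ge r_+(v)$, the inequality $1-F_p(t)\le v$ follows from continuity and monotonicity of $F_p$. Either way, replacing the minimum by the larger of its two arguments on a set of measure zero would only weaken the bound, so the inequality still holds.
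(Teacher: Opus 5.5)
Your proposal is correct and follows the paper's proof essentially line by line: you use the second-moment layer-cake formula from Lemma~\ref{lem:layer-cake}, then the pointwise bound $\min\{v,1-F_p(t)\}$ from Lemma~\ref{lem:tail-volume}, then the split at $r_+(v)$, which yields $r_+(v)^2$ plus the tail integral. Your added justification fills in a step the paper leaves implicit: you note that $F_p$ is continuous because spheres are null sets, so that $1-F_p(t)\ge v$ on $[0,r_+(v)]$ and $1-F_p(t)\le v$ on $[r_+(v),\sqrt p]$.
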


\begin{proof}
By Lemmas~\ref{lem:layer-cake} and \ref{lem:tail-volume},
$$
\E\|\X\|^2 \le \frac1v \int_0^{\sqrt p} 2t\min\{v,1-F_p(t)\}\,dt.
$$
Splitting the integral at $r_+(v)$ gives
$$
\E\|\X\|^2 \le \frac1v \left[ \int_0^{r_+(v)}2tv\,dt + \int_{r_+(v)}^{\sqrt p} 2t\bigl(1-F_p(t)\bigr)\,dt\right].
$$
The first integration equals $r_+(v)^2$, so
$$
\E\|\X\|^2 \le r_+(v)^2 + \frac2v \int_{r_+(v)}^{\sqrt p} t\bigl(1-F_p(t)\bigr)\,dt = Q_p(v).
$$
\end{proof}

\begin{lemma}[Expectation Closed-form Lower Bound]\label{lem:closed-lower-expectation}
For $\X\sim\operatorname{Unif}(\K)$,
$$
\E\|\X\| \ge \frac p{p+1} \left( \frac{2^p v}{\kappa_p} \right)^{1/p}.
$$
\end{lemma}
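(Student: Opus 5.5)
The plan is to combine the layer-cake lower bound of Lemma~\ref{lem:lower-expectation-integral} with an explicit upper bound on $F_p$. The extremal configuration is the one the right-hand side suggests. Here $\kappa_p$ is the volume of the unit Euclidean ball in $\mathbb{R}^p$, and the claimed bound is exactly $\E\|\X\|$ for $\X$ uniform on the intersection of the nonnegative orthant with a ball of radius $\rho$ about the origin. That set has volume $\kappa_p\rho^p/2^p$, so we choose
$$
\rho=\left(\frac{2^p v}{\kappa_p}\right)^{1/p},
$$
which gives it volume $v$. For this set $\E\|\X\|=\rho\,p/(p+1)$. This is a bathtub-principle statement: among sets of volume $v$ in the orthant, the quarter-ball hugging the origin minimizes the mean norm. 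We do not need to prove that principle separately, because it falls out of the earlier lemmas.

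\textbf{Step 1: bound $F_p$.} Since $[0,1]^p$ is contained in the nonnegative orthant, we have
$$
F_p(t)\le\left|\{\X\in[0,\infty)^p:\|\X\|\le t\}\right|=\frac{\kappa_p t^p}{2^p}.
$$

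\textbf{Step 2: feed this into the tail-volume bound.} By Lemmas~\ref{lem:layer-cake} and \ref{lem:tail-volume} (equivalently, the first display in the proof of Lemma~\ref{lem:lower-expectation-integral}),
$$
\E\|\X\|\ge\frac1v\int_0^{\sqrt p}\bigl(v-F_p(t)\bigr)_+\,dt\ge\frac1v\int_0^{\sqrt p}\Bigl(v-\frac{\kappa_p t^p}{2^p}\Bigr)_+\,dt .
$$
The integrand is positive exactly for $t<\rho$.

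\textbf{Step 3: check $\rho\le\sqrt p$.} This is the only point needing care, since otherwise the truncation at $\sqrt p$ would cut off part of the integral. The orthant-ball of radius $\sqrt p$ contains the whole cube $[0,1]^p$, so its volume $\kappa_p p^{p/2}/2^p$ is at least $1\ge v$. Monotonicity in the radius then gives $\rho\le\sqrt p$, so the integral effectively runs over $[0,\rho]$.

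\textbf{Step 4: compute.}
$$
\frac1v\int_0^\rho\Bigl(v-\frac{\kappa_p t^p}{2^p}\Bigr)dt=\rho-\frac{\kappa_p\rho^{p+1}}{2^p(p+1)v}=\rho-\frac{\rho}{p+1}=\frac{p}{p+1}\rho,
$$
where the middle equality uses $\kappa_p\rho^p/2^p=v$. This is the claimed bound.

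I expect no serious obstacle beyond Step 3. The only other thing to keep straight is that $F_p$ is bounded above, which is the direction that makes $(v-F_p)_+$ larger and so preserves the inequality.
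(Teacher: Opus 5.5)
Your proposal is correct and follows the paper's proof essentially step for step. You bound $F_p(t)\le \kappa_p t^p/2^p$ via the nonnegative orthant, insert this into the layer-cake/tail-volume lower bound, and integrate up to $\rho=(2^pv/\kappa_p)^{1/p}$; your Step 3 check that $\rho\le\sqrt p$ usefully supplies a justification the paper omits when it moves the upper limit of integration from $\sqrt p$ to $(v/c_p)^{1/p}$.
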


\begin{proof}
By definition, 
$
F_p(t) = \left| \{\X\in[0,1]^p:\|\X\|\le t\} \right| \le \frac{\kappa_p}{2^p}t^p,\forall t\geq 0.
$
Let $c_p=\kappa_p/2^p$. By Lemma~\ref{lem:lower-expectation-integral} and $F_p(t)\le c_pt^p$, we have
\begin{eqnarray*}
\E\|\X\| &\ge& \frac1v \int_0^{\sqrt p} \bigl(v-F_p(t)\bigr)_+\,dt\\
&\geq& \frac1v \int_0^{(v/c_p)^{1/p}} \bigl(v-c_pt^p\bigr)\,dt\\
&=& \frac p{p+1} \left(\frac v{c_p}\right)^{1/p}.
\end{eqnarray*}
Since $c_p=\kappa_p/2^p$, we have
$$
\E\|\X\|
\ge
\frac {2p}{p+1}
\left(
\frac{v}{\kappa_p}
\right)^{1/p},
$$
which completes the proof.
\end{proof}

\begin{lemma}[Simplex Volume Estimate]\label{lem:simplex-estimate}
For $\X = (X_1, X_2, \cdots, X_p) \in[0,1]^p$, define
$$
S(\X)=\sum_{i=1}^p(1-X_i).
$$
Then
$$
\E S(\X)
\ge
\frac p{p+1}(p!v)^{1/p}.
$$
\end{lemma}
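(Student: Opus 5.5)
The plan is to repeat the argument of Lemma~\ref{lem:closed-lower-expectation}, with the Euclidean norm replaced by the $\ell_1$ norm measured from the opposite corner $\bm{1}=(1,\dots,1)$ of the cube. As in the preceding lemmas, I read the statement with $\X\sim\operatorname{Unif}(\K)$, $\K\subseteq[0,1]^p$, $|\K|=v$.

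\textbf{Step 1 (reflection).} Set $\Y=\bm{1}-\X$. The map $\X\mapsto\bm{1}-\X$ is a measure-preserving bijection of $[0,1]^p$. Hence $\Y$ is uniform on $\K'=\bm{1}-\K\subseteq[0,1]^p$, which again has volume $v$, and
$$S(\X)=\sum_{i=1}^p Y_i=\|\Y\|_1\in[0,p].$$

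\textbf{Step 2 (layer cake with a volume deficit).} I use the layer-cake identity of Lemma~\ref{lem:layer-cake}. Its proof uses only that the function is nonnegative and bounded, so it holds verbatim for $\|\cdot\|_1$ on $[0,p]$:
$$\E S=\frac1v\int_0^{p}\bigl|\K'\cap\{\bm{y}:\|\bm{y}\|_1>t\}\bigr|\,dt .$$
Let $G_p(t)=|\{\bm{y}\in[0,1]^p:\|\bm{y}\|_1\le t\}|$. Exactly as in Lemma~\ref{lem:tail-volume}, the integrand is at least $(v-G_p(t))_+$. The region $\{\bm{y}\ge 0:\sum_i y_i\le t\}$ is a scaled standard simplex, so $G_p(t)\le t^p/p!$ for all $t\ge0$. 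Therefore
$$\E S\ge\frac1v\int_0^{p}\Bigl(v-\frac{t^p}{p!}\Bigr)_+dt .$$

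\textbf{Step 3 (explicit integration).} Put $t_0=(p!\,v)^{1/p}$. Since $v\le1$ and $p!\le p^p$, we have $t_0\le p$, so the positive part is supported on $[0,t_0]$ and lies inside the range of integration. Direct computation gives
$$\int_0^{t_0}\Bigl(v-\frac{t^p}{p!}\Bigr)dt=vt_0-\frac{t_0^{p+1}}{(p+1)\,p!}=vt_0-\frac{vt_0}{p+1}=\frac{p}{p+1}\,v\,t_0 .$$
Dividing by $v$ yields $\E S(\X)\ge\frac{p}{p+1}(p!\,v)^{1/p}$.

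There is no real obstacle here. The only points needing care are that the layer-cake and tail-volume lemmas transfer from $\|\cdot\|_2$ to $\|\cdot\|_1$ unchanged, and that $t_0\le p$, which is what keeps the truncation at $t_0$ inside the integration range.
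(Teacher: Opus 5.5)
Your proposal is correct and matches the paper's proof step for step: you reflect via $\bm{1}-\X$, apply the layer-cake and tail-volume bound to the $\ell_1$ sum with $G_p(t)\le t^p/p!$, and integrate explicitly up to $(p!\,v)^{1/p}$. Your check that $(p!\,v)^{1/p}\le p$, so the truncation stays inside $[0,p]$, is a detail the paper leaves implicit.
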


\begin{proof}
Let $u_i=1-X_i$ and $\u = (u_1, u_2, \cdots, u_p)$. The mapping of $\X\mapsto \u=\mathbf{1}-\X$ preserves volume. Therefore, $\u=\mathbf{1}-\X$ is uniformly distributed on a measurable subset of $[0,1]^p$ with volume $v$.

Define $G_p(t) = \left| \left\{\u\in[0,1]^p: u_1+\cdots+u_p\le t \right\} \right|.$
The set $\{\u\in[0,1]^p:u_1+\cdots+u_p\le t\}$ is contained in the simplex $\{\u\in\mathbb{R}_+^p:u_1+\cdots+u_p\le t\},$
whose volume is $t^p/p!$. Hence
$$
G_p(t)\le \frac{t^p}{p!}.
$$

Applying the layer-cake representation in Lemma~\ref{lem:lower-expectation-integral} to the random variable
$$
S(\u)=u_1+\cdots+u_p,
$$
obviously we have
$$
\E S(\X) = \E S(\u) \ge \frac1v \int_0^p \bigl(v-G_p(t)\bigr)_+\,dt.
$$
Using $G_p(t)\le t^p/p!$ gives
$$
\E S(\X) \ge \frac1v \int_0^{(p!v)^{1/p}} \left(v-\frac{t^p}{p!}\right)\,dt.
$$
Computing the integral yields
$$
\E S(\X) \ge \frac p{p+1}(p!v)^{1/p}.
$$
\end{proof}

\begin{lemma}[Expectation Closed-form Upper Bound]\label{lem:closed-upper-expectation}
For $\X = (X_1, X_2, \cdots, X_p) \sim\operatorname{Unif}(\K)$,
$$
\E\|\X\| \le \sqrt p - \frac{p}{2(p+1)\sqrt p} (p!v)^{1/p}.
$$
\end{lemma}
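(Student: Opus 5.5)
The bound will follow from the triangle inequality combined with the simplex estimate of Lemma~\ref{lem:simplex-estimate}. Since $\X\in\K\subseteq[0,1]^p$, every coordinate satisfies $0\le X_i\le 1$. The plan is to compare $\|\X\|$ with the far corner $\mathbf{1}=(1,\ldots,1)$, whose norm is $\|\mathbf{1}\|=\sqrt p$. The triangle inequality gives
$$
\|\X\|\le \|\mathbf{1}\|-\|\mathbf{1}-\X\|+\bigl(\|\X\|+\|\mathbf{1}-\X\|-\|\mathbf{1}\|\bigr),
$$
but the bracketed slack has the wrong sign, so this route is useless. A direct coordinatewise inequality is needed instead.

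Put $u_i=1-X_i\in[0,1]$. Then
$$
\|\X\|^2=\sum_{i=1}^p(1-u_i)^2=p-2\sum_i u_i+\sum_i u_i^2\le p-\sum_i u_i ,
$$
where the last step uses $u_i^2\le u_i$ on $[0,1]$. Writing $S(\X)=\sum_i u_i\in[0,p]$, this is $\|\X\|^2\le p-S(\X)$. Concavity of the square root then gives
$$
\sqrt{p-S}\le \sqrt p-\frac{S}{2\sqrt p}.
$$
Hence, pointwise,
$$
\|\X\|\le \sqrt p-\frac{S(\X)}{2\sqrt p}.
$$

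Taking expectations, and noting that the bound is linear in $S$, we get
$$
\E\|\X\|\le \sqrt p-\frac{\E S(\X)}{2\sqrt p}.
$$
Inserting the lower bound $\E S(\X)\ge \frac{p}{p+1}(p!v)^{1/p}$ from Lemma~\ref{lem:simplex-estimate} yields
$$
\E\|\X\|\le \sqrt p-\frac{p}{2(p+1)\sqrt p}(p!v)^{1/p},
$$
which is exactly the claimed bound.

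The only delicate point is the passage from $\|\X\|^2$ to a quantity linear in $S$. The inequality $u_i^2\le u_i$ relies on $\K\subseteq[0,1]^p$; this discards the quadratic term and is what makes Lemma~\ref{lem:simplex-estimate}, a statement about the first moment of $S$, directly applicable. The square-root tangent-line step is the other check, and it is valid because $\sqrt{\cdot}$ is concave and $p-S\ge 0$. No use of Jensen's inequality on $\E\|\X\|$ itself is needed, since the bound is established pointwise before taking expectations.
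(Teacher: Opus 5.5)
Your proof is correct and is essentially the paper's argument: both establish the pointwise inequality $\|\X\|\le\sqrt p-S(\X)/(2\sqrt p)$ from $p-\|\X\|^2\ge S(\X)$ (your $u_i^2\le u_i$ is the paper's $(1-X_i)(1+X_i)\ge 1-X_i$), and both then apply Lemma~\ref{lem:simplex-estimate}. The only cosmetic difference is that you linearize the square root by the concavity tangent line at $p$, whereas the paper writes $\sqrt p-\|\X\|=(p-\|\X\|^2)/(\sqrt p+\|\X\|)$ and uses $\sqrt p+\|\X\|\le 2\sqrt p$, which gives the same inequality.
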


\begin{proof}
For $\X\in[0,1]^p$,
$$
\sqrt p-\|\X\| = \frac{p-\|X\|^2}{\sqrt p+\|X\|},
$$
because 
$$
p-\|\X\|^2 = \sum_{i=1}^p(1-X_i^2) = \sum_{i=1}^p(1-X_i)(1+X_i) \ge \sum_{i=1}^p(1- X_i) = S(X),
$$
Note that 
$$
\sqrt p+\|\X\|\le 2\sqrt p,
$$
which implies that
$$
\sqrt p-\|\X\| \ge \frac{S(\X)}{2\sqrt p} \quad \text{ and }
\|\X\| \le \sqrt p- \frac{S(x)}{2\sqrt p}.
$$
Taking expectations on both sides gives
$$
\E\|\X\| \le \sqrt p- \frac1{2\sqrt p}\E S(\X).
$$
Therefore
$$
\E\|\X\| \le \sqrt p - \frac{p}{2(p+1)\sqrt p} (p!v)^{1/p},
$$
by Lemma~\ref{lem:simplex-estimate} as $
\E S(\X) \ge p(p!v)^{1/p}/(p+1).$ It completes the proof.
\end{proof}

\begin{lemma}[Second Moment Closed-form Upper Bound]\label{lem:closed-upper-second-moment}
For $\X = (X_1, X_2, \cdots, X_p) \sim\operatorname{Unif}(\K)$,
$$
\E\|\X\|^2
\le
p-
\frac p{p+1}(p!v)^{1/p}.
$$
\end{lemma}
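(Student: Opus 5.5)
We bound $\E\|\X\|^2$ directly, using the pointwise inequality already used in the proof of Lemma~\ref{lem:closed-upper-expectation}, and then apply Lemma~\ref{lem:simplex-estimate} to the resulting linear functional. There is no need to go through the layer-cake bound $Q_p(v)$ of Lemma~\ref{lem:upper-second-moment-integral}.

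\textbf{Step 1: a pointwise bound.} For every $\X\in[0,1]^p$,
$$
p-\|\X\|^2=\sum_{i=1}^p(1-X_i)(1+X_i)\ge \sum_{i=1}^p(1-X_i)=S(\X),
$$
since $1+X_i\ge 1$ and $1-X_i\ge 0$ for each coordinate. Rearranging gives
$$
\|\X\|^2\le p-S(\X)\quad\text{for all }\X\in\K\subseteq[0,1]^p.
$$

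\textbf{Step 2: take expectations.} Let $\X\sim\operatorname{Unif}(\K)$. Linearity of expectation gives $\E\|\X\|^2\le p-\E S(\X)$.

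\textbf{Step 3: apply the simplex estimate.} Lemma~\ref{lem:simplex-estimate} gives $\E S(\X)\ge \frac{p}{p+1}(p!v)^{1/p}$. Substituting this yields
$$
\E\|\X\|^2\le p-\frac{p}{p+1}(p!v)^{1/p},
$$
which is the claim.

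\textbf{Where care is needed.} The only delicate point is Lemma~\ref{lem:simplex-estimate} itself, namely that the reflection $\X\mapsto\mathbf 1-\X$ preserves uniformity and volume, and that the cube-simplex intersection has volume at most $t^p/p!$. Both are taken as established. The pointwise inequality in Step~1 is elementary.

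One remark: the factor $1+X_i$ could be kept to obtain a sharper bound. We drop it deliberately, both to match the stated form and to stay consistent with the proof of Lemma~\ref{lem:closed-upper-expectation}.
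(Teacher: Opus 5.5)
Your proof is correct and follows the paper's own proof step for step: the paper also uses $1-X_i^2=(1-X_i)(1+X_i)\ge 1-X_i$ to get $\|\X\|^2\le p-S(\X)$ pointwise, takes expectations, and then applies Lemma~\ref{lem:simplex-estimate}. Like you, it does not go through the $Q_p(v)$ layer-cake bound.
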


\begin{proof}
For $\forall$ $\X\in[0,1]^p$,
$$
\|\X\|^2 = \sum_{i=1}^p X_i^2 = p- \sum_{i=1}^p(1-X_i^2).
$$
Since
$$
1-X_i^2 =(1-X_i)(1+X_i) \ge 1-X_i,
$$
we have
$$
\|\X\|^2 \le p- \sum_{i=1}^p(1-X_i)= p-S(\X).
$$
Taking expectation on both sides yields
$$
\E\|\X\|^2 \le p- \E S(\X).
$$
Using Lemma~\ref{lem:simplex-estimate},
$$
\E S(\X) \ge \frac p{p+1}(p!v)^{1/p}.
$$
Thus
$$
\E\|\X\|^2 \le p- \frac p{p+1}(p!v)^{1/p},
$$
which completes the proof.
\end{proof}

\begin{lemma}\label{lem:asymptotics-kappa-factorial}
As $p\to\infty$, $
\kappa_p^{1/p}\asymp p^{-1/2}\text{ and }
(p!)^{1/p}\asymp p.
$
.
\end{lemma}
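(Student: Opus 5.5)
We prove the two asymptotic statements separately; each reduces to a classical one‑dimensional asymptotic formula.

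\textbf{First claim.} We use the closed form $\kappa_p=\pi^{p/2}/\Gamma(p/2+1)$, the volume of the unit ball in $\mathbb{R}^p$. Taking logarithms gives
$$
\frac1p\log\kappa_p=\frac12\log\pi-\frac1p\log\Gamma\!\left(\frac p2+1\right).
$$
Stirling's formula $\log\Gamma(x+1)=x\log x-x+\bigO(\log x)$, applied with $x=p/2$, yields
$$
\frac1p\log\Gamma\!\left(\frac p2+1\right)=\frac12\log\frac p2-\frac12+\bigO\!\left(\frac{\log p}{p}\right).
$$
Hence
$$
\frac1p\log\kappa_p=-\frac12\log p+\frac12\log(2\pi e)+\lowero(1).
$$
Exponentiating gives $\kappa_p^{1/p}=\sqrt{2\pi e/p}\,(1+\lowero(1))$. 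Since the multiplicative constant $\sqrt{2\pi e}$ lies strictly between $0$ and $\infty$, this shows $\kappa_p^{1/p}\asymp p^{-1/2}$ for all sufficiently large $p$.

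\textbf{Second claim.} Stirling's formula for factorials, $p!=\sqrt{2\pi p}\,(p/e)^p(1+\lowero(1))$, gives
$$
(p!)^{1/p}=\frac pe\,(2\pi p)^{1/(2p)}(1+\lowero(1)).
$$
The factor $(2\pi p)^{1/(2p)}=\exp\bigl(\log(2\pi p)/(2p)\bigr)\to1$. Therefore $(p!)^{1/p}\sim p/e$, which gives $(p!)^{1/p}\asymp p$.

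\textbf{Avoiding Stirling.} If a self-contained argument is preferred, both factorial bounds follow from elementary estimates. The bound $p!\le p^p$ gives $(p!)^{1/p}\le p$. The bound $p!\ge(p/e)^p$ comes from $e^p=\sum_k p^k/k!\ge p^p/p!$, and gives $(p!)^{1/p}\ge p/e$.

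For the Gamma factor with non-integer $p/2$, we bracket $\Gamma(p/2+1)$ between $\lfloor p/2\rfloor!$ and $\lceil p/2\rceil!$, using monotonicity of $\Gamma$ on $[2,\infty)$. The factorial bounds then give $\Gamma(p/2+1)^{1/p}\asymp\sqrt p$. Combined with $\kappa_p=\pi^{p/2}/\Gamma(p/2+1)$, this yields $\kappa_p^{1/p}\asymp p^{-1/2}$.

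The only mildly delicate step is this handling of $\Gamma$ at half-integer arguments. Monotonicity of $\Gamma$ together with the fact that $p/(2\lfloor p/2\rfloor)\to1$ resolves it.
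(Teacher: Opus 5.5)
Your proof is correct and takes essentially the same route as the paper: apply Stirling's formula to $\Gamma(p/2+1)$ and to $p!$, then take $p$-th roots. Your logarithmic version is more careful and also identifies the limiting constants $\sqrt{2\pi e}$ and $1/e$; the Stirling-free bracketing you add at the end is a valid extra that the paper does not include.
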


\begin{proof}
By Stirling's formula,
$$
\Gamma\left(\frac p2+1\right)
\asymp
\left(\frac p2\right)^{p/2}e^{-p/2}\sqrt p.
$$
Hence
$$
\kappa_p
=
\frac{\pi^{p/2}}{\Gamma(p/2+1)}
\Rightarrow \kappa_p^{1/p}\asymp p^{-1/2}.
$$
Similarly, we have
$$
p!\asymp \sqrt p\left(\frac pe\right)^p
\Rightarrow
(p!)^{1/p}\asymp p.
$$
\end{proof}

%%=============================================%%
%% For submissions to Nature Portfolio Journals %%
%% please use the heading ``Extended Data''.   %%
%%=============================================%%

%%=============================================================%%
%% Sample for another appendix section			       %%
%%=============================================================%%

%% \section{Example of another appendix section}\label{secA2}%
%% Appendices may be used for helpful, supporting or essential material that would otherwise 
%% clutter, break up or be distracting to the text. Appendices can consist of sections, figures, 
%% tables and equations etc.

\begin{lemma}
Let $\K\subseteq[0,1]^p$ with $|\K|=v$, where $0<v\le 1$, and let $\X\sim\operatorname{Unif}(\K)$. Then
$$
m_p(v) \le \E\|\X\| \le M_p(v).
$$
Moreover,
$$
\Var(\|\X\|) \le Q_p(v)-m_p(v)^2.
$$
Consequently,
$$
\Var(\|\X\|) \le \min\left\{ \frac p4, \,Q_p(v)-m_p(v)^2
\right\}.
$$
\label{Theorem:IntegralBound}
\end{lemma}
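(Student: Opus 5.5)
The two-sided bound on $\E\|\X\|$ needs no new argument: it is the conjunction of Lemma~\ref{lem:lower-expectation-integral}, which gives $\E\|\X\|\ge m_p(v)$, and Lemma~\ref{lem:upper-expectation-integral}, which gives $\E\|\X\|\le M_p(v)$. Both hold for any measurable $\K\subseteq[0,1]^p$ of volume $v$, since they rest only on the layer-cake formula (Lemma~\ref{lem:layer-cake}) and the tail-volume bounds (Lemma~\ref{lem:tail-volume}).

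For the variance bound, write $\Var(\|\X\|)=\E\|\X\|^2-(\E\|\X\|)^2$ and treat the two terms separately. Lemma~\ref{lem:upper-second-moment-integral} gives $\E\|\X\|^2\le Q_p(v)$. For the subtracted term, use $\E\|\X\|\ge m_p(v)$. Squaring preserves this inequality once $m_p(v)\ge 0$, and that follows from the proof of Lemma~\ref{lem:lower-expectation-integral}: $m_p(v)=\frac1v\int_0^{r_-(v)}(v-F_p(t))\,dt$, and the integrand is nonnegative on $[0,r_-(v)]$ because $F_p$ is nondecreasing and $F_p(t)<v$ for $t<r_-(v)$. Hence $(\E\|\X\|)^2\ge m_p(v)^2$, and so $\Var(\|\X\|)\le Q_p(v)-m_p(v)^2$.

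For the final ``consequently'' statement, it remains to show $\Var(\|\X\|)\le p/4$. Since $\|\X\|$ is supported in $[0,\sqrt p]$, Popoviciu's inequality gives $\Var(\|\X\|)\le(\sqrt p-0)^2/4=p/4$. If we prefer not to cite it, the direct argument is $\Var(Y)\le\E(Y-c)^2$ with $c=\sqrt p/2$, combined with $|Y-c|\le\sqrt p/2$. Taking the minimum of the two bounds gives the claim.

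The only step that needs any care is the sign check $m_p(v)\ge 0$ before squaring the lower bound. It is also the step where the generalized-inverse definition of $r_-(v)$ matters: we need $F_p(t)<v$ strictly below $r_-(v)$, which holds by the infimum definition. Everything else is bookkeeping of earlier lemmas.
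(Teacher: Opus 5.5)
Your proposal is correct and follows the paper's proof essentially step for step. The two bounds on $\E\|\X\|$ come from Lemmas~\ref{lem:lower-expectation-integral} and \ref{lem:upper-expectation-integral}, the variance bound combines $\E\|\X\|^2\le Q_p(v)$ with $\E\|\X\|\ge m_p(v)$, and Popoviciu's inequality supplies the $p/4$ cap. Your explicit check that $m_p(v)=\frac1v\int_0^{r_-(v)}(v-F_p(t))\,dt\ge 0$ before squaring is left implicit in the paper, and it is exactly what makes that step rigorous.
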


\begin{proof}
The lower bound and upper bound of $\E\|\X\|$ is given by Lemma~\ref{lem:lower-expectation-integral} and Lemma~\ref{lem:upper-expectation-integral} directly.

For the variance, we use the second order moment of
$$
\Var(\|\X\|) = \E\|\X\|^2 - \bigl(\E\|\X\|\bigr)^2.
$$
By Lemma~\ref{lem:upper-second-moment-integral},
$$
\E\|\X\|^2\le Q_p(v),
$$
and by Lemma~\ref{lem:lower-expectation-integral},
$$
\E\|\X\|\ge m_p(v).
$$
Hence
$$
\Var(\|\X\|) \le Q_p(v)-m_p(v)^2.
$$

Since $0\le \|\X\|\le \sqrt p$, Popoviciu's variance inequality shows
$$
\Var(\|\X\|) \le \frac{(\sqrt p-0)^2}{4} = \frac p4.
$$
Combining the two bounds,
$$
\Var(\|\X\|) \le \min\left\{ \frac p4, \,Q_p(v)-m_p(v)^2 \right\}.
$$
\end{proof}

\begin{lemma}
\label{Theorem:CloseFormBound}
Let $\K\subseteq[0,1]^p$ be convex with $|\K|=v$, where $0<v\le1$, and let $\X\sim\operatorname{Unif}(\K)$. Then
$$
\frac p{p+1} \left( \frac{2^p v}{\kappa_p} \right)^{1/p} \le \E\|\X\| \le \sqrt p - \frac{p}{2(p+1)\sqrt p} (p!v)^{1/p}.
$$
Moreover,
$$
\Var(\|\X\|) \le p - \frac p{p+1}(p!v)^{1/p} - \left[ \frac p{p+1} \left( \frac{2^p v}{\kappa_p} \right)^{1/p} \right]^2.
$$
Consequently,
$$
\Var(\|\X\|) \le \min\left\{ \frac p4, \, p - \frac p{p+1}(p!v)^{1/p} - \left[ \frac p{p+1} \left( \frac{2^p v}{\kappa_p} \right)^{1/p} \right]^2\right\},
$$
where $
\kappa_p = \pi^{p/2}/\Gamma(p/2+1)$
denotes the volume of the unit ball in $\mathbb{R}^p$.
\end{lemma}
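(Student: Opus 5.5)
The plan is to assemble the closed-form bounds already proved in the appendix; no new geometric estimate is needed. The two-sided bound on $\E\|\X\|$ follows by quoting two earlier results. The lower bound
$$
\frac p{p+1}\left(\frac{2^p v}{\kappa_p}\right)^{1/p}\le \E\|\X\|
$$
is Lemma~\ref{lem:closed-lower-expectation}. That lemma uses only $F_p(t)\le \kappa_p t^p/2^p$, i.e.\ the cube-ball intersection is at most the orthant portion of the ball. The upper bound
$$
\E\|\X\| \le \sqrt p - \frac{p}{2(p+1)\sqrt p}(p!v)^{1/p}
$$
is Lemma~\ref{lem:closed-upper-expectation}. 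That lemma compares $\sqrt p-\|\X\|$ with $S(\X)=\sum_i(1-X_i)$ and then applies Lemma~\ref{lem:simplex-estimate}. Convexity of $\K$ is not actually used in either step; both lemmas hold for any measurable $\K\subseteq[0,1]^p$ of volume $v$. I would remark on this rather than try to exploit convexity.

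For the variance I would write $\Var(\|\X\|)=\E\|\X\|^2-(\E\|\X\|)^2$ and bound each term separately. For the first term, Lemma~\ref{lem:closed-upper-second-moment} gives
$$
\E\|\X\|^2\le p-\frac p{p+1}(p!v)^{1/p}.
$$
For the second term, the closed-form lower bound above is nonnegative, so squaring it preserves the inequality:
$$
(\E\|\X\|)^2\ge \left[\frac p{p+1}\left(\frac{2^pv}{\kappa_p}\right)^{1/p}\right]^2.
$$
Subtracting the second bound from the first gives the stated variance bound. This mirrors the proof of Lemma~\ref{Theorem:IntegralBound}, with $Q_p$ and $m_p$ replaced by their closed-form surrogates.

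Finally, $0\le\|\X\|\le\sqrt p$ holds on $[0,1]^p$, so Popoviciu's inequality gives $\Var(\|\X\|)\le p/4$, as in Lemma~\ref{Theorem:IntegralBound}. Taking the minimum of this and the previous bound yields the ``consequently'' statement.

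I expect no real obstacle. The only point needing care is a constant in the cited lower bound. The final display in the proof of Lemma~\ref{lem:closed-lower-expectation} is written as $\frac{2p}{p+1}(v/\kappa_p)^{1/p}$, whereas the correct value is $\frac{p}{p+1}(2^pv/\kappa_p)^{1/p}=\frac{2p}{p+1}(v/\kappa_p)^{1/p}\cdot 2^{0}$. More precisely, $(2^p)^{1/p}=2$, so the two expressions agree exactly. I would still check this identity explicitly so that the form used here matches the statement.
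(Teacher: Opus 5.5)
Your proposal is correct and follows the paper's proof essentially verbatim. The two-sided bound on $\E\|\X\|$ is quoted from Lemmas~\ref{lem:closed-lower-expectation} and \ref{lem:closed-upper-expectation}, the variance bound combines Lemma~\ref{lem:closed-upper-second-moment} with the squared lower bound on $\E\|\X\|$, and Popoviciu's inequality supplies the $p/4$ term. Your extra observations are also accurate: squaring the lower bound is legitimate because it is nonnegative (the paper uses this silently), convexity of $\K$ is never used, and $\frac{p}{p+1}(2^pv/\kappa_p)^{1/p}=\frac{2p}{p+1}(v/\kappa_p)^{1/p}$, so there is no constant discrepancy to repair.
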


\begin{proof}
The lower bound for the expectation follows from Lemma~\ref{lem:closed-lower-expectation}, and the upper bound for the expectation follows from Lemma~\ref{lem:closed-upper-expectation}.

For the variance, we again use
$$
\Var(\|\X\|) = \E\|\X\|^2 - \bigl(\E\|\X\|\bigr)^2.
$$
By Lemma~\ref{lem:closed-upper-second-moment},
$$
\E\|\X\|^2 \le p- \frac p{p+1}(p!v)^{1/p}.
$$
By Lemma~\ref{lem:closed-lower-expectation},
$$
\E\|\X\| \ge \frac p{p+1} \left( \frac{2^p v}{\kappa_p} \right)^{1/p}. 
$$
Therefore
$$
\Var(\|\X\|) \le p - \frac p{p+1}(p!v)^{1/p} - \left[ \frac p{p+1} \left( \frac{2^p v}{\kappa_p}\right)^{1/p} \right]^2.
$$
Combining this bound of Popoviciu's inequality
$$
\Var(\|\X\|)\le \frac p4
$$
gives the stated minimum bound.
\end{proof}

\subsection{Proofs of Theorems} \label{App:thm}
\subsubsection*{Proof of Theorem \ref{MinimumDistance}}
\begin{proof}
By Lemma \ref{AsympNormalSample}, $R_i \overset{i.i.d}{\sim} \N\left(\mu_p, \sigma_p^2\right)$ regardless of sample size $n$. By standardization, $\sigma_p^{-1}(R_i - \mu_p) \overset{i.i.d}{\sim} \N\left(0, 1\right)$. Let $R_{min} = min\{R_1, R_2, \cdots, R_n\}$, and it is obvious that $R_{min} = \mu_p+\sigma_p \E Z_{(1)}$, where $Z_{(1)}$ is the minimum normal random variable, i.e. $Z_{(1)} = \min(Z_1, Z_2, \cdots, Z_n)$ and $Z_i \overset{i.i.d}{\sim} \N(0, 1)$.

Take expectation and apply Lemma \ref{MinimumNormal}, we have

\begin{equation*}
    \begin{aligned}
        \E R_{min} & =\mu_p+\sqrt{\sigma_p^2} \E Z_{(1)} \\
& \approx \mu_p+\sigma_p \Phi^{-1}\left(\frac{1}{n+1}\right)\\
&=\mu_p+\sigma_p\left(-\sqrt{2\log n}+\bigO\left(\frac{\log \log n}{\sqrt{\log n}}\right)\right) \\
& =\mu_p-\sigma_p \sqrt{2\log n}+\bigO\left(\frac{\log \log n}{\sqrt{\log n}}\right),
    \end{aligned}
\end{equation*}

where $\Phi(\cdot)$ is the cdf of standard normal distribution. Hence $R_{min} = \mu_p+\sigma_p Z_{(1)}$ implies $R_{min} = \mu_p-\sigma_p \sqrt{2\log n}+\bigO\left(\frac{\log \log n}{\sqrt{\log n}}\right) + \bigO_p(log^{-1/2}n)$ since, by Lemma \ref{MinimumNormal}, $\Var Z_{(1)} = \bigO(\log^{-1}n)$. 
\end{proof}

\subsubsection{Proof of Theorem \ref{MinimumDistanceOrderEst}}

\begin{proof}
    Similarly to the procedure in Lemma \ref{AsympNormalSample},
    
    \begin{equation*}
    \begin{aligned}
        \E R^2 & =\sum_{i=1}^p Var X_i+\sum_{i=1}^p\left(a_i-\E X_i\right)^2 \\
        &=\sum_{i=1}^p\left(a_i-\E X_i\right)^2+\bigO\left(p S^2\right)\\
        &=\|\E \X - \a\|^2 + \bigO\left(p S^2\right),
    \end{aligned}
\end{equation*}
and

\begin{equation*}
        \Var(R^2)  =\sum_{i=1}^p \left\{\E\left(X_i^{(2)}-a_i\right)^4-\left[\E\left(X_i^{(2)}-a_i\right)^2\right]^2\right\} =\bigO\left(p B+p S^4\right),
\end{equation*}

implying $\E R = \|\E \X - \a\|+\bigO\left(p^{-1/2} S\right)$, and $\Var R = \bigO\left(B^{1/2}+S^2\right)$. Further $R^2 = \E R^2 + \bigO_p\left(B^{1/4}+S\right)$.

By Theorem \ref{MinimumDistance} and notations in Lemma \ref{AsympNormalSample},
\begin{equation*}
    \begin{aligned}
R_{min} =& r_p+\sigma_p \E Z_{(1)}\\
=&\|\E \X - \a\|+\bigO\left(p^{\frac{1}{2}} S\right)-\bigO\left(p^{\frac{1}{2}} (B+S^4)^{\frac{1}{2}} \log^{\frac{1}{2}} n\right)\\
&+ \bigO_p\left((B+S^4)^{\frac{1}{4}} \log^{-\frac{1}{2}} n\right) \\
=&\|\E \X - \a\|-\bigO\left(S^2p^{\frac{1}{2}}\log^{\frac{1}{2}} n \right)+ \bigO_p\left(S \log^{-\frac{1}{2}} n\right) 
\end{aligned}
\end{equation*}

$p^{-1}\|\E \X\|$, $p^{-1}\|\E \X^2\| = \bigO(S^2)$, and $p^{-1}\|\E \X^4\| = \bigO(B)$ are bounded, so it implies $S$ and $B$ are both at constant level, which means
$$
R_{min} = \|\E \X - \a\|-\bigO\left(p^{\frac{1}{2}} \log^{\frac{1}{2}} n \right)+ \bigO_p\left(\log^{-\frac{1}{2}} n\right) 
$$
\end{proof}

\subsubsection{Proof of Theorem \ref{Theorem:OrderBounds}}

\begin{proof}
By Lemma~\ref{lem:asymptotics-kappa-factorial},
$$
\kappa_p^{1/p}\asymp p^{-1/2}.
$$
Therefore
$$
\left(
\frac{2^pv}{\kappa_p}
\right)^{1/p}
=
2v^{1/p}\kappa_p^{-1/p}
\asymp
\sqrt p\,v^{1/p}.
$$
Using the closed-form lower bound from Lemma~\ref{lem:closed-lower-expectation}, we obtain $\E\|\X\| \gtrsim \sqrt p\,v^{1/p}.$ Again by Lemma~\ref{lem:asymptotics-kappa-factorial}, $(p!)^{1/p}\asymp p$. Hence
$$
\frac{(p!v)^{1/p}}{\sqrt p}
\asymp
\sqrt p\,v^{1/p}.
$$
Using the upper bound from Lemma~\ref{lem:closed-upper-expectation}, we have, for some universal constant $c>0$, $\E\|\X\| \le \sqrt p - c\sqrt p\,v^{1/p}$. For the variance, we have
$$
\Var(\|\X\|) \le p - \frac p{p+1}(p!v)^{1/p} -  \left[ \frac p{p+1} \left( \frac{2^pv}{\kappa_p} \right)^{1/p} \right]^2.
$$
By Lemma~\ref{lem:asymptotics-kappa-factorial}, we have
$$
\frac p{p+1}(p!v)^{1/p} \asymp pv^{1/p},
$$
and
$$
\left[ \frac p{p+1} \left( \frac{2^pv}{\kappa_p} \right)^{1/p}\right]^2 \asymp pv^{2/p}.
$$
Therefore
$$
Var\|\X\| \lesssim \bigO(p) - \bigO\!\left(pv^{1/p}\right) - \bigO\!\left(pv^{2/p}\right).
$$

\end{proof}

\subsubsection{Proof of Theorem \ref{VEGAConsistency}}
\begin{proof}

By definition, $\C_i(\K) = \{\c \in \K \text{ } | \text{ } \forall j \neq i,  \|\c - \X_i\| \leq \|\c - \X_j\|\}.$

Denote its volume as $V_i=\operatorname{vol}_{[rn]}\left(\C_i(\K)\right)$. Because $\a \notin \partial \K$, samples are around a spherical shell \citep{Li2021}, similarly as shown in Lemma \ref{AsympNormalSample}.

Under high dimension, asymptotic distribution of $\frac{[rn] V_i}{|K|}$ only relies on dimensions $p$, so denote $[rn] V_i/|\K| \rightarrow T_p$, where $\E T_p = 1$ and $Var (T_p) \leqslant 6 \left(\frac{3}{4}\right)^{p / 2}$. See \cite{VoronoiCellMeasure} for a detailed proof. Thus we can write the expression as $[rn] V_i/|\K| = T_p+\lowero_p(1)$. Further,

\begin{equation*}
\begin{aligned}
V_i & =\frac{|\K|}{[rn]} T_p+\lowero_p\left(\frac{|\K|}{[rn]}\right) \\
& =\frac{|\K|}{[rn]}+\frac{|\K|}{[rn]}\left(T_p-1\right)+\lowero_p\left(\frac{|\K|}{[rn]}\right) \\
& =\frac{|\K|}{[rn]}+\lowero_p\left(\frac{|\K|}{[rn]}\left(\frac{3}{4}\right)^{p / 4}\right)+\lowero_p\left(\frac{|\K|}{[rn]}\right).
\end{aligned}
\end{equation*}

For any small $\xi$, the maximum of the volume $M$ satisfies
\begin{equation*}
\begin{aligned}
\max_{1\leq i \leq [rn]} V_i & =\frac{|\K|}{[rn]}+\frac{|\K|}{[rn]} \lowero_p\left(\left(\frac{3}{4}\right)^{p / 4} \sqrt{\log [rn]}\right)+\lowero_p\left(\frac{|\K|}{[rn]} \sqrt{\log [rn]}\right) \\
& =\frac{|\K|}{[rn]}+\lowero_p\left(n^{-1-\xi}\right)
\end{aligned}
\end{equation*}

In VEGA, we generate a elitism region after each generation where the augmented children are located. Considering the elite rate $r$, each elite region has an augmentation size of $rn$ with only one elite individual. Hence, we could use the volume upper bound as order bound $v = |\K|/rn+\lowero_p\left(n^{-1-\xi}\right)$ in Theorem \ref{Theorem:OrderBounds}.

As the initial space is a high dimensional cube, which is convex, so elitism region is also convex. What's more by induction and the assumption that fitness $f(\X_1) > f(\X_2)$ almost surely for candidates closer to true value, $\a$ is always in the elitism region which is still a convex set. Notice the fact that the $\a$ is always in one of the Voronoi cells.

In the Voronoi cell which contain the true maximizer $\a$, there are $[rn]$ augmented samples in crossover step. Hence, we further consider this specific Voronoi cell $\K^{(2)}$. Applying the conditions at the first generation, the minimum distance $R^{(2)}_{min}$ between new samples to $\a$ satisfies the format of $R^{{2}}_{min} = \mu_p+\sigma_p Z_{(1)}$. In high dimensional scenarios, we could approximately take the Voronoi cells as hyper-cones, so the order in Theorem \ref{Theorem:OrderBounds} reduces to

$$
\sqrt p\,v^{1/p} \asymp \E\|\X\|,
$$
and
$$
Var\|\X\| \lesssim \bigO(p) - \bigO\!\left(pv^{1/p}\right) - \bigO\!\left(pv^{2/p}\right),
$$
where $v = \bigO(n^{-1}) + \bigO_p(n^{-1-\xi})$. Hence we have
\begin{equation*}
    \begin{aligned}
        R_{min}^{(2)} \lesssim & \bigO(p^{\frac{1}{2}}n^{-\frac{(1+\xi)}{p}} -p^{\frac{1}{2}}g(m)\log^{\frac{1}{2}} n)\\
        &+\bigO_p(g^{\frac{1}{2}}(m) \log^{-\frac{1}{2}} n).
    \end{aligned}
\end{equation*}

The final minimum distance, considering the elites, is $R_{min} = \min\{R_{min}^{(2)},R_{min}^{(1)}\}$. Since the samples are generated independently,
\begin{equation*}
    \begin{aligned}
        R_{min} \lesssim & \bigO(p^{\frac{1}{2}}n^{-\frac{(1+\xi)}{p}}-p^{\frac{1}{2}}g(m)\log^{\frac{1}{2}} (n + rn)\\
        &+\bigO_p(g^{\frac{1}{2}}(m) \log^{-\frac{1}{2}} (n + rn)).
    \end{aligned}
\end{equation*}

As generation goes to $m$, it further becomes, by induction,
\begin{equation*}
    \begin{aligned}
        R_{min}^{(m)} \lesssim & \bigO(p^{\frac{1}{2}}n^{-\frac{(1+\xi)}{p}}-p^{\frac{1}{2}}g(m)\log^{\frac{1}{2}} (n + mrn)\\
        &+\bigO_p(g^{\frac{1}{2}}(m) \log^{-\frac{1}{2}} (n + mrn)).
    \end{aligned}
\end{equation*}

\end{proof}

\clearpage
\section{Pseudo Codes}

\begin{algorithm}[htbp]
\caption{\textbf{GA Framework}}
\begin{algorithmic}[1]
\Procedure{GeneticAlgorithm}{$\texttt{island}$}
    \State \textbf{Set} Population size, Island number, Elite ratio and tolerance = $tol$
    \State \textbf{Set} max\_iter = 100, mutation rate for Controls
    \State \Call{Initialize}{}
    \For{$g = 1$ to $\texttt{max\_iter}$}
        \If{\texttt{island} = \textbf{false}}
            \State $\texttt{best\_before} \gets$ best fitness in \texttt{generation}
            \State $\texttt{generation} \gets$ \Call{Evolve}{\texttt{generation}}
            \State $\texttt{best\_after} \gets$ best fitness in \texttt{generation}
        \Else
            \State $\texttt{best\_before} \gets \min$ best fitness across islands
            \If{$g \bmod \texttt{island\_crossover} = 0$}
                \State $\texttt{pool} \gets$ flatten all islands into one population
                \State $\texttt{pool} \gets$ \Call{Evolve}{\texttt{pool}}
                \State Shuffle \texttt{pool} and redistribute evenly across islands
            \Else
                \State Apply \Call{Evolve}{} independently within each island
            \EndIf
            \State $\texttt{best\_after} \gets \min$ best fitness across islands
        \EndIf
        \State $\Delta \gets (\texttt{best\_before} - \texttt{best\_after}) / \texttt{best\_before}$
        \If{$\Delta < tol$}
            \State \textbf{break}
        \EndIf
    \EndFor
\EndProcedure
\end{algorithmic}
\label{AlgorithmEvolution}
\end{algorithm}

\begin{algorithm}[htbp]
\caption{Control 0: Evolution Operator}
\begin{algorithmic}[1]
\Procedure{Evolve}{\texttt{batch}}
    \State $m \gets |\texttt{batch}|$
    \State Initialize empty population $\texttt{batch}'$
    \For{$i = 1$ to $m$}
        \State Select two parents uniformly at random from \texttt{batch}
        \State Generate offspring via single-point crossover
        \State Apply additive Gaussian mutation to each gene with fixed probability
        \State Clip offspring to $[0,1]$
        \State Append offspring to $\texttt{batch}'$
    \EndFor
    \State \Return \texttt{batch}' sorted by fitness
\EndProcedure
\end{algorithmic}
\label{Code:Control0}
\end{algorithm}

\begin{algorithm}[htbp]
\caption{Control 1: Evolution Operator}
\begin{algorithmic}[1]
\Procedure{Evolve}{\texttt{batch}}
    \State Let $m \gets |\texttt{batch}|$
    \State Let \texttt{elites} $\gets$ top $\lfloor rm \rfloor$ individuals from \texttt{batch}
    \State Compute selection probabilities over \texttt{batch} proportional to reciprocal fitness
    \State Initialize empty list \texttt{children}
    \For{$i = 1$ to $m - |\texttt{elites}|$}
        \State Sample two parents from \texttt{batch} using the fitness\-based probabilities
        \State Generate offspring via single\-point crossover
        \State Apply mutation to each gene with fixed probability (scaled by parent difference with a minimum step size)
        \State Clip offspring to $[0,1]$
        \State Append offspring to \texttt{children}
    \EndFor
    \State \Return (\texttt{elites} + \texttt{children}) sorted by fitness
\EndProcedure
\end{algorithmic}
\label{Code:Control1}
\end{algorithm}

\begin{algorithm}[htbp]
\caption{Voronoi Partition Search}\label{VPS}
\begin{algorithmic}[1]
\Require Fitness function $\ell(\Z, \btheta ; \X)$, Max search area $\bmS$
\State \textbf{Initialize} Population = $n$, MaxIter = $m$
\State \textbf{Generate} $n$ multidimensional uniform random vector $\Z$ on $\bmS$
\For{$i$ {\color{blue} in} 1:$m$}
\For{$j$ {\color{blue} in} 1:$n$}
\State \textbf{Derive} $\widehat{\btheta}_j = \underset{\btheta \in \bTheta}{\argmax} \quad \ell(\Z_j, \btheta ; \X)$ %{\color{red} e.g., Newton's type or Nelder Mead simplex method}
\State \textbf{Encode} $\Z_j$ to Chromosome($j$)
\EndFor
\State \textbf{Sort} Chromosome descendingly by $\ell(\Z_j, \widehat{\btheta}_j ; \X)$
\State \textbf{Keep} r(\%) with top Fitness
\State By each kept $\Z_j$'s, \textbf{Construct} Voronoi Partitioned area as new $\bmS_j$
$$
\bmS_j = \bmS_{1, j} \times \bmS_{2, j} \times \cdots \times \bmS_{p, j}
$$
\For{$k$ {\color{blue} in} 1:$p$}
\State \textbf{Generate} $\frac{1}{r}$ uniform random number within $\bmS_j$
\State \textbf{Encode} as Chromosomes
\EndFor
\State \textbf{Crossover}, \textbf{Mutation} within each partition

\State \textbf{Amalgamate} all Chromosome together and \textbf{Decode}
\State \textbf{Start} next loop with each $\bmS_j$ as $S$ \footnotemark
\EndFor
\State \textbf{Sort} Chromosome descendingly by $\ell(\Z_i, \widehat{\btheta}_i ; \X)$ \footnotemark
\State \textbf{Return} $\Z_i, \widehat{\btheta}_i$ corresponding to best Chromosome

\end{algorithmic}
\end{algorithm}

\footnotetext[1]{The boundaries of $\bmS_j$ could be complicated after several generations, a possible remedy could be replacing the exact diagram by the circumsphere or circumcube \cite{DAS2021311}.}
\footnotetext[2]{Top candidates (elites) could be kept for monotonicity}

\begin{algorithm}[htbp]
\caption{\textbf{VEGA: Voronoi-Augmented Evolution Operator}}
\begin{algorithmic}
\Procedure{Evolve}{\texttt{batch}}
    \State Let $m \gets |\texttt{batch}|$
    \State Let \texttt{elites} $\gets$ top $\lfloor r m \rfloor$ individuals from \texttt{batch}
    \State Construct rectangular Voronoi bounds $[\texttt{low}_i,\texttt{high}_i]$ around each elite
    \State Initialize empty list \texttt{augmented}
    \For{each elite $i$}
        \State Sample a small set of candidate points uniformly from the hyper\-rectangle $[\texttt{low}_i,\texttt{high}_i]$
        \If{any candidate has finite fitness}
            \State Add the first finite\-fitness candidate to \texttt{augmented}
        \Else
            \State Add the elite itself to \texttt{augmented}
        \EndIf
    \EndFor
    \State \texttt{pool} $\gets$ \texttt{elites} $+$ \texttt{augmented}
    \State Compute selection probabilities over \texttt{pool} proportional to reciprocal fitness
    \State Initialize empty list \texttt{children}
    \For{$i = 1$ to $m - |\texttt{elites}|$}
        \State Sample two parents from \texttt{pool} using the fitness-based probabilities
        \State Generate offspring via single-point crossover
        \State Apply mutation to each gene with fixed probability (scaled by parent difference with a minimum step size)
        \State Clip offspring to $[0,1]$
        \State Append offspring to \texttt{children}
    \EndFor
    \State \Return (\texttt{elites} + \texttt{children}) sorted by fitness
\EndProcedure
\end{algorithmic}
\label{Code:VEGA}
\end{algorithm}

\clearpage
\section{Additional Figures}

\begin{figure}[htbp]
    \centering
    \includegraphics[width=1\linewidth]{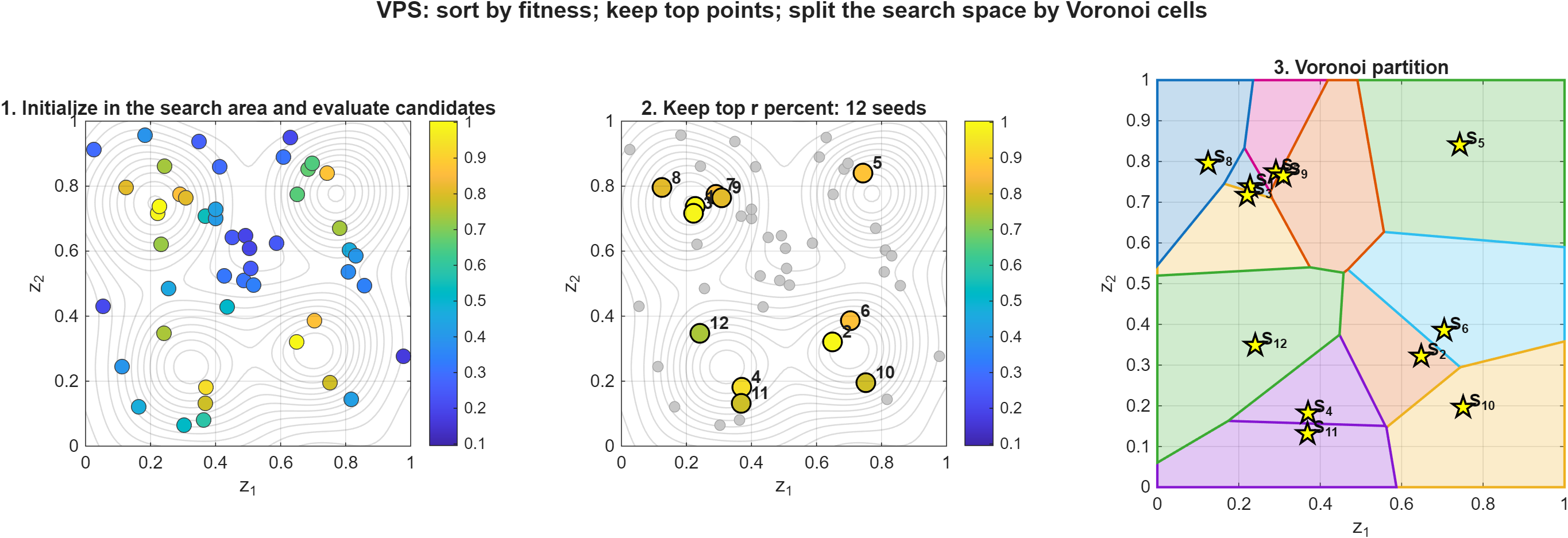}
    \caption{VPS initialization and partitioning: candidates are uniformly sampled and evaluated in the search space, the top r(\%) high-fitness points are retained as seeds, and these seeds define Voronoi cells that partition the space into region-specific search domains.}
    \label{fig:VPS01}
\end{figure}

\begin{figure}[htbp]
    \centering
    \includegraphics[width=1\linewidth]{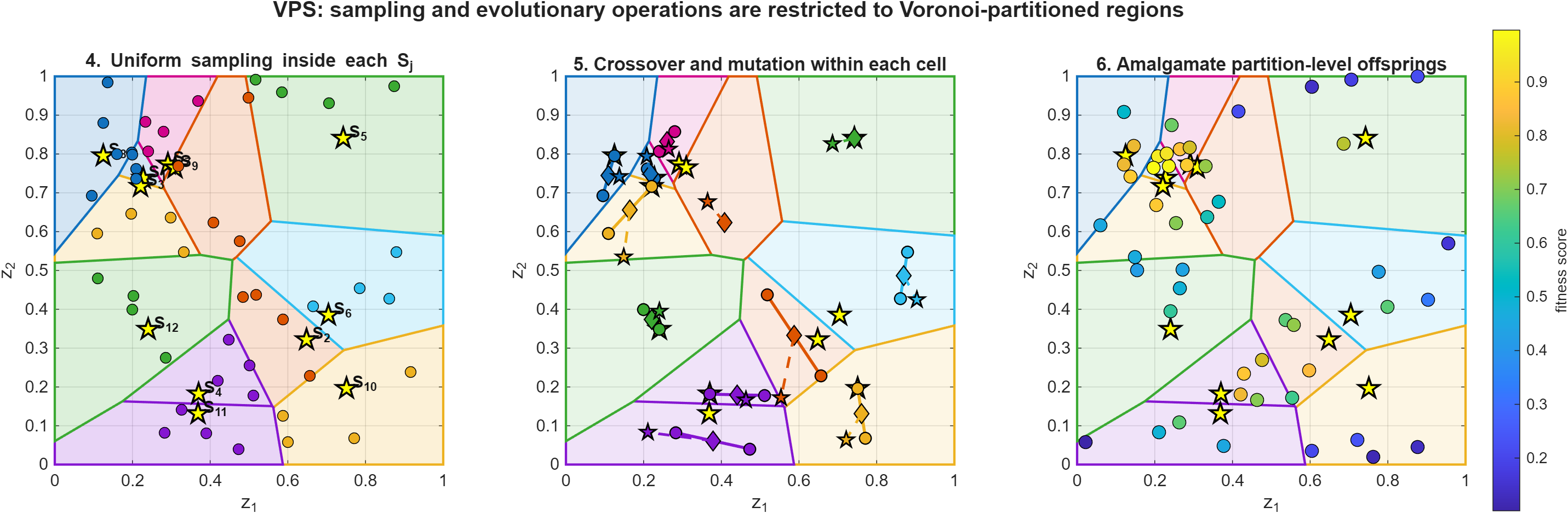}
    \caption{VPS region-restricted evolution: within each Voronoi cell, new candidates are uniformly sampled and evolved by crossover and mutation using only cell-local individuals. Offspring from all cells are then amalgamated into the next population, preserving partition-guided exploration.}
    \label{fig:VPS02}
\end{figure}

\section{Some Potential Applications}

\subsection{Quantile Regression}
Considering $n$ observations with $K$ groups, suppose there're $\n_k$ observations in the $k$-th group, while $\sum_{k = 1}^K n_k \geq n$, so there could be overlappings. For the $i$-th data point, we observe the response $y_i$, a set of the index of groups $s_i = \{k| \text{the } i \text{-th data point is in the } k \text{-th group}\} \subseteq \{1,\ldots,K\}$, i.e., the group membership is known, and covariate vector $\X_i$. We model the quantile regression by minimizing
$$L(\tau, \bbeta) = \sum_{i=1}^n\sum_{k\in s_i} \left[f(\tau_k)  (y_i - \X_i\bbeta_k)^+ + g(\tau_k)  (-y_i + \X_i\bbeta_k)^+\right],$$
where $(x)^+ = max\{x, 0\}$ and (differentiable) functions $f$ and $g$. We denote $\tau = (\tau_1, \tau_2, \cdots, \tau_K)$ and $\bbeta = (\bbeta_1, \bbeta_2, \cdots, \bbeta_k)$ as the quantile and the regression parameters for the $k$-th group respectively..

We call this method quantile regression because it gives an inner perspective of how to model the data quantile by $\tau$. Although we usually given $\tau$ at a level of 0.5, it can be taken as a parameter, i.e., for some specific dataset, 75-th quantile model could be better than a median model for robustness and better prediction. Our goal is to get

$$
(\widehat{\tau}, \widehat{\bbeta}) = \argmin_{\tau, \bbeta} L_\omega(\tau, \bbeta).
$$

The total weighted loss can be easily derived by adding weight to  $L_\omega(\tau, \bbeta) = \sum_{i=1}^n\sum_{k=1}^K \omega_k L_{k, i}$, where $\omega_k$ is given weight of the $k$-th group and $\sum_{k = 1}^K \omega_k = 1$, and solve %\eqref{weightedLoss}

\begin{equation*}
    (\widehat{\tau}, \widehat{\bbeta}) = \argmin_{\tau, \bbeta} L_\omega(\tau, \bbeta)
    %\label{weightedLoss}
\end{equation*}
for the parameter estimate. It is obvious that $\partial L / \partial \tau$ exists with closed form but $\partial L / \partial \bbeta$ doesn't exist due to non-differentiability, so $\tau$ plays as $\btheta$, and $\bbeta$ plays as $\Z$. 

As an illustrative example, let's consider the case when $K = 1$ with $\omega_K = \omega_1 = 1$, i.e., only one group with unknown quantile parameter $\tau_1:=\tau$, i.e. the obejctive function reduces to

\begin{equation*}
L(\tau, \bbeta) = \sum_{i=1}^n\left[f(\tau)  (y_i - \X_i\bbeta)^+ + g(\tau)  (-y_i + \X_i\bbeta)^+\right],
\end{equation*}
for some (differentiable) functions $f$ and $g$. For example, setting $f(\tau)= e^\tau/(1+e^\tau)$ and $g(\tau)=\Phi(\tau)$, where $\Phi(x)$ is the cdf of standard normal distribution. We aim to find a quantile that fits the data best without assuming the underlying noise structure for robustness. Note that we are not going to consider the sub-gradient of $\bbeta$. Moreover, the above problem could also be generalized to the joint quantile regression model \citep{LeeNIPS2016}, which models multiple quartiles simultaneously. Another generalization is modeling the joint conditional quartile of multivariate response \citep{Petrella2019}.

\subsection{Continuous Change Point Detection}

A more challenging problem is change point model with continuous expectation. Here is the model:

$$
Y_{i}(t) =  \X_i(t)\bbeta_{t} + \epsilon_i,
$$
where $\bbeta_{t}=\bbeta_k$ when $\tau_{k}\leq t \leq \tau_{k+1}$, $\epsilon_i \overset{iid}{\sim}N(0,\sigma^2)$ with $-\infty:=\tau_1< \tau_{2}<\cdots<\tau_{K}:=\infty$ and $K$ unknown. Since we have sampling over time, we have to impose additional conditions allowing change point at continuous scale. The nature choice is continuous expectation at change points, i.e.,

$$\X_i(\tau_{k+1})\bbeta_k = \X_i(\tau_{k+1})\bbeta_{k+1},$$
for all $i$ and $k$. Indeed, the parameter estimation given $\tau$'s would still require constrained (numerical) optimization techniques. The choice of $K$ would also required some information criterion, such as the Bayesian information criterion (BIC) and Minimum Description Length (MDL), see \cite{ZhangIC2023} for details about information criteria.

% {\color{red} Jake: Check if the formulation is correct.}

\end{appendices}

%%===========================================================================================%%
%% If you are submitting to one of the Nature Portfolio journals, using the eJP submission   %%
%% system, please include the references within the manuscript file itself. You may do this  %%
%% by copying the reference list from your .bbl file, paste it into the main manuscript .tex %%
%% file, and delete the associated \verb+\bibliography+ commands.                            %%
%%===========================================================================================%%

\bibliography{References}% common bib file
%% if required, the content of .bbl file can be included here once bbl is generated
%%\input sn-article.bbl

\end{document}